\newtheorem{thm}{Theorem}
\newtheorem{conj}[thm]{Conjecture}
\theoremstyle{remark}
\newtheorem{defn}[thm]{Definition}
\newtheorem{lem}[thm]{Lemma}
\newtheorem{rem}[thm]{Remark}
\newtheorem{cor}[thm]{Corollary} 
\newtheorem{conv}[thm]{Convention}
\newcommand{\deltabar}{\delta\hspace*{-0.2em}\bar{}\hspace*{0.2em}}
\DeclareMathOperator{\cl }{cl}
\DeclareMathOperator{\setint }{int}
\DeclareMathOperator{\A}{Area}
\renewcommand{\S}{S_{\rm gen}}
\newcommand{\hmax}{H_{\rm max}}
\newcommand{\hmin}{H_{\rm min}}
\newcommand{\emax}{e_{\rm max}}
\newcommand{\emin}{e_{\rm min}}
\newcommand{\smax}{\Sigma_{\rm max}}
\newcommand{\smin}{\Sigma_{\rm min}'}
\newcommand{\hmingen}{H_{\rm min,gen}}
\newcommand{\hmg}{H_{\rm max,gen}}
\newcommand{\thmax}{\Theta_{\rm max}}
\author{Raphael Bousso and Elisa Tabor}
\affiliation{Center for Theoretical Physics and Department of Physics,\\
University of California, Berkeley, California 94720, U.S.A. 
} 
\emailAdd{bousso@berkeley.edu}
\emailAdd{etabor@berkeley.edu}
\title{Discrete Max-Focusing}
\abstract{%
%
%
The Quantum Focusing Conjecture (QFC) lies at the foundation of holo\-graphy and semiclassical gravity. The QFC implies the Bousso bound and the Quantum Null Energy Condition (QNEC). The QFC also ensures the consistency of the quantum extremal surface prescription and bulk reconstruction in AdS/CFT. 

However, the central object in the QFC --- the expansion of lightrays --- is not defined at points where geodesics enter or leave a null congruence. Moreover, the expansion admits three inequivalent quantum extensions in terms of the conditional max, min, and von Neumann entropies. 

Here we formulate a discrete notion of nonexpansion that can be evaluated even at non-smooth points. Moreover, we show that a single conjecture, the discrete \emph{max}-QFC, suffices for deriving the QNEC, the Bousso bound, and key properties of both max and min entanglement wedges. Continuous numerical values need not be assigned, nor are the von Neumann or min-versions of the quantum expansion needed. Both our new notion of nonexpansion, and also the properties of conditional max entropies, are inherently asymmetric and outward directed from the input wedge. Thus the framework we develop here reduces and clarifies the axiomatic structure of semiclassical gravity, eliminating redundancies and fixing ambiguities.

We also derive a new result: the strong subadditivity of the generalized smooth conditional max and min entropies of entanglement wedges.}
\gdef\@fpheader{\mbox{}}
\begin{document}
\maketitle

\section{Introduction}

The Covariant Entropy Bound~\cite{Bousso:1999cb,Bousso:1999xy} and its generalization, the Quantum Focusing Conjecture (QFC)~\cite{Bousso:2015mna,Shahbazi-Moghaddam:2022hbw}, capture a universal relation between quantum information and spacetime geometry. The QFC is critical to our understanding of how spacetime emerges in the Anti-de Sitter/Conformal Field Theory (AdS/CFT) correspondence~\cite{Maldacena:1997re}. The QFC underlies vital properties of entanglement wedges~\cite{Ryu:2006bv,Ryu:2006ef,Hubeny:2007xt,Wall:2012uf,Faulkner:2013ana,Engelhardt:2014gca,Bousso:2022hlz,Bousso:2023sya} such as nesting, complementarity, and strong subadditivity of their generalized entropy.

The Covariant Entropy Bound and the QFC are not tied to AdS: they apply in arbitrary spacetimes. Entanglement wedges, too, can be generalized~\cite{Bousso:2022hlz,Bousso:2023sya}
. An entanglement wedge can be associated to an arbitrary gravitating input region, analogous to CFT subregions on the boundary of AdS, in any spacetime. The QFC ensures that generalized entanglement wedges satisfy all of the above properties, suggesting that they represent holographically reconstructible regions in arbitrary spacetimes. Thus, the QFC is likely to play a key role in the search for a quantum theory of gravity that can describe our own universe.

The QFC implies~\cite{Bousso:2015mna} a quantum refinement of the Covariant Entropy Bound, which in turn implies~\cite{Flanagan:1999jp} the Generalized Second Law of Thermodynamics for causal horizons~\cite{Bekenstein:1972tm,Jacobson:2003wv}. The QFC also implies a novel result in quantum field theory without gravity, the Quantum Null Energy Condition (QNEC)~\cite{Bousso:2015mna}. The QNEC was later proven (much more laboriously), using only quantum field theory itself~\cite{Bousso:2015wca,Balakrishnan:2017bjg,Ceyhan:2018zfg}. 

Let us briefly review the statement of the QFC. In spacetimes satisfying Einstein's equation with the stress tensor obeying the Null Energy Condition (NEC), classical focusing holds. That is, the geometric expansion $\theta$ of a congruence of null geodesics cannot increase as a function of affine parameter along each geodesic: $\theta'\leq 0$. However, the NEC is violated in valid states of relativistic field theories such as the standard model, so classical focusing is false. 

The QFC is a semiclassical extension of classical focusing which does appear to hold in all known examples. The QFC arises from classical focusing by replacing the area of surfaces with a quantum-corrected area, the generalized entropy. Given a wedge $a$ (the domain of dependence of a spatial region), the generalized entropy is given by $\S(a)=\A(a)/(4G)+S_{\rm vN}(a)+\ldots$, where $\A(a)$ is the area of the edge of $a$ and $S_{\rm vN}$ is the von Neumann entropy of the quantum fields restricted to $a$. One can consider the behavior of $\S$ under outward deformations of the wedge $a$ in an orthogonal future null direction. The functional derivative of $\S$ defines a future quantum expansion, $\Theta_{\rm vN}^+(a,p)$ at every point $p$ on the edge of the wedge. The past quantum expansion is defined analogously. The QFC is the statement that the quantum expansion cannot increase; schematically, $(\Theta_{\rm vN}^\pm)'\leq 0$. 

In this paper, we will revisit the QFC and modify its axiomatic formulation. We are motivated by the following considerations.

1. In recent years, our understanding of holography and spacetime emergence has been substantially refined. The QES prescription for the entanglement wedge of CFT subregions is valid only for certain ``compressible'' quantum states in the bulk~\cite{Akers:2020pmf}. In general states, there are two distinct wedges: a max entanglement wedge that can be fully reconstructed from the boundary with high fidelity; and a (larger) min entanglement wedge, the smallest region such that no operators in its exterior can be reconstructed with low fidelity. The definitions of these wedges involve notions from \emph{single-shot}\footnote{The task of communicating a long sequence of an independent and identically distributed random variable is called \emph{asymptotic}. In this case, compression can reduce the resources needed per message, which are quantified by the von Neumann entropy. A classical example is English text, with each letter corresponding to a message. Compression algorithms take advantage of the fact that only a small fraction of long letter sequences can actually occur, whereas most combinations have negligible probability. \emph{Single-shot} communication involves sending only one message. In general, this will be less efficient. (For example, sending just one letter of a 26-letter alphabet requires $\log_2 26$ bits, noticeably greater than the number of bits per letter for a long text.) Resources can still be saved by allowing a small classical failure probability $\epsilon$ (e.g., by betting that the letter will not be \emph{z}), or, in the quantum case, by allowing a small distance between sent and received states. This can be quantified by the max entropy.} quantum communication: the (generalized) smooth max and min conditional entropies~\cite{RenWol04a}. The QFC, however, is based on the (generalized) von Neumann entropy, which quantifies the resources needed for \emph{asymptotic} communication tasks. Recently Akers \emph{et al.}~\cite{Akers:2023fqr} have proposed a max  and min-version of the QFC, so that altogether three inequivalent conjectures are extant: max , min-, and von-Neumann-QFC. It would be preferable to have a more economical axiomatic structure that assumes only one QFC. The min-QFC implies the von-Neumann-QFC~\cite{Akers:2023fqr}. However, the max-QFC and the min-QFC are logically independent; neither implies the other. In this paper we will show that the max-QFC suffices; neither a von Neumann QFC nor a min-QFC are needed.

2. Independently of the above development, Shahbazi-Moghaddam has noted~\cite{Shahbazi-Moghaddam:2022hbw} that the original von-Neumann QFC can be somewhat weakened without affecting its usefulness in any known application. Namely, the quantum expansion may well increase; it just cannot \emph{become} positive. This ``Restricted QFC'' was proven to hold in brane-world models, whereas a proof of the original QFC in that setting does not appear to be possible. Although no explicit counterexample to the original QFC is known, we are persuaded by these observations to consider the possibility that only the restricted version holds. Indeed, the version of the QFC we will propose is even more ``bare-bones'' in that it will require only a sign of the expansion, not a numerical value.  

3. The classical expansion of orthogonal null geodesics is not well-defined at all points on the edge of a wedge. The expansion need not be continuous (for example, consider lightrays orthogonal to a flat plane that meets a cylinder segment in Minkowski space), and it is not uniquely defined at such discontinuities. The expansion is also ill-defined at conjugate points or self-intersections of a null congruence, where null geodesics enter or leave the congruence. Such points develop generically even along a null congruence that is initially smooth. These shortcomings are not fatal. But they have required us to add somewhat \emph{ad-hoc} rules~\cite{Bousso:2023sya} when we apply the QFC across such points, as we often must. Finally, the expansion is also not defined on null portions of an achronal codimension 2 surface, which do arise on the edges of wedges and can even arise on the edge of a (generalized) entanglement wedge; see Fig.~\ref{fig:crescent}. (See Sections~\ref{sec:prelim} and \ref{sec:entanglement} for definitions of these terms.) By assigning only a sign to the expansion, rather than a numerical value, we will be able to treat all of these cases satisfactorily through a single definition of the expansion.


\begin{figure}
    \centering
    \includegraphics[width=9cm]{./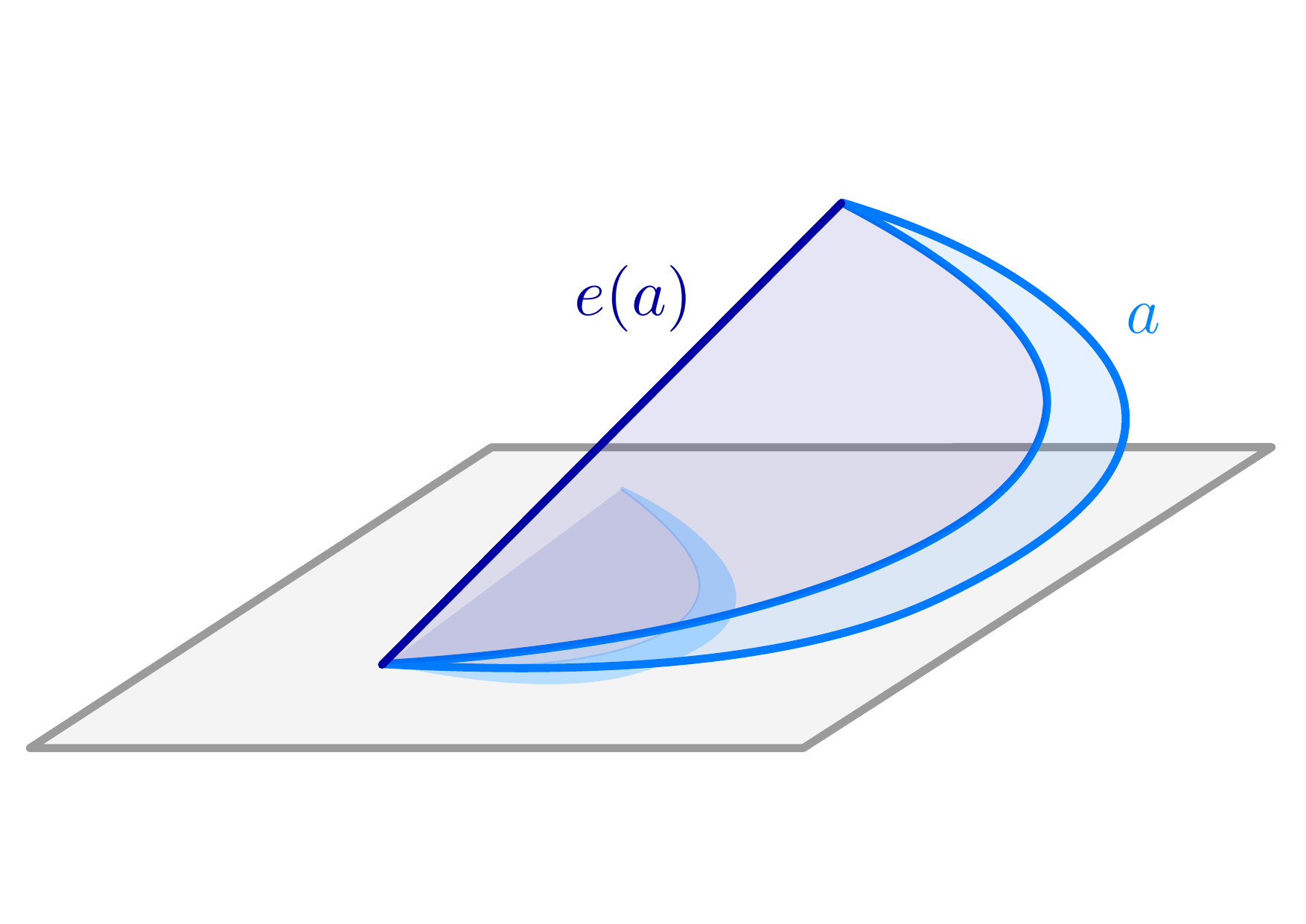}
    \caption{Generalized entanglement wedge $e(a)$ of a ``boosted crescent'' wedge $a$ in 2+1-dimensional Minkowski space. The tips of the crescent are null separated, and the edge of $e(a)$ contains the null geodesic that connects these two points. On this null portion, a numerical expansion is not defined and hence cannot be said to vanish. Thus, it is not correct to say that $e(a)$ is ``extremal'' (in the sense of vanishing expansion) on the edge portion $\eth e(a)\setminus \eth a$. However, our discrete notion of nonexpansion, Def.~\ref{def:fne}, does hold on this edge portion. Moreover, the notion of ``extremal'' can be replaced with the more precise and robust notion of ``throat,'' which also holds here; see Def.~\ref{def:throat} and Theorem~\ref{thm:emaxextremal}.}
    \label{fig:crescent}
\end{figure}

The goal of this paper is to reformulate and simplify the QFC in response to the above developments and shortcomings. To address the first point, we will show that only the max-QFC is needed in theorems. The second and third points motivate us to simplify the notion of expansion. Instead of assigning a numerical value $\Theta^+$ to the expansion, we will introduce a more robust, yet sufficient, qualitative criterion. At each point $p$ on the edge of a wedge $a$, our criterion establishes only whether or not $a$ is \emph{future-nonexpanding} at $p$ (and of course, whether or not $a$ is \emph{past-nonexpanding} at $p$). This criterion can be evaluated at ``corner points,'' at discontinuities of the classical expansion, and on null portions of surfaces, without any additional structure or \emph{ad-hoc} rules. 

Thus our reformulation minimizes the axiomatic foundation of the holographic properties of semiclassical spacetime, while maximizing its reach. An inward max-expansion is no longer defined, nor is an outward min-expansion defined. Only the outward max-expansions are explicitly defined, and not as a numerical quantity, but only through the qualitative notion of outward max-nonexpansion. (Inward min-expansion remains implicitly defined through the outward max-expansions of the complement wedge, but for conceptual clarity only the latter is referred to in this paper. The definition of the generalized min entanglement wedge as the complement of a generalized max entanglement wedge of a suitable complement region will be further developed elsewhere \cite{BoussoKaya}.) 

Our work has two strands that may seem unrelated at first. The first strand is the generalization of the classical notion of nonexpansion to nonsmooth surfaces such as those resulting from intersections and caustics of null congruences. The second strand is the reduction of the types of entropies that appear in definitions to the max conditional entropy only. We would present these results in separate manuscripts, were it not for the fact that the two strands come together in the final result that only \emph{outward} deformations of wedges should be considered. Classical nonexpansion is about what happens next as we \emph{enlarge} a wedge; our definition of ``nonexpanding'' cannot be reproduced by considering inward deformations instead. Similarly, unlike the von Neumann entropy, the conditional max entropy is also intrinsically outward-directed. This is inherent in its mathematical properties: the max and min entropies obey opposite chain rule inequalities that work well with unions and intersections, respectively. It is also inherent in its physical interpretation as measuring the resources needed to reconstruct semiclassical data in a larger region from a more fundamental algebra of a smaller one~\cite{Bousso:2022hlz,Bousso:2023sya}. It is interesting that these seemingly unrelated considerations both give preference to outward, rather than inward, deformations of wedges.

\paragraph{Outline and Summary of Results} In Sec.~\ref{sec:wedges}, we review the definition of a wedge and its relation to causal structure. In Sec.~\ref{sec:hmaxgen}, we introduce the key concept of generalized max conditional entropy $\hmg(b|a)$, a modification of the more familiar notion of the difference between the generalized entropies, $\S(b)-\S(a)$, of two nested wedges $a\subset b$. At leading order in Newton's constant $G$, $\hmg(b|a)$ is given by the difference between the areas of the wedges. At subleading order, the standard quantum mechanical smooth max entropy of $b$ conditioned on $a$ contributes to $\hmg(b|a)$. Up to these orders, $\hmg$ is known to satisfy useful properties such as strong subadditivity and certain chain rules. We review these properties, and we conjecture that they hold for the full quantity.

In Sec.~\ref{sec:discrete}, we define discrete notions of \emph{future-non\-expanding} and \emph{past-non\-expanding}, in terms of the negative sign of the generalized smooth max entropy of outward deformations of a wedge $a$, conditioned on the original $a$. Importantly, these definitions make sense at all points on $\eth a$, whether they lie on null portions, corners, or smooth spacelike portions of $\eth a$. We formulate a discrete, max-version of the Quantum Focusing Conjecture based on these definitions. In Sec.~\ref{sec:imply} we establish some immediate implications of the definitions and conjecture: the fact that null portions of an edge are past- and future-nonexpanding; the persistence of nonexpansion along lightsheets; the Bousso bound; and the Generalized Second Law of thermodynamics. %
In Sec.~\ref{sec:continuous}, we identify circumstances under which the expansion can be assigned a continuous value $\Theta^\pm$, and we prove relations between the discrete and continuous notions of nonexpansion.

The close relation to several well-tested conjectures established in Sec.~\ref{sec:imply} provides initial support for Discrete Max-Focusing. In Sec.~\ref{sec:evidence}, we provide additional evidence. We show that Discrete Max-Focusing implies a number of nontrivial results that can be rigorously proven: the fact that null geodesics can leave, but not enter, the Cauchy horizons of wedges in any Lorentzian spacetime; the classical focusing property of General Relativity in spacetimes satisfying the Null Energy Condition; and finally, the Quantum Null Energy Condition, a proven lower bound on the stress tensor in relativistic quantum field theories.

Therefore Discrete Max-Focusing implies, in the respective limits, all further consequences of the conditions listed in the previous two paragraphs. Penrose's singularity theorem~\cite{Penrose:1964wq} follows from the present work in the regime in which it is formulated (the classical limit, in which the Null Energy Condition emerges from Discrete Max-Focusing). Similarly, Wall's quantum singularity theorem~\cite{Wall:2010jtc} follows in its own regime, the limit where numerical von Neumann quantum expansions exist and agree (at least in sign) with the max-expansions. Other such statements include the area theorem~\cite{Hawking:1971vc}, the Gao-Wald theorem~\cite{Gao:2000ga}, and causal wedge inclusion for von-Neumann entanglement wedges~\cite{Wall:2012uf}. 

An important future goal will be to formulate all important classical and semiclassical gravity theorems directly in the fully general language of Discrete Max-Focusing (rather than recovering them as limits), and to adapt their proofs. In the present work, we initiate this effort. We define and we prove vital properties of wedges and entanglement wedges, as fully general theorems in the language of Discrete Max-Focusing. Elsewhere, a robust singularity theorem~\cite{Bousso:2025xyc} has already been formulated and proven at the full level of generality and minimality of ingredients advocated here. The Discrete Max formulation and proof of other theorems, such as Gao-Wald or causal wedge inclusion, will be pursued in separate works. 

In Sec.~\ref{sec:nonexpandingwedges}, we re-develop some important concepts and results in semiclassical gravity and holography, using only the criterion of Discrete Max-Nonexpansion instead of a numerical quantum expansion, and using Discrete Max-Focusing in place of the original QFC. In Sec.~\ref{sec:antinormal} we introduce the concepts of an antinormal wedge (one that is both past- and future-nonexpanding), and of a normal wedge (one whose complement is antinormal). We prove that the union of antinormal wedges is antinormal. 

In Sec.~\ref{sec:stablymarginal} we define marginal wedges. Traditionally these correspond to wedges in which at least one quantum expansion vanishes, so that the edge of the wedge lies on an apparent horizon. If both of the quantum expansions vanish, the edge is traditionally called a quantum extremal surface. In our discrete setting, we do not wish to start by imposing numerical conditions like $\Theta^+=0$, since a numerical expansion is not always defined. Nevertheless, an appropriate notion of marginal wedge can be recovered: namely, as an antinormal wedge that admits no future-outward deformation (or no past-outward deformation) that is antinormal. In fact this property implies something more specific and more relevant than just the vanishing of expansions: it defines the notions of stably future- and past-marginal~\cite{Engelhardt:2018kcs}, and of throat (as opposed to bulge or bounce) wedges~\cite{Brown:2019rox,Engelhardt:2023bpv}. We prove that at points where a stably future-marginal wedge has a numerical value of the past max-expansion, it vanishes. Similarly, both numerical expansions of throat wedges vanish where they exist. (Since quantum extremal surfaces obtained by Wall's maximin construction~\cite{Wall:2012uf} are throats~\cite{Brown:2019rox}, this result implies that the standard QES prescription becomes a good approximation to the max- and min- entanglement wedges we define, in those settings where the von Neumann entropy provides a good approximation. This includes in particular the derivation of the Page curve for evaporating black holes from entanglement islands~\cite{Penington:2019npb, Almheiri:2019hni}).

In Sec.~\ref{sec:entanglement}, we define the notion of max- and min entanglement wedges of arbitrary spacetime regions. Again, we use only the notion of Discrete Max-Nonexpansion instead of a numerical quantum expansion; and we use Discrete Max-Focusing in place of the original QFC. We consider the fully dynamical case. Our treatment is completely general, encompassing the ``generalized'' entanglement wedges in arbitrary (non-AdS) spacetimes~\cite{Bousso:2022hlz,Bousso:2023sya}, from which the entanglement wedges of AdS boundary regions emerge as limiting cases~\cite{Bousso:2023sya}.\footnote{Ref.~\cite{Bousso:2023sya} approximated matter entropies as von Neumann entropies; and Ref.~\cite{Akers:2023fqr} only considered entanglement wedges of AdS boundary regions. By considering single-shot entropies and allowing arbitrary spacetimes throughout, we generalize both of these works.} We first reproduce key properties such as entanglement wedge nesting, no-cloning, and the fact that the max entanglement wedge is always contained in the min entanglement wedge. 

Finally we prove two new theorems: when the relevant max and min entanglement wedges of three wedges agree, the entanglement wedges themselves obey strong subadditivity of the \emph{generalized} smooth conditional max and min entropies. These results are highly nontrivial and provide substantial evidence that entanglement wedges represent spacetime regions with definable quantum states. Their proofs require a full arsenal of chain rules, including rules that mix max- and min- entropies~\cite{Vitanov_2013}.

\section{Preliminary Definitions and Conjectures}
\label{sec:prelim}

\subsection{Wedges and Causal Structure}
\label{sec:wedges}

\begin{defn}
  Let $(M,g)$ be a globally hyperbolic manifold with Lorentzian metric. A curve is a continuous map from an interval $C\subset \mathbb{R}$ into $M$. A curve is called timelike (causal) if its tangent vector is everywhere timelike (non-spacelike). 
\end{defn}

\begin{defn}
Let $s\subset M$.\footnote{In this paper, $\subsetneq$ denotes a proper subset; $\subset$ permits equality.} We use $\setint s$, $\cl s$, and $\partial s$ to denote the interior, the closure, and the boundary of $s$ in $M$. 
\end{defn}

\begin{defn}
The chronological future of $s$, $I^+(s)$, is the set of points $q$ such that there exists a future-directed timelike curve that begins at some point $p\in s$ and ends at $q$. $I^-(s)$ is defined analogously. The chronological domain of influence of $s$, $I(s)$, is the set of points that lie on a timelike curve through some point $p\in s$, $I(s)=I^+(s)\cup I^-(s)$.
Similarly, the causal future of $s$, $J^+(s)$, is the set of points that are reached by a future-directed causal curve from a point in $s$. $J^-(s)$ is defined analogously.
\end{defn}

\begin{rem}\label{rem:curves}
    A causal curve need not contain more than one point; hence $J^+(s)\supset s$. A timelike curve contains more than one point, or else its tangent vector would vanish. Hence, $I(s)$ is an open set, $I(s)= I(\cl s)$, and $I(s)$ need not include $s$. For example $I(s)\cap s =\varnothing$ if $s$ is achronal, as would be the case if $s$ is the edge of a wedge, defined below. 
\end{rem}
\begin{defn}\label{def:sc}
The \emph{spacelike complement} of a set $s\subset M$ is defined by
\begin{equation}\label{eq:sc}
    s'\equiv \setint [M\setminus I(s)]~.
\end{equation}
(Thus, $s'$ is necessarily open.)
\end{defn}

\begin{defn}\label{def:covwedge}
A {\em wedge} is a set $a\subset M$ that satisfies $a=a''$.  (This immediately implies that  $a$ is an open set; $a'$ is also a wedge; and the intersection of two wedges $a,b$ is a wedge~\cite{Bousso:2022hlz, Bousso:2023sya}.)
\end{defn}

\begin{defn}\label{def:wedgeunion}
The {\em wedge union} of two wedges $a,b$ is the wedge
\begin{equation}
    a\Cup b\equiv (a'\cap b')'~.
\end{equation}
\end{defn}

\begin{defn}\label{def:edgehor}
The \emph{edge} $\eth a$ and \emph{Cauchy horizons} $H^\pm(a)$ of a wedge $a$ are
defined by 
\begin{align}
    \eth a & \equiv \partial a \setminus I(a)~,\\
    H^+(a) & \equiv \partial a\cap I^+(a)~,\\
    H^-(a) & \equiv \partial a\cap I^-(a)~,\\
    H(a) & \equiv \partial a\cap I(a) = H^+(a)\cup H^-(a)~.
\end{align}
\end{defn}

\begin{rem}
Any wedge $a$ induces a decomposition of the spacetime $M$ into disjoint sets:
\begin{align}\label{eq:decom}
    M & = a \sqcup a' \sqcup \eth a \sqcup I(\eth a) \sqcup H(a) \sqcup H(a') 
    ~.
\end{align}
\end{rem}

\begin{rem}\label{rem-ethprops}
    $\eth a$ is closed in $M$ since $\eth a=M\setminus I(a)\setminus I(a')$. Moreover, $\eth a$ is an immersed $C^0$ submanifold of codimension 2. However, $\eth a$ need not be embedded (for example, in $2$ spatial dimensions, suppose that $a$ is the union of two disjoint wedges whose edges intersect at a point).
\end{rem}
\begin{defn}
    The area of $\eth a$ will be denoted $\A(a)$.
\end{defn}
\begin{defn}\label{def:conformaledge}
    Given a wedge $a$, we distinguish between its edge $\eth a$ in $M$ and its edge $\deltabar a$ as a subset of the conformal completion $\tilde M$~\cite{Wald:1984rg}. If $a$ is asymptotic, the latter set may contain an additional piece, the {\em conformal edge}
    \begin{equation}
        \tilde\eth a\equiv \deltabar a\cap \partial\tilde M~.
    \end{equation}
\end{defn}

\subsection{Generalized Conditional Max Entropy and Its Properties}
\label{sec:hmaxgen}

\begin{conj}\label{conj:hmaxgenexists}
    Let $M$ be a spacetime that satisfies Einstein's equation $G_{ab}=8\pi G\braket{T_{ab}}$ to all orders in $G\hbar$, where $G$ is Newton's constant. For any nested pair of wedges $a\subset b\subset M$, there exists a family, parametrized by $\epsilon$, of \emph{generalized smooth conditional max entropies}, $\hmg^\epsilon(b|a)$, that satisfy
    \begin{equation}\label{eq:hmgexpansion}
        \hmg^\epsilon(b|a) = \left[ \frac{\A(b)-\A(a)}{4G} + \hmax^\epsilon(b|a) + O(G)\right]~,
    \end{equation}
    where $\hmax^\epsilon(b|a)$ is the standard smooth max entropy of the matter field quantum state restricted to $b$ conditioned on $a$~\cite{RenWol04a,Akers:2020pmf,Akers:2023fqr}. 
\end{conj}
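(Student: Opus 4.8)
The plan is to establish Conjecture~\ref{conj:hmaxgenexists} in two stages. Stage one pins down $\hmg^\epsilon(b|a)$ order by order in $G$ and verifies that the expansion~\eqref{eq:hmgexpansion} is self-consistent. Stage two argues that the resulting perturbative series assembles into a UV-finite, diffeomorphism-invariant quantity. The second stage is the crux, and it runs exactly parallel to the (still conjectural) statement that the von Neumann generalized entropy $\S(a)=\A(a)/4G+S_{\rm vN}(a)+\ldots$ is finite order by order once $G$ and the higher-curvature couplings are renormalized; Conjecture~\ref{conj:hmaxgenexists} is the analogous statement with $S_{\rm vN}$ replaced by the smooth conditional max entropy.

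For the leading term I would argue from operational meaning: a conditional max entropy counts the side information needed to recover the larger algebra from the smaller one up to error $\epsilon$, and in a gravitating theory the dominant such resource is geometric --- the area by which the edge grows from $\eth a$ to $\eth b$, in Planck units. Hence any admissible nonperturbative definition, taken as $G\to 0$ with the matter state held fixed, must return $[\A(b)-\A(a)]/4G$; in particular, whenever the matter state on $b$ is pure conditioned on $a$, $\hmg^\epsilon(b|a)$ should equal the area difference exactly at this order, for every $\epsilon$. This reproduces the first term of~\eqref{eq:hmgexpansion}.

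For the $O(G^0)$ term I would separate the area operator from the matter fields. At the order where the background geometry, and hence the areas of $\eth a$ and $\eth b$, do not fluctuate, the only quantum contribution comes from the matter QFT restricted to the two wedges; and the one-shot cost of reproducing its state on $b$ from its state on $a$ is, by definition, the standard smooth max entropy $\hmax^\epsilon(b|a)$, with the smoothing parameter inherited directly from the matter quantity. This is precisely the identification established perturbatively in Refs.~\cite{Akers:2020pmf,Akers:2023fqr}, where it also follows that the subleading-order strong subadditivity and chain rules descend from their quantum-mechanical counterparts combined with the exact additivity of the area term under disjoint and nested edges.

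The main obstacle is the existence claim itself beyond these orders. A proof would, I expect, require an intrinsic nonperturbative construction of $\hmg^\epsilon(b|a)$ --- for instance, a gravitational path integral with fixed-area, or more generally fixed-subregion-algebra, boundary conditions on $\eth a$ and $\eth b$, computing a smoothed R\'enyi-type functional that reduces to $\hmg^\epsilon(b|a)$ in the appropriate limit, in the spirit of the replica construction and fixed-area states --- together with a demonstration that the result reproduces~\eqref{eq:hmgexpansion} and is independent of the UV cutoff after renormalization. In AdS/CFT one could instead \emph{define} $\hmg^\epsilon(b|a)$ as the boundary smooth conditional max entropy of the relevant boundary subregions and recover the bulk expansion via entanglement-wedge reconstruction. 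In a generic (non-AdS) spacetime neither route is available, and giving operational meaning to ``the algebra of a gravitating subregion'' at the nonperturbative level --- where even the von Neumann version $\S$ remains conjectural --- is the step I expect to stay genuinely hard. For that reason the all-orders statement is best retained as a conjecture, with the perturbative analysis above as its evidential basis.
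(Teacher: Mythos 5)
The statement you were asked to prove is a \emph{conjecture} in the paper, and the paper supplies no proof of it: it only records, in the surrounding remarks, that the two displayed orders in Eq.~\eqref{eq:hmgexpansion} are fixed by known results (the area term and the standard smooth conditional max entropy of the matter fields, as in Refs.~\cite{Akers:2020pmf,Akers:2023fqr}), that cutoff independence of the sum is conjectured by analogy with the von Neumann generalized entropy, and that a more careful algebraic definition treats the area as an operator. Your proposal is consistent with exactly this stance: you verify the leading and $O(G^0)$ terms from the same sources, you point to the same renormalization/cutoff-independence analogy, and you correctly concede that the all-orders existence claim cannot currently be established and must remain a conjecture. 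So there is no gap relative to the paper --- but be clear in your write-up that what you have produced is evidential support, not a proof; in particular your suggested nonperturbative constructions (fixed-area replica path integrals, or a boundary definition via entanglement-wedge reconstruction in AdS/CFT) are programmatic and are not carried out here or in the paper.
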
    

\begin{rem}
Regulating each term in Eq.~\eqref{eq:hmgexpansion} requires a short-distance cutoff on which both Newton's constant and $\hmax(b|a)$ will depend. Based on suggestive results for the generalized von Neumann entropy (see the Appendix of Ref.~\cite{Bousso:2015mna}), we conjecture that the sum, $\hmg^\epsilon(b|a)$, is independent of the cutoff. 
\end{rem}

\begin{rem}
    A more careful algebraic definition of $\hmg$ allows for fluctuations of the area by treating it as an operator~\cite{Akers:2023fqr}. Then there are quantum states for which the two terms in Eq.~\eqref{eq:hmgexpansion} do not separate neatly. This is irrelevant for the purposes of our work. The only properties of $\hmg$ that we will need are those listed in the remainder of this section.
\end{rem}

\begin{defn}
    The family of generalized smooth min-entropies is defined by
    \begin{equation}\label{eq:minMaxDef}
        \hmingen^\epsilon(b|a) = -\hmg^\epsilon(a'|b')~.
    \end{equation}
\end{defn}

\begin{rem}
    Since the ordinary smooth conditional min entropy satisfies $\hmin^\epsilon(b|a) = -\hmax^\epsilon(a'|b')$, and since $\A(a)=\A(a')$, $\A(b)=\A(b')$, it follows that
    \begin{equation}
        \hmingen^\epsilon(b|a) = \left[ \frac{\A(b)-\A(a)}{4G} + \hmin^\epsilon(b|a) + O(G)\right]~.
    \end{equation}
\end{rem}

\begin{conv}
From here on we shall suppress the superscript $\epsilon$ on $\hmg$, $\hmingen$, $\hmax$, and $\hmin$. Equations below omit correction terms that depend on $\epsilon$. See e.g.\ Ref.~\cite{Vitanov_2013, Tomamichel:2012xyw} for a statement of the chain rules that includes these terms.
\end{conv}

\begin{rem}
    We will now conjecture that certain key properties of the standard smooth conditional entropies hold also for the generalized versions. The following conjectures are manifestly true for the two leading terms displayed in the above equations. They are conjectural only in that they insist that the higher-order terms do not spoil each property. 
\end{rem}

\begin{conj}[Generalized asymptotic equipartition principle]\label{conj:aep}
    Let $M$ be a spacetime satisfying the Einstein equation as in Conjecture~\ref{conj:hmaxgenexists} and let $a\subset b$ be wedges. Then we can consider the generalized max and min entropies of $k$ copies of $b$ conditioned on $k$ copies of $a$, in the $k$-fold replicated spacetime $M^{\otimes k}$ (with $k$-fold replicated quantum state of the matter fields). At the level of the leading terms displayed in Eqs.~\eqref{eq:hmgexpansion} and \eqref{eq:minMaxDef}, the quantum equipartition principle is guaranteed to hold. We conjecture that it holds for the full quantities:
    \begin{equation}
        \lim_{k\to\infty}\frac{1}{k}\hmingen(b^{\otimes k}|a^{\otimes k})_{M^{\otimes k}} = \S(b|a)_M = \lim_{k\to\infty}\frac{1}{k}\hmg(b^{\otimes k}|a^{\otimes k})_{M^{\otimes k}}~.
    \end{equation}
    That is, for large $k$, the max and min conditional generalized entropies in the replicated setup approach $k$ times the von Neumann conditional generalized entropy of a single copy.
\end{conj}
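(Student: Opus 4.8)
The plan is to reduce the conjecture to its two leading terms, where it becomes a combination of an elementary geometric fact and a standard theorem of single-shot quantum information, and then to isolate precisely which statement about the higher-order gravitational corrections remains conjectural.

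First I would make the replica structure explicit. The $k$-fold replicated spacetime $M^{\otimes k}$ consists of $k$ disjoint copies of $M$ carrying the matter state $\rho^{\otimes k}$, so the wedge $a^{\otimes k}$ has edge equal to $k$ disjoint copies of $\eth a$; hence $\A(a^{\otimes k})=k\,\A(a)$ and $\A(b^{\otimes k})=k\,\A(b)$. Inserting this into Conjecture~\ref{conj:hmaxgenexists}, Eq.~\eqref{eq:hmgexpansion}, gives
\begin{equation*}
\frac{1}{k}\,\hmg(b^{\otimes k}|a^{\otimes k})_{M^{\otimes k}}=\frac{\A(b)-\A(a)}{4G}+\frac{1}{k}\,\hmax(b^{\otimes k}|a^{\otimes k})+\frac{1}{k}\,C_k~,
\end{equation*}
where $C_k$ collects all $O(G)$ and higher gravitational corrections of the replicated theory. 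The first term is manifestly independent of $k$ and already equals the leading term of $\S(b|a)_M$.

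Next I would invoke the fully quantum asymptotic equipartition property for smooth conditional entropies: for any fixed $\epsilon\in(0,1)$ and any bipartite matter state, $\tfrac{1}{k}\hmax^\epsilon$ and $\tfrac{1}{k}\hmin^\epsilon$ of the $k$-fold tensor power both converge, with an $\epsilon$-independent limit, to the conditional von Neumann entropy~\cite{Tomamichel:2012xyw}. Applied to $\rho^{\otimes k}$ on $b^{\otimes k}$ conditioned on $a^{\otimes k}$, this yields $\tfrac{1}{k}\hmax(b^{\otimes k}|a^{\otimes k})\to S_{\rm vN}(b)-S_{\rm vN}(a)$, and, via Eq.~\eqref{eq:minMaxDef}, the same limit for $\tfrac{1}{k}\hmingen(b^{\otimes k}|a^{\otimes k})$. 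Thus the first two terms of $\tfrac{1}{k}\hmg(b^{\otimes k}|a^{\otimes k})$ converge to $[\A(b)-\A(a)]/4G+S_{\rm vN}(b)-S_{\rm vN}(a)$, which are exactly the first two terms of $\S(b|a)_M$; the same holds for $\hmingen$. This establishes the statement ``at the level of the leading terms displayed,'' as the conjecture asserts.

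The remaining, genuinely conjectural, step is to control $C_k/k$. One must show that the gravitational corrections to the generalized smooth conditional max (and min) entropy are extensive in the replica number to leading order, $C_k=k\,c+o(k)$, and that $c$ reproduces the $O(G)$ and higher corrections in the expansion of $\S(b|a)_M$. This is expected to mirror the corresponding statements for the generalized von Neumann entropy itself (cutoff independence of the sum, and the replica-trick evaluation of the subleading terms), but I do not see how to establish it without essentially rederiving that structure; this is the main obstacle, and it is why the full statement is presented as a conjecture rather than a theorem. An alternative packaging --- sandwiching $\tfrac{1}{k}\hmg(b^{\otimes k}|a^{\otimes k})$ between monotone quantities that converge, using the conjectured chain rules and the ordering $\hmingen\le\hmg$ reviewed in this section --- would still require the same control over the subleading corrections, so it does not evade the difficulty.
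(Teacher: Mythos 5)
Your proposal is correct and takes essentially the same route as the paper, which presents this statement as a conjecture and justifies only the leading terms: the area term is trivially extensive under replication, and the matter term obeys the standard fully quantum asymptotic equipartition property for smooth conditional max/min entropies. Your isolation of the $O(G)$ corrections as the genuinely conjectural residue is precisely the paper's own stance, so there is nothing missing relative to what the paper establishes.
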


\begin{conj}\label{conj:minLeqMax}
    For any wedges $a\subset b$,
    \begin{equation}
    \hmingen(b|a) \leq \S(b|a) \leq \hmg(b|a)~.
    \end{equation}
    This holds trivially for the area difference, and it holds as a standard property of the quantum conditional entropies~\cite{Vitanov_2013}. We conjecture that it holds for the full quantities.
\end{conj}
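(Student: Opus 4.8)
The plan is to split each of the three quantities into a gravitational (area) part, a matter (entropy) part, and an uncontrolled remainder, verify the chain of inequalities order by order at the two orders that are explicitly controlled, and then isolate the genuinely conjectural content as the statement that the remainder preserves the ordering. Concretely, I would first record the expansions already available in the excerpt: by Conjecture~\ref{conj:hmaxgenexists},
\[
\hmg(b|a) = \frac{\A(b)-\A(a)}{4G} + \hmax(b|a) + O(G)~,
\]
by the remark following Eq.~\eqref{eq:minMaxDef},
\[
\hmingen(b|a) = \frac{\A(b)-\A(a)}{4G} + \hmin(b|a) + O(G)~,
\]
and, applying $\S(a)=\A(a)/(4G)+S_{\rm vN}(a)+\ldots$ to $\S(b|a)=\S(b)-\S(a)$,
\[
\S(b|a) = \frac{\A(b)-\A(a)}{4G} + \big(S_{\rm vN}(b)-S_{\rm vN}(a)\big) + \ldots~.
\]

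The key observation is that all three expansions carry the \emph{same} leading area term $[\A(b)-\A(a)]/(4G)$, so this term cancels in every pairwise difference; the inequality at leading-plus-subleading order is therefore equivalent to the purely information-theoretic sandwich
\[
\hmin(b|a) \leq S_{\rm vN}(b)-S_{\rm vN}(a) \leq \hmax(b|a)~,
\]
i.e.\ the statement that the ordinary conditional von Neumann entropy lies between the smooth conditional min- and max-entropies (for the appropriate smoothing parameter), which is standard; see e.g.\ Ref.~\cite{Vitanov_2013,Tomamichel:2012xyw}. For the area difference alone the bound is an equality, so nothing needs checking there. Combining the three expansions with this sandwich establishes Conjecture~\ref{conj:minLeqMax} up to terms of order $G$, which is exactly the content flagged in the preceding remark as ``manifestly true for the two leading terms.''

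The main obstacle — and the sole conjectural content — is controlling the $O(G)$ and higher remainders, for which no closed form is known. One cannot reduce this to a single monotone functional, since $\hmg$ and $\hmingen$ obey \emph{opposite} chain-rule inequalities (working with unions and intersections respectively); what one really wants is a structural argument that whatever microscopic gravitational dressing promotes $\hmax$, $S_{\rm vN}$, $\hmin$ to their generalized versions acts ``in the same direction'' on all three, so that the strict nesting at subleading order is not overturned. The consistency of the generalized AEP (Conjecture~\ref{conj:aep}), which forces the replicated-and-rescaled generalized max and min entropies to converge to the generalized von Neumann value, is evidence that the ordering survives under the corrections; but an actual proof would require a handle on subleading gravitational corrections to single-shot entropies that is not presently available, which is why the statement is offered as a conjecture rather than a theorem.
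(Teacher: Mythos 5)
Your proposal matches the paper's own justification: since the statement is a conjecture, the paper offers exactly the same two-leading-order argument — the area terms coincide (so that piece holds trivially as an equality), the matter piece reduces to the standard sandwich $\hmin(b|a)\leq S_{\rm vN}(b)-S_{\rm vN}(a)\leq\hmax(b|a)$ of Ref.~\cite{Vitanov_2013}, and the genuinely conjectural content is that the uncontrolled higher-order terms do not spoil the ordering. Your additional remarks on the chain rules and the generalized AEP as supporting evidence are consistent with, though not required by, the paper's presentation.
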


\begin{conj}[Chain rule]\label{conj:chain}
    Let $a\supset b\supset c$. The generalized max and min entropies obey the following inequalities:
    \begin{align}
        \hmg(a|c) &\leq\hmg(a|b)+\hmg(b|c) \label{eq:chainMaxMaxMax}\\
        \hmingen(a|c) &\geq \hmingen(a|b)+\hmingen(b|c) \label{eq:chainMinMinMin}\\
        \hmingen(a|c) &\leq \hmg(a|b) + \hmingen(b|c)\label{eq:chainMinMaxMin}\\
        \hmg(a|c) &\geq \hmg(a|b) + \hmingen(b|c)\label{eq:chainMaxMaxMin}~.
    \end{align}
    These relations again hold trivially for the area terms; and they hold by Ref.~\cite{Vitanov_2013} for the conditional max and min matter entropies. We conjecture that they hold for the full quantities.
\end{conj}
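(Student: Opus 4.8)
The plan is to establish each of \eqref{eq:chainMaxMaxMax}--\eqref{eq:chainMaxMaxMin} order by order in Newton's constant and to isolate the single step that must remain conjectural. Using the expansions \eqref{eq:hmgexpansion} and \eqref{eq:minMaxDef}, write each generalized conditional entropy appearing in the four inequalities as an area difference, plus a matter conditional entropy, plus an $\Order{G}$ remainder. At leading order, every one of the four relations collapses to the telescoping identity $\A(a)-\A(c)=\bigl(\A(a)-\A(b)\bigr)+\bigl(\A(b)-\A(c)\bigr)$ divided by $4G$, so the area contributions obey all four relations \emph{with equality}. Hence it is enough to control the subleading matter term and the higher-order remainder.

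At the subleading order the four relations are exactly the chain rules for ordinary smooth conditional entropies. Decompose the algebra associated with the largest wedge $a$ into the part supported on $c$ together with the two ``new'' factors adjoined in passing from $c$ to $b$ and from $b$ to $a$; then \eqref{eq:chainMaxMaxMax} becomes the standard subadditivity-type chain rule $\hmax(XY|Z)\leq\hmax(X|YZ)+\hmax(Y|Z)$, \eqref{eq:chainMinMinMin} its dual for $\hmin$, and \eqref{eq:chainMinMaxMin}--\eqref{eq:chainMaxMaxMin} the mixed chain rules relating $\hmax$ and $\hmin$. All four, with the $\epsilon$-dependent adjustments of the smoothing parameters that our convention suppresses, are theorems of Ref.~\cite{Vitanov_2013}. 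Adding the area contribution (an equality) to the matter contribution (an inequality of the correct sign) establishes \eqref{eq:chainMaxMaxMax}--\eqref{eq:chainMaxMaxMin} through order $G^0$.

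The remaining step --- that the $\Order{G}$ and higher corrections do not reverse any of the four signs --- is the genuine obstacle, and it is why the statement is a conjecture rather than a theorem. No proof is available because a non-perturbative definition of $\hmg$ valid to all orders in $G\hbar$ is not yet in hand; Conjecture~\ref{conj:hmaxgenexists} only posits that such a family exists and reproduces the displayed expansion. A complete argument would require either a controlled resummation of the gravitational corrections to the single-shot entropies, or an algebraic construction (e.g.\ via the crossed product) of generalized smooth max- and min-entropies for the relevant von Neumann algebras, equipped with a notion of smoothing, in which the chain rules of Ref.~\cite{Vitanov_2013} could be re-derived directly at the full level of generality. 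Absent such a framework, the most one can do --- and what I would do --- is record the chain rules as a conjecture, while noting that they are \emph{forced} at the two orders currently under computational control and are satisfied in every example where $\hmg$ can actually be evaluated.
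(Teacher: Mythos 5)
Your proposal matches the paper's own justification: the statement is a conjecture, supported exactly as you argue, by noting the area terms telescope to an identity and the matter terms satisfy the chain rules of Ref.~\cite{Vitanov_2013}, with only the higher-order corrections left conjectural. Correct, and essentially the same approach as the paper.
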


\begin{conj}[Strong Subadditivity of the Generalized Conditional max Entropy]\label{conj:ssa}
    Let $a\subset b$ be wedges. Then for any wedge $c$ spacelike to $b$,
    \begin{equation}
        \hmg(b\Cup c|a\Cup c)\leq \hmg(b|a)~.
    \end{equation}
\end{conj}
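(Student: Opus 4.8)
The plan is to establish the inequality on each of the two leading terms of the expansion~\eqref{eq:hmgexpansion} and then to appeal to the standing assumption of this section --- that higher-order-in-$G$ corrections do not spoil a property shared by those two terms. First I would record the kinematics: since $a\subset b$ and $c$ is spacelike to $b$, one has $c\subset b'\subset a'$, so $c$ is spacelike to $a$ as well; hence $a\Cup c$ and $b\Cup c$ are defined, and $a\subset a\Cup c\subset b\Cup c$ while $a\subset b$, so all four conditional entropies in the statement are between nested pairs. I will freely use $\eth(x')=\eth x$ and $\A(x')=\A(x)$, which follow from the symmetric form $\eth x=M\setminus I(x)\setminus I(x')$ in Remark~\ref{rem-ethprops}.

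For the \emph{area term} I would show $\A(b\Cup c)-\A(a\Cup c)\le\A(b)-\A(a)$. Writing $a\Cup c=(a'\cap c')'$ turns $\eth(a\Cup c)$ into $\eth(a'\cap c')$, and likewise for $b$, so this is a statement about edges of wedge \emph{intersections}. By the cut-and-paste description of the edge of an intersection of two wedges~\cite{Bousso:2022hlz,Bousso:2023sya}, $\eth(a'\cap c')$ is, up to measure zero, a portion of $\eth a'=\eth a$ together with a portion of $\eth c'=\eth c$, so $\A(a\Cup c)=\A(a)+\A(c)-\lambda_a$ with $\lambda_a\ge 0$ the total area of the ``absorbed'' edge, and similarly $\A(b\Cup c)=\A(b)+\A(c)-\lambda_b$. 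Generically the relevant edges are disjoint, all absorption vanishes, and one gets equality. In general the key geometric claim is $\lambda_b\ge\lambda_a$: any $p\in\eth a\cap\eth c$ lies in $\cl a\subset\cl b$ and in $\cl c\subset\cl(b')$, hence in $\cl b\cap\cl(b')=\eth b$ (a standard wedge identity, equivalent via Remark~\ref{rem-ethprops} to $\cl b=M\setminus I(b')$ and $\cl(b')=M\setminus I(b)$); so every point at which $\eth a$ meets $\eth c$ is also one at which $\eth b$ meets $\eth c$, and $\eth b$ may meet $\eth c$ on a strictly larger set. Subtracting, $\A(b\Cup c)-\A(a\Cup c)=[\A(b)-\A(a)]-(\lambda_b-\lambda_a)\le\A(b)-\A(a)$.

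For the \emph{matter term} I would pass to the type-I/regulated description underlying Conjecture~\ref{conj:hmaxgenexists}. Nesting gives a factorization $\mathcal H_b\cong\mathcal H_a\otimes\mathcal H_d$ of the matter Hilbert space, and spacelikeness gives $\mathcal H_{a\Cup c}\cong\mathcal H_a\otimes\mathcal H_c$ and $\mathcal H_{b\Cup c}\cong\mathcal H_a\otimes\mathcal H_d\otimes\mathcal H_c$ (in the generic case that the constituent regions have disjoint closures; the degenerate touching case is separate and trivial). Then $\hmax(b|a)$ is the conditional max entropy $\hmax(d\,|\,a)$ of the complementary factor, and $\hmax(b\Cup c|a\Cup c)=\hmax(d\,|\,ac)$, so the matter-term inequality reads $\hmax(d\,|\,ac)\le\hmax(d\,|\,a)$: monotonicity of the smooth conditional max entropy under enlarging the conditioning system, a standard one-shot result~\cite{Vitanov_2013,Tomamichel:2012xyw} (with the $\epsilon$-terms suppressed per our Convention). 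Combined with the area term this proves the two-term inequality; the full statement then follows from the standing conjecture that the $O(G)$ terms preserve it.

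The hard part is not the information theory, which is off the shelf, but making the geometric step fully rigorous: the cut-and-paste structure of $\eth(x\cap y)$ and the inequality $\lambda_b\ge\lambda_a$ must be proven at full generality, controlling non-generic configurations in which $\eth a$, $\eth b$, $\eth c$ overlap on positive-codimension sets, accounting for a possible filling-in of a causal shadow between $b$ and $c$ (and checking that it neither reverses the area comparison nor disturbs the Hilbert-space factorizations away from the degenerate case), and allowing for the non-embeddedness of $\eth a$ noted in Remark~\ref{rem-ethprops}. Conceptually the one genuinely open ingredient is the effect of the $O(G)$ corrections, which is the standing conjecture and is not addressed here.
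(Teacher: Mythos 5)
This statement is a Conjecture that the paper itself does not prove: its only supporting argument, Remark~\ref{rem:whyssa}, is exactly your strategy --- establish the inequality separately for the area term and for the smooth conditional max entropy of the matter fields (both of which the paper simply cites to Ref.~\cite{Bousso:2023sya}), and then conjecture that the higher-order-in-$G$ terms do not spoil it. Your proposal therefore matches the paper's approach, with the only difference that you sketch from scratch the geometric area inequality and the data-processing/monotonicity argument for $\hmax$ that the paper imports by citation, and you correctly identify that the $O(G)$ remainder is the genuinely conjectural ingredient.
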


\begin{rem}\label{rem:whyssa}
    Like all the above conjectures, this holds trivially for the area terms, and as a mathematical theorem for the conditional max entropy of the quantum fields~\cite{Bousso:2023sya}:
    \begin{align}
    \A(b\Cup c)-\A(a\Cup c) & \leq \A(b)-\A(a) ~;\\
    \hmax(b\Cup c|a\Cup c) & \leq \hmax(b|a)~.
    \end{align}
    We conjecture that it holds for the full quantity $\hmg$.
\end{rem}

\begin{rem} \label{rem:ssamin}
    Strong subadditivity of the generalized conditional min entropy follows directly from Conjecture~\ref{conj:ssa} and Eq.~\eqref{eq:minMaxDef}.
\end{rem}

\begin{conj}[Discrete Subadditivity of the Generalized Conditional max entropy]\label{conj:dsa}
    Let $a\subset b$ be wedges.
    \begin{enumerate}[i.]
        \item If $\hmg(b|a)\leq 0$, then for any wedge $c$ spacelike to $b$, $\hmg(b\Cup c|a\Cup c)\leq 0$.
        \item If $\hmg(b|a)<0$, then for any wedge $c$ spacelike to $b$, $\hmg(b\Cup c|a\Cup c)<0$.
    \end{enumerate}
\end{conj}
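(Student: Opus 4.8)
The plan is to obtain Conjecture~\ref{conj:dsa} as an immediate corollary of Strong Subadditivity, Conjecture~\ref{conj:ssa}. First I would record the structural fact that wedge union is monotone under inclusion: if $a\subset b$ then $I(a)\subset I(b)$, hence $b'\subset a'$, hence $b'\cap c'\subset a'\cap c'$, and applying the spacelike complement once more gives $a\Cup c\subset b\Cup c$; so $\hmg(b\Cup c|a\Cup c)$ is a legitimate conditional entropy for any wedge $c$ spacelike to $b$. Conjecture~\ref{conj:ssa} then applies verbatim and yields $\hmg(b\Cup c|a\Cup c)\le\hmg(b|a)$. Part (i) follows at once, since $\hmg(b|a)\le 0$ forces $\hmg(b\Cup c|a\Cup c)\le 0$; part (ii) is identical with $\le$ replaced by the strict $<$. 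At the level of logical dependence there is then nothing further to do.

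Because the whole point of this framework is to rest on the weakest and most robust possible inputs, I would also want to see whether Discrete Subadditivity can be justified \emph{without} assuming the full numerical SSA --- much as the ``Restricted QFC'' is genuinely weaker than the original QFC. The natural route is to organize $\hmg$ in powers of $G\hbar$. By Remark~\ref{rem:whyssa}, the two leading terms already obey SSA unconditionally: the geometric inequality $\A(b\Cup c)-\A(a\Cup c)\le\A(b)-\A(a)$ and the quantum-field inequality $\hmax(b\Cup c|a\Cup c)\le\hmax(b|a)$. Hence whenever the leading behaviour of $\hmg(b|a)$ is strictly negative, the same two contributions already make $\hmg(b\Cup c|a\Cup c)$ strictly negative, and all that remains is to argue that the $O(G)$ and higher corrections cannot lift a manifestly negative quantity up through zero --- a statement in the same spirit, and plausibly the same regime of validity, as the focusing conjectures themselves. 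One could also try to extract the sign from the chain rules of Conjecture~\ref{conj:chain} applied to an intermediate wedge between $a\Cup c$ and $b\Cup c$, but I do not expect that to be cleaner than the two routes above.

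The main obstacle is exactly the one shared with Conjecture~\ref{conj:ssa}: there is no independent handle on the higher-order-in-$G\hbar$ pieces of $\hmg$, which is why both statements must remain conjectural rather than theorems. A secondary, more conceptual question is whether Discrete Subadditivity carries content beyond SSA --- i.e.\ whether there is a regime in which the numerical inequality of Conjecture~\ref{conj:ssa} fails but the sign survives --- or whether it is best read simply as the lightweight consequence actually needed downstream, e.g.\ for the proof in Sec.~\ref{sec:antinormal} that a wedge union of antinormal wedges is antinormal, which only ever invokes signs. My inclination would be to present the one-line corollary argument in the main text and to relegate the robustness motivation to a remark.
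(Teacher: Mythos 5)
Your one-line derivation is exactly the paper's own justification: the paper states (in the remark following the conjecture) that Conjecture~\ref{conj:dsa} follows from Conjecture~\ref{conj:ssa}, since $\hmg(b\Cup c|a\Cup c)\le\hmg(b|a)$ immediately transfers the nonpositive (or strictly negative) sign, and your check that $a\Cup c\subset b\Cup c$ is a correct supporting detail. Your additional discussion of whether the sign statement might be justified more weakly than full SSA is a reasonable aside, but the core argument coincides with the paper's.
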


\begin{rem}
    Conjecture~\ref{conj:dsa} follows from Conjecture~\ref{conj:ssa}. Based on Remark~\ref{rem:whyssa}, we find the stronger Conjecture~\ref{conj:ssa} quite plausible. Nevertheless we include the weaker Conjecture~\ref{conj:dsa}, because it suffices for almost all of our results. Only Theorem~\ref{thm:ssa} requires Conjecture~\ref{conj:ssa}. The somewhat weaker Corollary~\ref{cor:dsa} can still be obtained using only Conjecture~\ref{conj:dsa}.
\end{rem}

\section{Discrete Max-Nonexpansion and Discrete Max-Focusing}
\label{sec:nonexp}

In this section, we introduce a discrete definition of ``non-expanding'' that is more robust than the traditional method of defining a numerical (quantum-)expansion and then demanding that it be nonpositive. We use the conditional max entropy throughout as this will be the only kind of non-expansion needed in the following sections. We will omit the words ``discrete'' and ``max'' in most places since no other expansions are defined.

\subsection{Definitions}
\label{sec:discrete}

\begin{defn}[Nonexpansion]\label{def:fne}
A wedge $a$ is said to be \emph{future-nonexpanding} at $p\in\eth a$ if there exists an open set $O$ containing $p$ such that 
    \begin{align}\label{eq:futnonexp}
        \hmg(b|a) \leq 0 \text{~for~all~wedges~}b\supset a\text{~such~that~} 
        \eth b \subset \eth a \cup [H^+(a')\cap O]~.
    \end{align}
We emphasize that no numerical value is assigned to the expansion by the present definition; see Def.~\ref{def:thmax} below. To define \emph{past-nonexpanding}, $H^-$ replaces $H^+$ in Eq.~\eqref{eq:futnonexp}.
\end{defn}

\begin{defn}[Noncontraction]
    A wedge $a$ is \emph{future-noncontracting} if there exists an open set $O\supset\eth a$ such that no proper past-directed outward null deformation of $a$, with compact support within $O$, is future-nonexpanding on all new edge points. That is, no wedge $b\supsetneq a$ with $\eth b\subset \eth a \cup [H^-(a')\cap O]$ is future-nonexpanding on all points in $\eth b\setminus\eth a$. (Heuristically, in sufficiently smooth settings, future-noncontracting implies nonnegativity of the outward future quantum expansion of $a$; moreover, positivity of the quantum expansion implies future-noncontracting.) \emph{Past-noncontracting} is defined analogously.
\end{defn}

\begin{defn}
    Let $\eth a^+$ ($\eth a^-$) denote the set of points $p\in \eth a$ where the wedge $a$ is future- (past-) nonexpanding.
\end{defn}

\begin{lem}\label{lem:ethaplusopen}
    $\eth a^+$ and $\eth a^-$ are open subsets of $\eth a$ in the induced topology of $\eth a$.
\end{lem}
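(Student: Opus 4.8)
The plan is to exploit the quantifier order in Definition~\ref{def:fne}: the open set $O$ is chosen \emph{before} the deformed wedge $b$, so a single $O$ simultaneously certifies future-nonexpansion at every point it contains. Concretely, I would start from an arbitrary $p\in\eth a^+$ and invoke Def.~\ref{def:fne} to obtain an open set $O\subset M$ with $p\in O$ such that $\hmg(b|a)\leq 0$ for \emph{every} wedge $b\supset a$ with $\eth b\subset\eth a\cup[H^+(a')\cap O]$. The candidate neighborhood of $p$ inside $\eth a$ is then $U\equiv\eth a\cap O$, which is by construction open in the subspace topology on $\eth a$ and contains $p$.

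The key step is to show $U\subset\eth a^+$. Let $q\in U$ be arbitrary. Then $O$ is itself an open set containing $q$, and the condition ``$\hmg(b|a)\leq 0$ for all wedges $b\supset a$ with $\eth b\subset\eth a\cup[H^+(a')\cap O]$'' holds verbatim --- it refers only to $a$, to $O$, and to the universally quantified $b$, none of which changed when we passed from $p$ to $q$. This is exactly the requirement in Def.~\ref{def:fne} for $a$ to be future-nonexpanding at $q$, with witness $O$. Hence $q\in\eth a^+$. Since $q$ was arbitrary, $U\subset\eth a^+$, so $\eth a^+$ contains an open (in $\eth a$) neighborhood of each of its points and is therefore open. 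Replacing $H^+(a')$ by $H^-(a')$ throughout gives the identical argument for $\eth a^-$.

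The one point that deserves explicit care --- and the only place a careless reading could go astray --- is precisely this quantifier structure: $O$ must be produced \emph{independently} of $b$. If the definition instead allowed $O$ to depend on the deformation $b$, the shared-witness argument would break and one would need some uniform control over the family of deformed wedges and over $\hmg$; but as Def.~\ref{def:fne} is stated, no such control is needed and there is no genuine analytic obstacle. (One may also note the trivial monotonicity that any open $O'\subset O$ still witnesses the property, since the class of admissible $b$ only shrinks; this is not needed above but confirms the definition is well behaved under localization.)
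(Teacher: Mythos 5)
Your proof is correct and is essentially identical to the paper's: both arguments take the witness $O$ from Def.~\ref{def:fne} at $p$ and observe that, since the nonexpansion condition depends only on $a$, $O$, and the quantified $b$ (not on the point), the same $O$ certifies future-nonexpansion at every $q\in\eth a\cap O$, so $\eth a\cap O\subset\eth a^+$.
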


\begin{proof}
    Let $p\in \eth a^+$. We must show that $\eth a^+$ contains an open subset that contains $p$. Let $O$ be the open set guaranteed to exist by Def.~\ref{def:fne}, and let $q\in \eth a\cap O$. Then $O$ contains $q$ and satisfies Eq.~\eqref{eq:futnonexp}. Hence all points in $\eth a\cap O$ are future-nonexpanding: $\eth a\cap O\subset \eth a^+$.
\end{proof}

\begin{defn}[Lightsheet]
    Let $a$ be a wedge and let $\eth a^+$ be the set of points where $a$ is future-nonexpanding. The future-lightsheet of $a$ is the null hypersurface
    \begin{equation}
        L^+(a)\equiv H^+(a') \cap J^+(\eth a^+)~.
    \end{equation}
    The past light-sheet is defined analogously.
\end{defn}

\vspace{5mm}

\hspace{-8.5mm} 
\begin{minipage}{0.45\textwidth}
    \includegraphics[width=1.2\textwidth]{./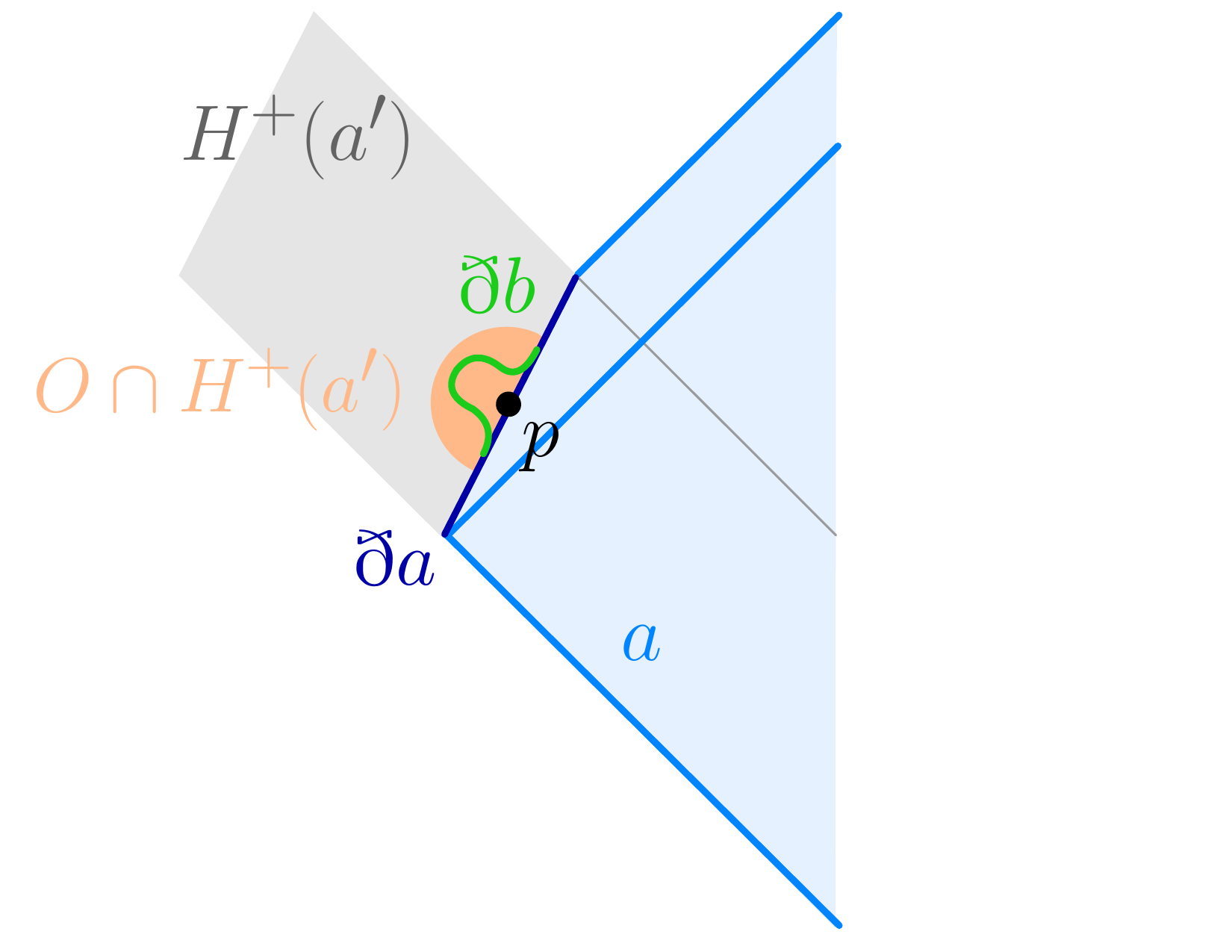}
    \vspace{3mm}
    \captionof{figure}{Illustration of Definition~\ref{def:fne}: a wedge $a$ is future-nonexpanding at $p$ if there exists an open neighborhood $O$ of $p$ such that all future-outward deformations of $\eth a$ within $O$ (green) do not increase the area (classically), or (semiclassically) have nonpositive smooth max-generalized entropy conditioned on $a$.}
    \label{fig:fne}
\end{minipage}
\hspace{9mm}
\begin{minipage}{0.48\textwidth}
    \includegraphics[width=1.5\textwidth]{./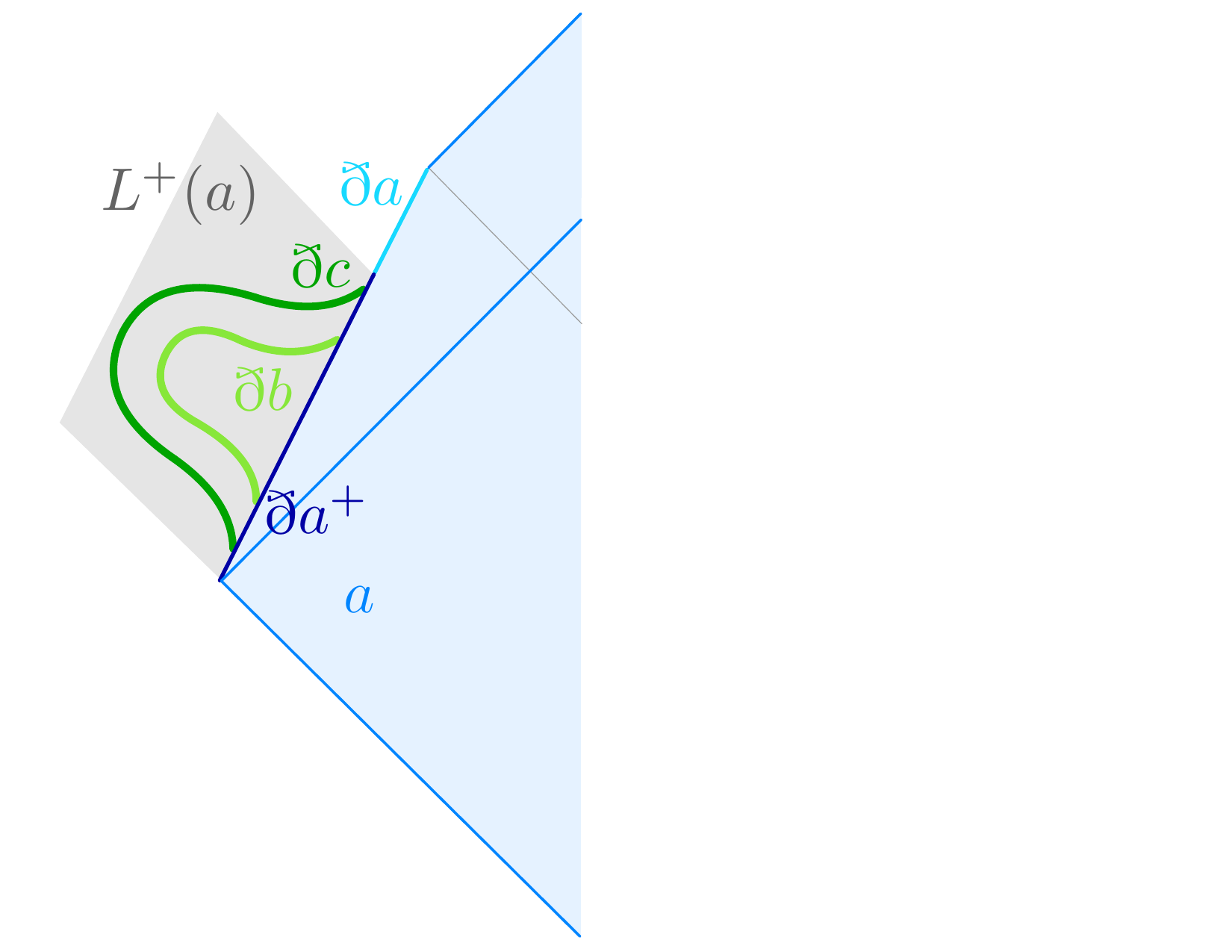}
    \captionof{figure}{Discrete Max-Focusing. If the wedge $a$ is future-nonexpanding on a portion $\eth a^+$ of its wedge $\eth a$, then $a$ can be outward deformed along the future lightsheet emanating from $\eth a^+$ into wedges $b$, $c$ that obey $\hmg(c|b)\leq 0~.$}
\end{minipage}
\vspace{3mm}

\begin{conj}[Discrete Max-Focusing]\label{conj:qfc}
    Let $a$, $b$, and $c$ be wedges such that $a\subset b\subset c$, and suppose that $\eth b \cup \eth c \subset \eth a \cup L^+(a)$. Then 
    \begin{equation}\label{eq:qfc}
        \hmg(c|b)\leq 0~.
    \end{equation}
    The same statement holds in the past direction, i.e., if $\eth b \cup \eth c \subset \eth a \cup L^-(a)$.
\end{conj}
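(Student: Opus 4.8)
Since Conjecture~\ref{conj:qfc} is the central postulate of the paper rather than a corollary of the earlier conjectures, the realistic plan for ``proving'' it is twofold: (a) check that it collapses to established statements in the appropriate limits, and (b) show how this \emph{integrated} statement would follow from a strictly pointwise version of restricted max-focusing together with the chain rules of Sec.~\ref{sec:hmaxgen}. For (a): in the geometric limit $\hmg(c|b)\to[\A(c)-\A(b)]/4G$, and the hypothesis $\eth b\cup\eth c\subset\eth a\cup L^+(a)$ says that $\eth b$ and $\eth c$ differ from $\eth a$ only by an outward push along the null congruence generating $L^+(a)=H^+(a')\cap J^+(\eth a^+)$, whose generators leave $\eth a^+$ with nonpositive classical expansion; classical focusing under the NEC then keeps $\theta\le 0$ along all of $L^+(a)$, so successive cross-sections have nonincreasing area and \eqref{eq:qfc} follows. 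In the von Neumann limit $\hmg\to\S$, \eqref{eq:qfc} becomes the integrated form of the (restricted) von-Neumann QFC, a well-tested conjecture. The genuinely new, single-shot content is the max statement itself, which should be read as the single-shot refinement proposed in Ref.~\cite{Akers:2023fqr}.

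For (b), the idea is to interpolate and exploit subadditivity of the conditional max entropy. Foliating $L^+(a)$ by the affine parameter along its generators, choose a nested chain $b=b_0\subset b_1\subset\cdots\subset b_n=c$ with each $\eth b_i\subset\eth a\cup L^+(a)$ and each step $b_i\to b_{i+1}$ an arbitrarily small outward null deformation supported near a single point of $\eth b_i\cap L^+(a)$. The chain rule~\eqref{eq:chainMaxMaxMax} gives $\hmg(c|b)\le\sum_i\hmg(b_{i+1}|b_i)$, so it suffices that each step obeys $\hmg(b_{i+1}|b_i)\le0$, i.e.\ that each $b_i$ is future-nonexpanding, in the sense of Def.~\ref{def:fne}, at the point of $\eth b_i$ supporting the deformation. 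This is the \emph{persistence of nonexpansion} along the lightsheet: $a$ is future-nonexpanding on $\eth a^+$ by hypothesis, and the content of focusing is that every cross-section reached by flowing along the generators inherits the property. Using Lemma~\ref{lem:ethaplusopen} (together with a locally finite cover, or compactness of the support of the deformation) one would extract neighborhood sizes $O$ valid uniformly for all the steps, so that a pointwise restricted-max-focusing input applies to each.

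The persistence step is the main obstacle, and it is where the difficulties that motivated the paper resurface. Generators of $L^+(a)$ generically form caustics and crossover points at which no numerical expansion is defined, so one cannot literally integrate ``$\Theta'\le 0$''; Def.~\ref{def:fne} is designed precisely so that the relevant deformations still have $\hmg\le 0$ with no \emph{ad hoc} matching rule, but making this rigorous requires controlling how $\hmg$ behaves across the non-smooth locus. Here Conjecture~\ref{conj:dsa} (discrete subadditivity) is the natural tool for combining the deformed and undeformed portions of each $\eth b_i$, and the interpolation must be arranged so that no step straddles $\eth a^+$ and its complement in a way that injects positive $\hmg$. I expect the hardest point to be showing that generalized conditional max entropy does not jump upward at a caustic or self-intersection: classically this reflects the fact that generators only leave, never enter, a lightsheet, and in the present framework the analogous statement appears \emph{downstream} of Conjecture~\ref{conj:qfc} in Sec.~\ref{sec:evidence} --- a sign that a fully self-contained derivation from a purely pointwise axiom is delicate and may require strengthening the local input rather than deriving it.
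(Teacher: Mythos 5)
This statement is a \emph{conjecture} in the paper: it is the central postulate of the framework, and the paper never proves it. What the paper supplies instead is evidence (Sec.~\ref{sec:evidence}): the impossibility of geodesics entering $L^+(a)$ (Wald's Theorem 9.3.11), the reduction to classical focusing via the Raychaudhuri equation in the $\hbar\to 0$ limit, and the fact that it implies the independently proven QNEC in the $G\to 0$ limit. Your part (a) is therefore well aligned with the paper's own stance: the classical-limit check and the identification of the von Neumann limit with the (restricted, integrated) QFC are essentially the consistency arguments the authors give, and you correctly read the single-shot content as the refinement of Ref.~\cite{Akers:2023fqr}.

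Your part (b), however, contains a genuine circularity that you half-acknowledge but should state as disqualifying rather than merely ``delicate.'' The telescoping chain-rule argument $\hmg(c|b)\le\sum_i\hmg(b_{i+1}|b_i)$ requires each intermediate wedge $b_i$ to be future-nonexpanding at the support of the step deformation; but in the paper this ``persistence of nonexpansion'' is Theorem~\ref{thm:persistence}, and it is \emph{derived from} Conjecture~\ref{conj:qfc}, not available prior to it. Definition~\ref{def:fne} by itself gives no control over $\hmg$ for deformations of $b_i$ at points downstream along the congruence --- in particular at caustics and self-intersections, which is exactly where the conjecture has nontrivial content. So the only way to make your interpolation work is to postulate a pointwise axiom strong enough to propagate nonexpansion along the generators, which is just a local restatement of the conjecture itself (this is presumably why the authors elevate the integrated statement to an axiom rather than deriving it). There is also a secondary technical gap: passing from the pointwise neighborhoods $O$ of Def.~\ref{def:fne} to finite outward deformations needs a uniformity or compactness argument that your sketch gestures at but does not supply, and Conjecture~\ref{conj:dsa} alone does not obviously patch deformations across the boundary between $\eth a^+$ and its complement. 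In short: your limit checks match the paper's evidence, but the attempted derivation should be presented as motivation, not as a route to a proof, since no proof exists or is claimed in the paper.
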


\subsection{Immediate Implications}\label{sec:imply}

\begin{lem}
    A wedge $a$ is future- and past-nonexpanding at any point $p$ in the interior of a null portion of its edge.
\end{lem}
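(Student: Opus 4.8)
The plan is to reduce the condition in Definition~\ref{def:fne} to a triviality by showing that, in a small enough neighbourhood of a relative-interior point of a null portion, there are no genuine outward deformations. By the time-reversal symmetry of the statement (a time reflection exchanges future- and past-nonexpansion and carries null portions to null portions) it suffices to prove future-nonexpansion. Fix a relatively open null submanifold $U\subset\eth a$ with $p$ in its relative interior; $U$ is ruled by its null geodesic generators. First I would record the elementary local fact that at a point of $U$ the edge has a \emph{unique} null normal direction, namely the generator direction $\ell$: the tangent plane $T_pU$ is a degenerate codimension-$2$ subspace of $T_pM$, so its orthogonal complement is a degenerate $2$-plane whose only null line is $\ell$. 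Hence the outward-null congruence off $\eth a$ whose generators are tangent to $\ell$ runs \emph{along} $U$ rather than transversally off it, which already forces the corresponding Cauchy horizon to be empty in a small $O\ni p$.

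The heart of the argument is then to show that for such an $O$, every wedge $b\supset a$ with $\eth b\subset\eth a\cup[H^+(a')\cap O]$ in fact satisfies $\eth b=\eth a$, hence $b=a$. I would establish this by tracking where a hypothetical new edge point $q\in\eth b\cap H^+(a')\cap O$ would sit in the decomposition $M=a\sqcup a'\sqcup\eth a\sqcup I(\eth a)\sqcup H(a)\sqcup H(a')$ of Eq.~\eqref{eq:decom}. Since $a\subset b$ gives $b'\subset a'$, and since $q\in H^+(a')\subset I^+(a')$ has a timelike past-predecessor $z\in a'$, one has either $z\in b'$ or $z\in a'\setminus b'\subset\overline b$; combining this with the characterization $\eth b=M\setminus I(b)\setminus I(b')$, with achronality of $\eth b$, and with the null-hypersurface structure near $p$ from the previous paragraph, one is driven to $q\in I(b)\cup I(b')$, contradicting $q\in\eth b$. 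The auxiliary geometric input here — that a Cauchy-horizon generator cannot leave the relative interior of $U$ transversally — follows because a point on such a generator, infinitesimally past its base point, is timelike-related to a point of $U$ displaced slightly along $U$'s generator (two distinct future-directed null vectors sum to a timelike one), which would place a point of $H(a)$ or $H(a')$ into $I(\eth a)$, again contradicting Eq.~\eqref{eq:decom}. With $b=a$ the only admissible wedge, the defining inequality becomes $\hmg(a|a)\le 0$, which holds trivially: the area difference $\A(a)-\A(a)$ vanishes and there is no matter contribution. (In the case $H^+(a')\cap O=\varnothing$ there is no outward deformation at all and the condition holds vacuously.) Running the same argument with $H^-(a')$ in place of $H^+(a')$ yields past-nonexpansion.

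I expect the main obstacle to be precisely this local causal analysis: making rigorous, and choosing $O$ small enough to enforce, the claim that near a relative-interior point of $U$ the edge, its Cauchy horizons, and any candidate deformation surface all fit into a single null hypersurface threaded by $U$'s generators — and, in particular, ruling out that horizon generators or deformation ``bulges'' leak into $O$ from the ends of the null portion (where it abuts spacelike pieces of $\eth a$, and where the two null normal directions collide onto $\ell$) or from distant regions. I would handle this by taking $O$ inside a convex normal neighbourhood, using global hyperbolicity to force past/future-inextendible null generators to exit $\overline O$, and then applying the two-null-vectors trick above to the finitely many generator segments that remain. Once the geometry is controlled, the entropic content needed is minimal — only the leading (area) behaviour of $\hmg$ together with the triviality $\hmg(a|a)=0$.
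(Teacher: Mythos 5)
Your core strategy is the same as the paper's: observe that on a null portion the orthogonal null directions degenerate onto the generator direction tangent to $\eth a$, conclude (via your ``two null vectors'' argument, which is correct and is exactly the paper's point) that no generator of $H^{\pm}(a')$ can have an endpoint on the interior of the null portion, and hence that there is nothing to deform along, so Eq.~\eqref{eq:futnonexp} holds vacuously/trivially with $b=a$. The paper's proof is just this observation, stated in three sentences.

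However, two of the specific claims you lean on in the rigorization are not right. First, ``no generators begin on the null portion'' does \emph{not} force $H^{+}(a')\cap O=\varnothing$ for small $O\ni p$: generators whose past endpoints lie elsewhere on $\eth a$ (near the ends of the null portion, or on adjacent spacelike pieces) can run nearly parallel to the null portion and enter \emph{every} neighborhood of $p$. This happens already in the paper's own example (Fig.~\ref{fig:crescent}): near an interior point of the null segment the wedge and its complement are thin slivers hugging the null plane, and $H^{+}(a')$ approaches the segment quadratically in the transverse direction without any of its generators being footed there. So the correct statement is only the paper's weaker one, and the burden is to show that such nearby horizon points cannot serve as new edge points of an admissible $b$ --- essentially because their generators' footpoints lie outside $O$, so a new edge piece inside $O$ could only attach to the null portion itself, along which there is no horizon piece to deform onto. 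Second, your case analysis for a hypothetical $q\in\eth b\cap H^{+}(a')\cap O$ does not close: from $z\in a'\setminus b'$ you may conclude $z\in\cl\,[I(b)]$ (since $M\setminus b'=\cl\,[I(b)]$), not $z\in\cl\, b$; and the troublesome case $z\in I^{-}(b)$ is not exotic but automatic --- since $q\in\partial b$, any timelike predecessor $z$ of $q$ sufficiently close to $q$ lies in $I^{-}$ of nearby points of $b$ --- and it yields no contradiction with $q\in\eth b$. So the decomposition~\eqref{eq:decom} plus achronality of $\eth b$, as you combine them, does not drive $q$ into $I(b)\cup I(b')$. You correctly flagged this local causal analysis as the hard part; as sketched it has a genuine gap, and the argument should instead be routed through the footpoint structure of $H^{+}(a')$ (no generators ending on $\eth a\cap O$, cf.\ the paper's appeal to the impossibility of deforming along $H^{\pm}(a')$ there, and the no-entry property of Sec.~\ref{sec:noentry}).
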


\begin{proof}
    By assumption there exists an open neighborhood $O$ of $p$ in which the past and future null vectors normal to $\eth a$ are tangent to $\eth a$. None of the null generators of $H^\pm(a')$ start on this portion. Since $a$ cannot be deformed along $H^\pm(a')$ at all, there exists no wedge $b$ for which Eq.~\eqref{eq:futnonexp} could fail.
\end{proof}

\begin{figure}
    \centering
    \hspace{-50mm}\includegraphics[width=7.5cm]{./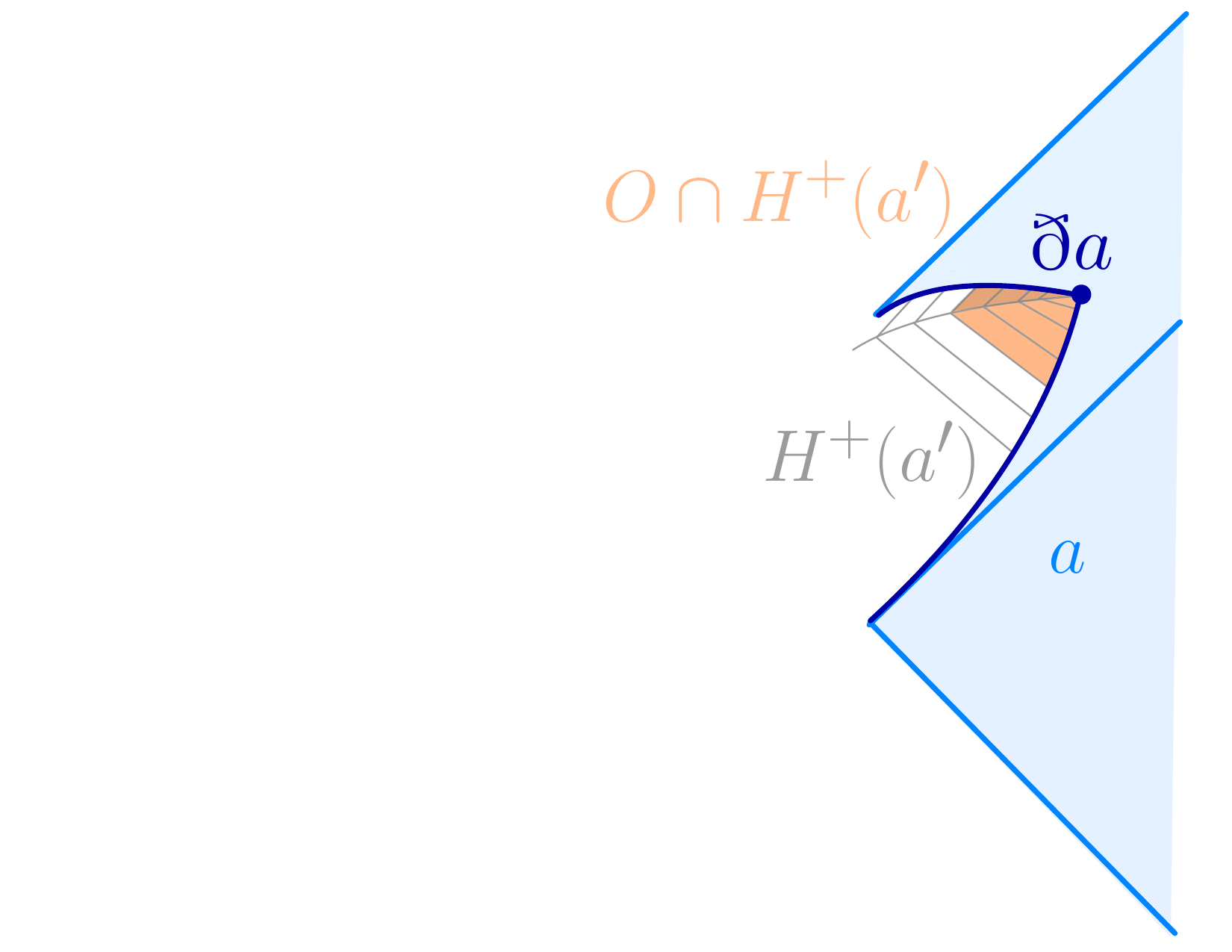}\hspace{-5mm}
    \includegraphics[width=7.5cm]{./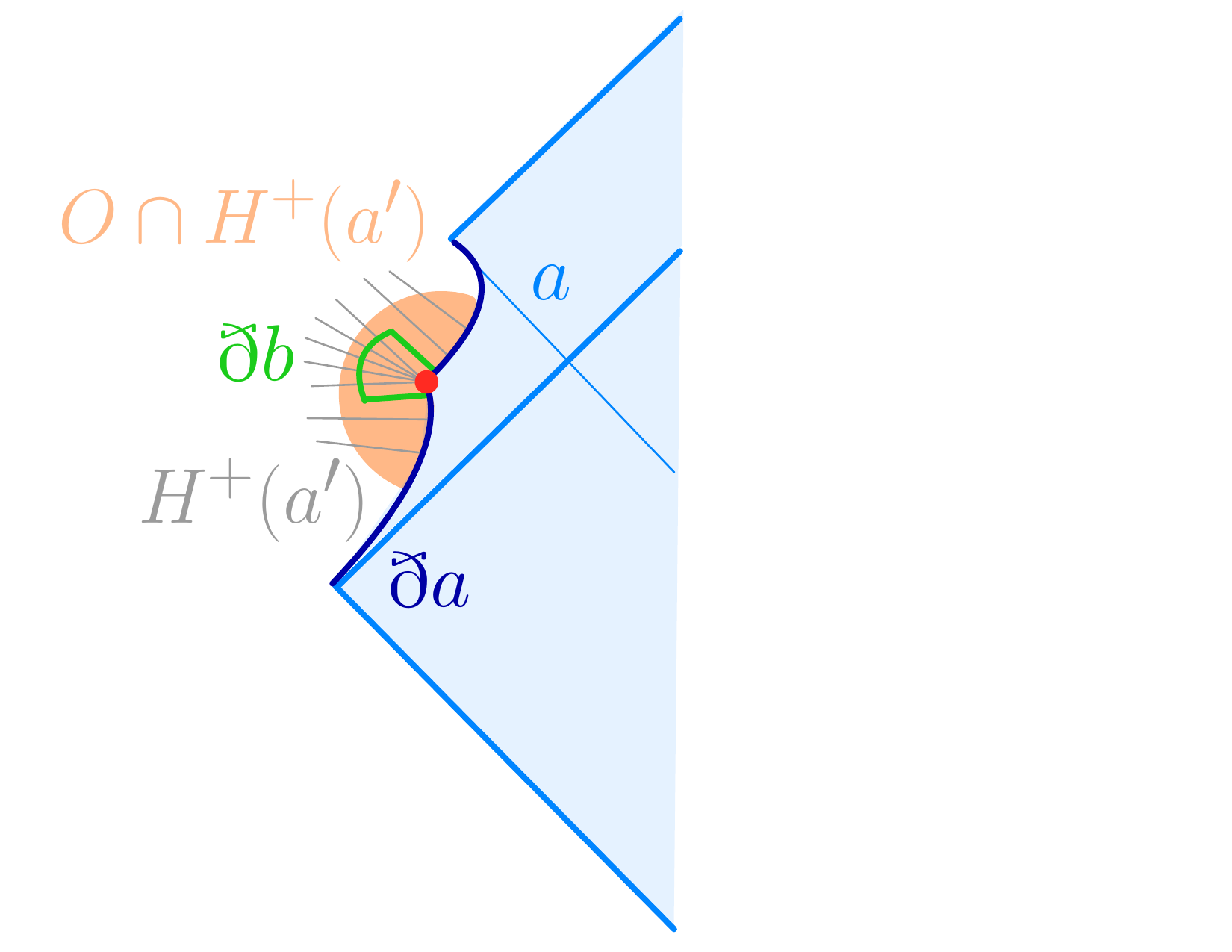}\hspace{-25mm}
    \includegraphics[width=7.5cm]{./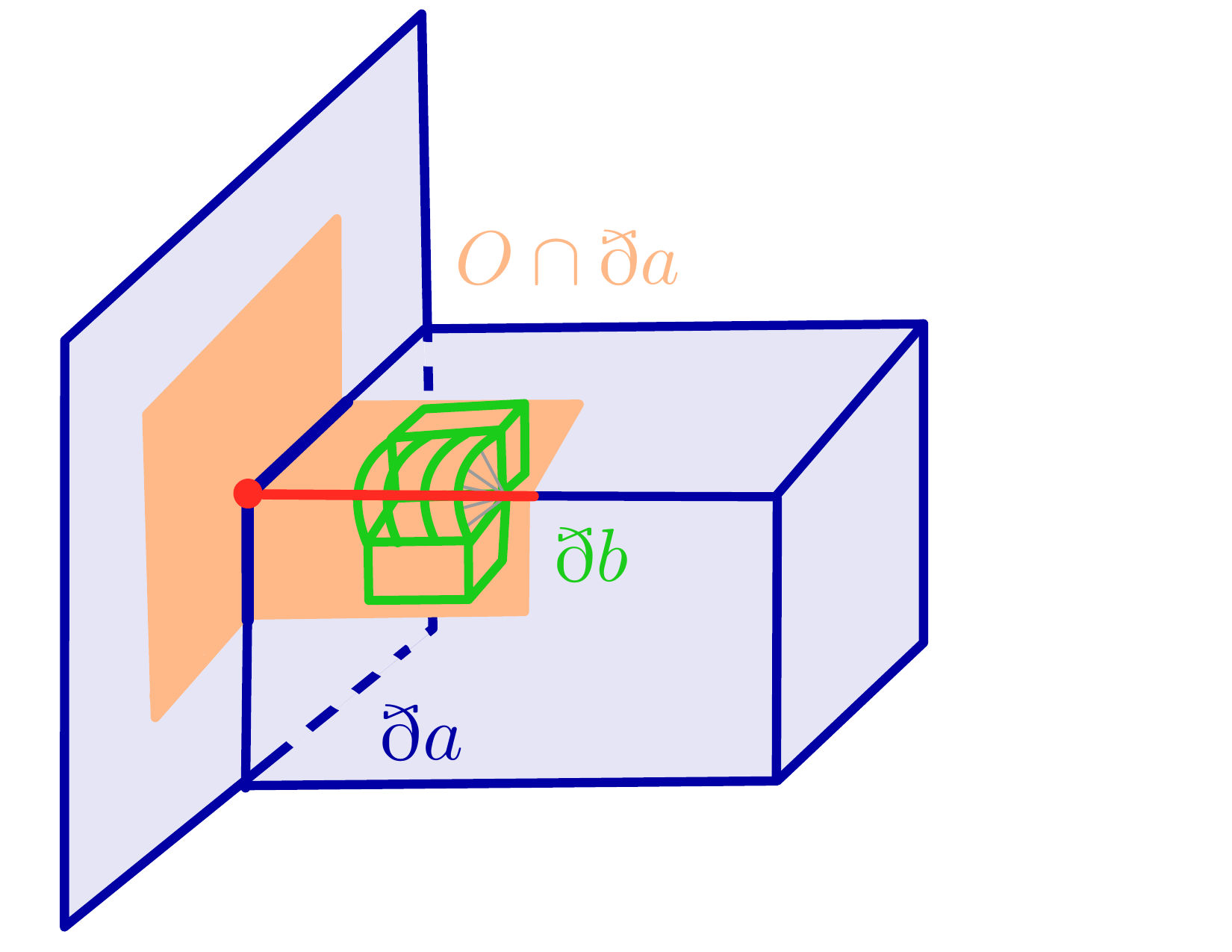}\hspace{-30mm}
    \caption{Applications of Definition~\ref{def:fne}. All wedges shown are understood to be subsets of Minkowski spacetime. \emph{Left:} the wedge $a$ is nonexpanding at all the dark blue points, including at the (concave) corner (blue dot). \emph{Center:} $a$ is future-nonexpanding except at the (convex) corner, where the green deformation $\eth a\to\eth b$ fails to satisfy Eq.~\eqref{eq:futnonexp}. \emph{Right:} the time direction is suppressed, but all three spatial directions are now displayed. The edge $\eth a$ is now shown as a shaded surface. The corner (red dot) is neither convex nor concave. Such a corner fails to be future-nonexpanding, because in any open neighborhood (orange), the green deformation increases the area.}
    \label{fig:fneexamples}
\end{figure}

\begin{thm}[Persistence of Nonexpansion]\label{thm:persistence}
    Let $a$ and $b$ be wedges such that $a\subset b$, and suppose that $\eth b \subset \eth a \cup L^+(a)$. Assuming Discrete Max-Focusing, Conjecture~\ref{conj:qfc}, 
    \begin{equation}
        \eth b \cap L^+(a)\subset \eth b^+~.
    \end{equation}
    The same statement holds in the past direction: $+$ can be replaced by $-$ throughout.
\end{thm}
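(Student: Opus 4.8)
The plan is to take an arbitrary point $p \in \eth b \cap L^+(a)$ and produce an open neighborhood $O$ of $p$ witnessing that $b$ is future-nonexpanding at $p$ in the sense of Def.~\ref{def:fne}. The natural candidate for $O$ is an open set small enough that $\eth b \cap O$ lies entirely on $L^+(a)$ (possible since $p \notin \eth a$ — here we use that $p \in L^+(a) \subset H^+(a')$, which is disjoint from $\eth a$ by the decomposition \eqref{eq:decom}) and small enough that every null generator of $H^+(a')$ passing through $\eth b \cap O$ can be traced back to its starting point in $\eth a^+$. I would then need to show: for any wedge $c \supset b$ with $\eth c \subset \eth b \cup [H^+(b') \cap O]$, one has $\hmg(c|b) \leq 0$.

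The key geometric step is to recognize that such a $c$ is also an outward null deformation of the \emph{original} wedge $a$ along its future lightsheet. Concretely, I would argue that $\eth c \subset \eth a \cup L^+(a)$: the part of $\eth c$ that coincides with $\eth b$ already lies in $\eth a \cup L^+(a)$ by hypothesis, and the new part, lying on $H^+(b') \cap O$, consists of points further along the same null generators of $H^+(a')$ that we already argued pass through $\eth a^+$; hence these points lie on $J^+(\eth a^+) \cap H^+(a') = L^+(a)$. (This requires checking that $H^+(b')$ and $H^+(a')$ share generators locally near $p$ — i.e.\ that an outward deformation that stays on $H^+(a')$ up to $\eth b$ and then continues along $H^+(b')$ is really just continuing along the same lightsheet; this is where the choice of $O$ must be made carefully, shrinking it so that $O \cap H^+(b') \subset H^+(a')$.) Once $\eth c \subset \eth a \cup L^+(a)$ is established, applying Discrete Max-Focusing (Conjecture~\ref{conj:qfc}) with the triple $a \subset b \subset c$ — whose hypothesis $\eth b \cup \eth c \subset \eth a \cup L^+(a)$ is exactly what we have — yields $\hmg(c|b) \leq 0$ immediately. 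This verifies Eq.~\eqref{eq:futnonexp} for $b$ at $p$, so $p \in \eth b^+$. The past-directed statement follows by replacing $H^+, L^+$ with $H^-, L^-$ throughout.

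The main obstacle I anticipate is the local null-congruence bookkeeping in the middle step: showing that the new edge points of $c$ genuinely lie on $L^+(a)$ rather than on some generator of $H^+(b')$ that is \emph{not} a continuation of an $H^+(a')$ generator emanating from $\eth a^+$. Generators can enter or leave $H^+(a')$ (indeed this is one of the motivating pathologies discussed in the introduction), so one must choose $O$ small enough to exclude caustics, crossover points, and generator endpoints between $\eth a$ and $p$. The cleanest way to handle this is probably to invoke that $\eth a^+$ is open in $\eth a$ (Lemma~\ref{lem:ethaplusopen}) together with standard properties of Cauchy horizons: past-inextendible null generators of $H^+(a')$ have past endpoints on $\eth a$, and near a point $p$ strictly to the future of $\eth a^+$ along such a generator, a neighborhood of $p$ on $H^+(a')$ is swept out by generators with past endpoints in a neighborhood of the corresponding past endpoint, which lies in the open set $\eth a^+$. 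I would isolate this as the technical heart of the argument and keep the entropic part — the single invocation of Conjecture~\ref{conj:qfc} — as a one-line conclusion.
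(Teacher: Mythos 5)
Your proposal follows essentially the same route as the paper's proof: pick an open set $O$ whose intersection with $\eth b$ lies in $\eth b\cap L^+(a)$ (the paper obtains this from the openness of $\eth a^+$, Lemma~\ref{lem:ethaplusopen}), observe that $H^+(b')\subset H^+(a')$ so that any admissible deformation $c$ satisfies $\eth c\subset \eth a\cup L^+(a)$, and then apply Discrete Max-Focusing (Conjecture~\ref{conj:qfc}) to the triple $a\subset b\subset c$. The only difference is in the bookkeeping of the middle step, which the paper dispatches in one line by asserting $H^+(b')\subset H^+(a')$ (implicitly relying on the no-entry property of Sec.~\ref{sec:noentry}), whereas you shrink $O$ locally and trace generators back to $\eth a^+$ --- a more explicit treatment of the same step, not a different approach.
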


\begin{proof}
    Let $p\in \eth b \cap L^+(a)$. To meet Def.~\ref{def:fne}, we must show that there exists an open set $O$ containing $p$ such that
    \begin{equation}\label{eq:cbproof}
        \hmg(c|b) \leq 0 \text{~for~all~wedges~}c\supset b\text{~such~that~} 
        \eth c \subset \eth b \cup [H^+(b')\cap O]~.
    \end{equation}

    By Lemma~\ref{lem:ethaplusopen}, $\eth a\setminus \eth a^+$ is closed in the induced topology of $\eth a$. Hence $\eth b\cap L^+(a) = \eth b \setminus (\eth a\setminus \eth a^+)$ is open in the induced topology of $\eth b$. Therefore, there exists a set $O$ that is open in $M$ such that $\eth b \cap O = \eth b\cap L^+(a)$.

    Let $c\supset b$ such that $\eth c \subset \eth b\cup [H^+(b')\cap O]$. Since $H^+(b')\subset H^+(a')$, we have $\eth c \subset \eth a \cup L^+(a)$ and hence the assumptions of Discrete Max-Focusing, Conjecture~\ref{conj:qfc}, are satisfied. The Conjecture then implies Eq.~\eqref{eq:cbproof}.
\end{proof}

\begin{thm}[Bousso Bound]
    Let $a\subset c$ such that $\eth c\subset\eth a\cup L^+(a)$. That is, $c$ is a future-outward deformation of $a$ along orthogonal lightrays originating on the future-nonexpanding portion of its edge $\eth a$. Then
    \begin{equation}
        \hmg(c|a)\leq 0~.
    \end{equation}
    The same statement also holds if $\eth c\subset\eth a\cup L^-(a)$.
\end{thm}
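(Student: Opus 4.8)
The plan is to obtain the Bousso bound as an immediate corollary of Discrete Max-Focusing, exploiting the fact that the intermediate wedge in Conjecture~\ref{conj:qfc} is allowed to coincide with the base wedge. First I would apply Conjecture~\ref{conj:qfc} to the triple $(a,b,c)$ with the choice $b\equiv a$. Since the symbol $\subset$ permits equality, the chain $a\subset a\subset c$ is legitimate, and the hypothesis on edges becomes
\begin{equation}
    \eth b\cup\eth c=\eth a\cup\eth c\subset\eth a\cup L^+(a)~,
\end{equation}
which holds verbatim by the assumption $\eth c\subset\eth a\cup L^+(a)$. Discrete Max-Focusing then delivers $\hmg(c|a)=\hmg(c|b)\leq 0$, which is the claim. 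The past-directed statement follows by the same argument with every $+$ replaced by $-$, i.e.\ using $L^-(a)$ and $H^-$ in the underlying definitions.

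Should one wish to avoid invoking the degenerate choice $b=a$ directly, I would instead interpolate: when $c\neq a$, pick a proper intermediate wedge $b$ with $a\subsetneq b\subsetneq c$ and $\eth b\subset\eth a\cup L^+(a)$ (a partial deformation along the lightsheet), apply the chain rule Eq.~\eqref{eq:chainMaxMaxMax} in the form
\begin{equation}
    \hmg(c|a)\leq\hmg(c|b)+\hmg(b|a)~,
\end{equation}
and bound $\hmg(c|b)\leq 0$ by Discrete Max-Focusing applied to $(a,b,c)$. For $\hmg(b|a)$ I would use Persistence of Nonexpansion, Theorem~\ref{thm:persistence}, which guarantees that the points of $\eth b$ lying on $L^+(a)$ are in $\eth b^+$, so $b$ is itself the base of a lightsheet that continues $L^+(a)$; iterating the deformation in finitely many (or infinitesimal) steps reduces $\hmg(b|a)\leq 0$ to the same statement for deformations of ever-smaller size. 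This route is more long-winded and, at each step, still rests on a base-equals-intermediate application of Conjecture~\ref{conj:qfc}, so it is not logically cheaper; I mention it only as a consistency check.

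I do not anticipate a real obstacle. All of the analytic content has been pushed upstream into Conjecture~\ref{conj:qfc} and, for the interpolation variant, Theorem~\ref{thm:persistence}. The single point that warrants a careful look is purely definitional: one must confirm that the edge set $\eth a\cup L^+(a)$ named in the theorem's hypothesis is literally the set appearing in the hypothesis of Conjecture~\ref{conj:qfc} when the middle wedge is $a$ --- in particular that the restriction of $L^+(a)=H^+(a')\cap J^+(\eth a^+)$ to geodesics emanating only from the future-nonexpanding portion $\eth a^+$ is already built into both statements. Inspecting the two displays shows this is automatic, so the proof is essentially a single line.
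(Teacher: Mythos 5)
Your main argument --- setting $b=a$ in Conjecture~\ref{conj:qfc} and checking that the edge hypothesis reduces to the theorem's assumption --- is exactly the paper's own one-line proof. The interpolation variant is unnecessary but harmless; no issues.
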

\begin{proof}
    The result is a special case of Discrete Max-Focusing, Conjecture~\ref{conj:qfc}, obtained by setting $b=a$ in Eq.~\eqref{eq:qfc}.
\end{proof}

\begin{defn}[Future-Causal Horizon]
    Let ${\cal C}$ be the boundary of the past of a subset $s$ of the conformal boundary of an arbitrary spacetime $M$. (Note that $s$ can be, but need not be, a single point.) If there exists a Cauchy slice $\Sigma$ of $M$ such that any wedge $a$ with $\eth a\subset {\cal C}\cap J^+(\Sigma)$ is past-nonexpanding, then $\cal C$ is called a \emph{future-causal horizon}.
\end{defn}
\begin{rem}
    Let ${\cal C}$ be the boundary of the past of a subset of conformal infinity of an asymptotically flat spacetime $M$. We expect that the above property will then hold, so that ${\cal C}$ is a future-causal horizon. We use the above definition because it is more general and because it suffices for excluding ``horizons'' defined in terms of points on a singularity such as that inside a Schwarzschild black hole. 
\end{rem}
\begin{thm}[Generalized Second Law]
    Let $\cal C$ be a future-causal horizon. Let $b\subset c$ be wedges whose edges both lie on $\cal C$ with $\eth c\subset L^-(b)$. Then
    \begin{equation}
        \hmg(c|b)\leq 0~.
    \end{equation}
\end{thm}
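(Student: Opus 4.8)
The plan is to recognize this statement as the past-directed Bousso Bound applied with the smaller wedge taken to be $b$, the future-causal-horizon hypothesis being used only to guarantee that $b$ is past-nonexpanding at \emph{every} point of its edge, so that the past lightsheet $L^-(b)$ is the full past Cauchy horizon $H^-(b')$ and the deformation to $c$ genuinely runs along $\mathcal{C}$.

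First I would unpack the defining property of $\mathcal{C}$: there is a Cauchy slice $\Sigma$ such that every wedge whose edge lies in $\mathcal{C}\cap J^+(\Sigma)$ is past-nonexpanding in the sense of Def.~\ref{def:fne}. Applying this to the wedge $b$, whose edge $\eth b\subset\mathcal{C}$ is a ``late'' cut of the horizon lying in $\mathcal{C}\cap J^+(\Sigma)$, gives $\eth b=\eth b^-$; that is, $b$ is past-nonexpanding at every point of $\eth b$.

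Next I would identify the past lightsheet. Every past-directed generator of the past Cauchy horizon $H^-(b')$ emanates from $\eth b$, so $H^-(b')\subset J^-(\eth b)=J^-(\eth b^-)$, and therefore $L^-(b)=H^-(b')\cap J^-(\eth b^-)=H^-(b')$. Since $\mathcal{C}$ is null with $\eth b$ a cut of it, this portion of $H^-(b')$ lies on $\mathcal{C}$, so the hypothesis $\eth c\subset L^-(b)$ says precisely that $c\supset b$ is obtained by pushing $\eth b$ to the past along the horizon; in particular $\eth c\subset\eth b\cup L^-(b)$. Now apply the Bousso Bound in the past direction with its smaller wedge taken to be $b$: from $b\subset c$ and $\eth c\subset\eth b\cup L^-(b)$ it follows that $\hmg(c|b)\leq 0$, which is the claim. (Equivalently, this is the $b=a$ special case of Discrete Max-Focusing, Conjecture~\ref{conj:qfc}, run to the past, exactly as the Bousso Bound itself is.)

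Once the hypotheses have been matched to those of the Bousso Bound the inequality is immediate; the only place that needs genuine care — and hence the main obstacle to a fully rigorous write-up — is the causal bookkeeping around $\Sigma$. One must check that $\eth b$ really does land in the portion $\mathcal{C}\cap J^+(\Sigma)$ of the horizon on which past-nonexpansion is granted (noting that it is harmless for $\eth c$, which lies to the past of $\eth b$, to fall outside $J^+(\Sigma)$, since only $b$ itself must be past-nonexpanding), together with the standard fact that every generator of $H^-(b')$ emanates from $\eth b$ so that $L^-(b)$ is genuinely all of $H^-(b')$. Absent the future-causal-horizon hypothesis the stated implication would still hold by the Bousso Bound, but would be vacuous because $L^-(b)$ could be empty; the hypothesis is what makes the scenario nontrivial.
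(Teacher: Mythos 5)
Your fallback observation is what actually carries the inequality, and it is sound: since $L^-(b)$ is by definition built only over $\eth b^-$, the literal hypothesis $\eth c\subset L^-(b)$ already puts $\eth c\subset \eth b\cup L^-(b)$, and the past-directed Bousso Bound (the $a=b$ case of Conjecture~\ref{conj:qfc}) gives $\hmg(c|b)\leq 0$ with no further input. However, your primary narrative has a genuine gap: you assert that the future-causal-horizon definition applies to $b$ itself, i.e.\ that $\eth b\subset{\cal C}\cap J^+(\Sigma)$, so that $b$ is past-nonexpanding everywhere and $L^-(b)=H^-(b')$. Nothing in the hypotheses places $\eth b$ to the future of the Cauchy slice $\Sigma$ whose existence the definition guarantees; $\eth b$ may be an arbitrarily early cut of $\cal C$, and then the definition says nothing about $b$'s own nonexpansion. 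This is not a point of "causal bookkeeping" that can be checked — it simply need not hold — so the identification $L^-(b)=H^-(b')$ and the claim $\eth b=\eth b^-$ are unjustified as stated.

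The paper's proof takes a different route that avoids ever assuming anything about $b$'s nonexpansion: it uses the horizon hypothesis to produce an auxiliary past-nonexpanding wedge $a$ with $\eth a\subset{\cal C}\cap J^+(\Sigma)$ lying to the future of $\eth b$ along $\cal C$, and then applies the full three-wedge form of Discrete Max-Focusing to $a\subset b\subset c$ in the past direction, using $\eth b\cup\eth c\subset\eth a\cup L^-(a)$ (here $L^-(a)=H^-(a')$ because $a$ is past-nonexpanding on all of its edge). This is where the future-causal-horizon hypothesis does real logical work, rather than serving only to rule out vacuity as in your write-up: combined with Persistence of Nonexpansion (Theorem~\ref{thm:persistence}) it also shows that $b$ \emph{is} past-nonexpanding on the portion of its edge lying on $L^-(a)$, which is what makes the hypothesis $\eth c\subset L^-(b)$ satisfiable in the first place. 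So your minimal Bousso-Bound derivation does establish the stated inequality, but to match the content of the theorem you should replace the step "apply the horizon definition to $b$" with the paper's construction of the later cut $a$ and the three-wedge application of Conjecture~\ref{conj:qfc}.
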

\begin{proof}
    By the assumption, there exists a past-nonexpanding wedge $a$ whose edge lies on $\cal C$ and to the future of $b$. The result follows immediately from Discrete Max-Focusing, Conjecture~\ref{conj:qfc}.
\end{proof}

\subsection{Special Case: Compressible States}
One often considers ``compressible'' quantum states, for which
\begin{equation}
    \hmg(b|a) = \S(b|a) = \hmingen(b|a)~,
\end{equation}
where $\S(b|a)$ is the conditional generalized von Neumann entropy. Unlike the generalized max and min entropies, this can be written as a difference of generalized entropies:
\begin{equation}
    \S(b|a) = \S(b)-\S(a)~.
\end{equation}
We stress that in such settings, our novel definition of \emph{future-nonexpanding}, Def.~\ref{def:fne}, and our Discrete Max-Focusing Conjecture, Conj.~\ref{conj:qfc}, are both still needed. Their purpose is to deal with edge points where the \emph{classical} expansion is not well-defined, and this problem is unrelated to the quantum corrections to the area term.

\subsection{Special Case: Numerical Max-Expansion}
\label{sec:continuous}

So far, we have only defined qualitative notions of past and future nonexpansion, which does not assign a numerical value to the expansion of light-rays orthogonal to the edge of a wedge. As we explained in the introduction, numerical values cannot be assigned at null portions on the edge of a wedge, and at points where the edge is insufficiently smooth. Such points arise generically from the caustic structure of null hypersurfaces, since wedges are often defined by cuts of such hypersurfaces. They arise also when we consider unions or intersections of wedges. In this subsection, we will consider a more restrictive class of wedges for which a numerical max-expansion can be defined, and we relate its value to the more general qualitative notion of nonexpansion.

\begin{defn}[Numerical Expansion]\label{def:thmax}
    Let $a$ be a wedge whose edge $\eth a$ is acausal and $C^2$-smooth. (That is, no two points in $\eth a$ are causally connected; and $\eth a$ is a $C^2$ embedded submanifold.) Let $y$ collectively denote the transverse coordinates that label points in $\eth a$, and let $h$ be the determinant of the induced metric on $\eth a$. Let $V:\eth a\to [0,1]$, $y\to V(y)$ be a piecewise continuous function. Let $a^+_{V,\lambda}$ be the outward deformation of $a$ along the unique null generator of $H^+(a')$ that originates at each point $y$ in $O$, by an affine distance $\lambda V(y)$. (Note that $a^+_{V,\lambda}$ exists for all sufficiently small $\lambda$.) Then there exists a continuous function $\thmax^+:O\to \mathbb{R}$ such that
    \begin{equation}\label{eq:thmaxdef}
        \lim_{\lambda\to 0} \frac{1}{\lambda}\hmg(a^+_{V,\lambda}|a) = 
        4G\int_O d^{d-2} y\sqrt{h} \, V(y) \thmax^+(y)~,
    \end{equation}
    for every choice of $V$. We call $\thmax^+(a;p)\equiv \thmax^+(y)$ the \emph{future expansion} of $a$ at the point $p\in\eth a$ labeled by the coordinates $y$. The \emph{past expansion} is defined analogously.
\end{defn}

\begin{rem}
    The above definition can be generalized somewhat, though we will not try to do this carefully here: if $\eth a$ is not acausal or not smooth (or neither), $\Theta^\pm(a,p)$ can still be defined at any point $p\in \eth a$ that is acausal to the rest of $\eth a$ and possesses an open neighborhood $O$ such that $\eth a\cap O$ is a smooth embedded submanifold. Moreover, even at a point $p$ where $\eth a$ is not $C^2$, one of the expansions may exist: namely, if one of $H^\pm(a')$ is smooth in a neighborhood of $p$. 
\end{rem}

\begin{thm}\label{thm:fnetononpos}
    Suppose that the wedge $a$ is future-nonexpanding at $p$ and that $\thmax^+(a;p)$ exists. Then $\thmax^+(a;p)\leq 0$.
\end{thm}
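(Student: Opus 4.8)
The plan is to argue by contrapositive: assuming $\thmax^+(a;p) > 0$, I will construct an explicit outward deformation $b\supset a$ with $\eth b\subset \eth a\cup[H^+(a')\cap O]$ for which $\hmg(b|a)>0$, contradicting future-nonexpansion at $p$. The key point is that, because $\thmax^+$ is a \emph{continuous} function on $O$ (by Def.~\ref{def:thmax}), if it is strictly positive at $p$ it is strictly positive on some smaller neighborhood $O'\ni p$; and because $p$ lies on an acausal, $C^2$-smooth portion of $\eth a$, the null generators of $H^+(a')$ that originate there are well-defined for small affine parameter, so the one-parameter family $a^+_{V,\lambda}$ of Def.~\ref{def:thmax} actually gives us legitimate wedges of the type appearing in Def.~\ref{def:fne}.

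First I would use Def.~\ref{def:fne} at $p$ to obtain the open set $O$ on which Eq.~\eqref{eq:futnonexp} holds. Shrinking $O$ if necessary, I may assume $O$ is contained in the neighborhood where $\eth a$ is acausal and $C^2$, so that Def.~\ref{def:thmax} applies with this same $O$ (or a subset of it). Second, I would choose the profile $V:\eth a\to[0,1]$ to be supported inside $O$, smooth, not identically zero, and with support contained in the region where $\thmax^+(y)>0$ — such a region is nonempty and open because $\thmax^+$ is continuous and positive at $p$. Third, for this fixed $V$, Eq.~\eqref{eq:thmaxdef} gives
\begin{equation}
    \lim_{\lambda\to 0}\frac{1}{\lambda}\hmg(a^+_{V,\lambda}|a) = 4G\int_O d^{d-2}y\,\sqrt{h}\,V(y)\,\thmax^+(y) > 0~,
\end{equation}
since the integrand is nonnegative and strictly positive on a set of positive measure. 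Hence there exists $\lambda_0>0$ such that $\hmg(a^+_{V,\lambda_0}|a) > 0$. Fourth, I would check that $b\equiv a^+_{V,\lambda_0}$ is a wedge containing $a$ with $\eth b\subset\eth a\cup[H^+(a')\cap O]$ for $\lambda_0$ small enough — this is essentially the content of the parenthetical remark in Def.~\ref{def:thmax} that $a^+_{V,\lambda}$ exists for all sufficiently small $\lambda$, combined with the fact that the deformed edge points lie on the generators of $H^+(a')$ emanating from $\mathrm{supp}(V)\subset O$. This $b$ then violates Eq.~\eqref{eq:futnonexp}, contradicting the hypothesis that $a$ is future-nonexpanding at $p$.

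The main obstacle I anticipate is the fourth step: verifying carefully that the deformed configuration $a^+_{V,\lambda_0}$ is genuinely a wedge (i.e.\ $b=b''$) and that its edge is really contained in $\eth a\cup[H^+(a')\cap O]$, with the new edge points lying on $H^+(a')$ rather than straying onto $H^-(a')$ or off $\partial b$ entirely. For a piecewise continuous (or smooth) $V$ with $0\le V\le 1$ and small $\lambda$, this should follow from the structure of $H^+(a')$ as a null hypersurface ruled by generators transverse to the acausal $C^2$ surface $\eth a$, together with the fact that generators do not develop caustics or crossings for affine parameter below some positive threshold; but making this rigorous requires a short argument about the behavior of null congruences near a smooth acausal cut, which is precisely the kind of construction Def.~\ref{def:thmax} already presupposes. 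A secondary subtlety is ensuring the neighborhood $O$ from Def.~\ref{def:fne} can be taken inside the acausal/$C^2$ neighborhood from Def.~\ref{def:thmax}; since both are open sets containing $p$, intersecting them suffices, and Lemma~\ref{lem:ethaplusopen} guarantees future-nonexpansion persists on the smaller set.
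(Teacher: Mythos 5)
Your proposal is correct and is essentially the paper's argument run in the contrapositive direction: the paper takes the $O$ from Def.~\ref{def:fne}, notes $\hmg(a^+_{V,\lambda}|a)\leq 0$ for small $\lambda$ and all admissible $V$, and reads off $\thmax^+\leq 0$ from Eq.~\eqref{eq:thmaxdef}, while you assume $\thmax^+(a;p)>0$ and use the same family $a^+_{V,\lambda}$ with $V$ supported where $\thmax^+>0$ to violate Eq.~\eqref{eq:futnonexp}. The wedge-validity subtlety you flag in your fourth step is exactly what the parenthetical in Def.~\ref{def:thmax} presupposes, and the paper likewise relies on it without further argument.
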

\begin{proof}
    Since $a$ is future-nonexpanding at $p$, there exists an open neighborhood $O$ of $p$ such that any future-outward deformation $b$ along $H^+(a)$ that remains within $O$ obeys $\hmg(b|a)\leq 0$. Let $V$ and $a^+_{V,\lambda}$ be defined as in Def.~\ref{def:thmax}. Since $\hmg(a^+_{V,\lambda}|a)$ is nonpositive for sufficiently small $\lambda$, and since Eq.~\eqref{eq:thmaxdef} holds for all nonpositive functions $V:O\to \mathbb{R}_0^+$, one finds that $\thmax^+\leq 0$.
\end{proof}

\begin{thm}\label{thm:negtofne}
    Suppose that $\thmax^+(a;p)$ exists and that $\thmax^+(a;p)<0$. Then $a$ is future-nonexpanding at $p$.
\end{thm}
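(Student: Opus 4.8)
The idea is to deduce nonexpansion at $p$ from the \emph{strict} negativity of $\thmax^+$ there by interpolating between $a$ and an arbitrary admissible outward deformation, and summing the chain rule~\eqref{eq:chainMaxMaxMax} term by term along the interpolation. First I would localize: since $\thmax^+(a;p)<0$ and $\thmax^+(a;\cdot)$ is continuous where defined, there are $\delta>0$ and an open $O_1\ni p$ with $\thmax^+(a;q)<-\delta$ for all $q\in\eth a\cap O_1$. Using the continuity of the expansion under outward deformation (classically, the Raychaudhuri evolution of $\theta$; here posited for $\thmax^+$), one can then shrink to an open $O\subset O_1$ containing $p$ such that \emph{every} wedge $b\supset a$ with $\eth b\subset\eth a\cup[H^+(a')\cap O]$, together with every wedge on the straight-line interpolation defined below, has its expansion defined and $<-\delta/2$ at each of its edge points lying over $\eth a\cap O$. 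The key point is that shrinking $O$ forces both the transverse size and the affine length of any admissible deformation to zero, so the whole interpolating family stays in the regime controlled by the previous sentence. By the Remark after Def.~\ref{def:thmax}, these expansions exist even when the deformed edges fail to be $C^2$, because $H^+(a')$ is smooth along the relevant generators.

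Now fix an admissible $b$, write $b=a^+_\mu$ for its affine deformation profile $\mu:\eth a\to[0,\infty)$ (supported in $\eth a\cap O$; if $\mu\equiv 0$ the claim is trivial), and set $b_t\equiv a^+_{t\mu}$, so that $b_0=a$, $b_1=b$, and $a=b_0\subset b_s\subset b_t$ for $s\le t$. Since the null generators of $H^+(b_t')$ issuing from $\eth b_t$ are the continuations of those of $H^+(a')$, one has $b_{t''}=(b_t)^+_{(t''-t)\mu}$ with the \emph{same} profile $\mu$. For a partition $0=t_0<\dots<t_N=1$, applying~\eqref{eq:chainMaxMaxMax} repeatedly to $b_0\subset b_{t_i}\subset b_{t_{i+1}}$ gives
\begin{equation}
  \hmg(b|a)\;\le\;\sum_{i=0}^{N-1}\hmg\!\big((b_{t_i})^+_{(t_{i+1}-t_i)\mu}\,\big|\,b_{t_i}\big)~.
\end{equation}
By Def.~\ref{def:thmax} applied to the wedge $b_{t_i}$ (with $V=\mu/\sup\mu$ and $\lambda=(t_{i+1}-t_i)\sup\mu$), each summand equals $(t_{i+1}-t_i)\,4G\!\int d^{d-2}y\sqrt{h_{t_i}}\,\mu\,\thmax^+(b_{t_i};y)+o(t_{i+1}-t_i)$, the factors of $\sup\mu$ cancelling. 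Refining the partition and using that all $b_{t_i}$ lie in the \emph{compact} family $\{b_t:t\in[0,1]\}$, so that the remainders are uniform, the Riemann sum converges and yields
\begin{equation}
  \hmg(b|a)\;\le\;4G\int_0^1\! dt\int_{\eth a}d^{d-2}y\,\sqrt{h_t(y)}\;\mu(y)\,\thmax^+(b_t;y)\;\le\;0~,
\end{equation}
the last inequality because $\mu(y)>0$ only for $y\in\eth a\cap O$, where $\thmax^+(b_t;y)<-\delta/2$. Since $b$ was an arbitrary admissible deformation within $O$, $a$ is future-nonexpanding at $p$; the past case is identical.

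The main obstacle is the uniformity that upgrades the \emph{pointwise-in-$V$} limit of Def.~\ref{def:thmax} into a bound valid for \emph{every} finite (if small) deformation within a single fixed $O$: one must know that the $o(\cdot)$ remainder in~\eqref{eq:thmaxdef} is uniform over the compact interpolating family $\{b_t\}$, and that $\thmax^+$ varies continuously — and hence stays strictly negative over $\eth a\cap O$ — as $a$ is deformed outward along its generators. Both are regularity statements that are automatic in the classical limit and entirely plausible in the smooth regime to which Def.~\ref{def:thmax} already restricts; the remaining bookkeeping, namely the affine parametrization of the shared generators and the telescoping of the chain rule, is routine.
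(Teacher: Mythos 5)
Your route is genuinely different from the paper's, and as written it has a real gap. You interpolate $b_t=a^+_{t\mu}$ between $a$ and an arbitrary admissible deformation $b$, telescope the chain rule Eq.~\eqref{eq:chainMaxMaxMax}, and convert the sum into $\int_0^1 dt$ of the expansion of the \emph{deformed} wedges $b_t$. This requires three inputs that neither the theorem's hypotheses nor Def.~\ref{def:thmax} provide: (i) that $\thmax^+(b_t;\cdot)$ exists at all for $0<t\le 1$ --- Def.~\ref{def:thmax} is stated only for acausal $C^2$ edges, and a generic admissible $b$ in Def.~\ref{def:fne} has a merely $C^0$ edge (kinks where it leaves $\eth a$, and wherever the profile $\mu$ is not smooth), so the Remark's escape clause (smoothness of $H^+$) does not automatically rescue every $b_t$; (ii) that $\thmax^+(b_t;y)$ varies continuously in $t$ and stays below $-\delta/2$, i.e.\ a quantum-Raychaudhuri-type continuity along the generators, which you ``posit'' but which is exactly the kind of statement the discrete framework is designed not to assume; and (iii) uniformity in $t$ of the $o(\lambda)$ remainder in Eq.~\eqref{eq:thmaxdef} over the family $\{b_t\}$, which you flag as ``the main obstacle'' and leave as plausible. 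Since (i)--(iii) are the load-bearing steps, the argument as it stands proves the theorem only under substantially stronger regularity hypotheses than the ones stated.

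The machinery is also unnecessary. The paper's proof never touches the expansion of any deformed wedge and uses no chain rule: by continuity of $\thmax^+(a;\cdot)$ (built into Def.~\ref{def:thmax}) one has $\thmax^+(a;q)<0$ on a neighborhood $O$ of $p$; any admissible $b$ with $\eth b\subset\eth a\cup[H^+(a')\cap O]$ is itself of the form $a^+_{V,\lambda}$, and shrinking $O$ forces $\lambda$ to be small, so the strict negativity of the limit in Eq.~\eqref{eq:thmaxdef} directly gives $\hmg(b|a)\le 0$. In other words, the only ``uniformity over $V$'' issue one genuinely has to address is the one already implicit in shrinking $O$; your interpolation replaces it with a harder uniformity-over-the-family problem plus existence questions for $\thmax^+$ off the original wedge. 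If you want to keep your integrated-focusing argument, you would need to state and justify (i)--(iii) as explicit lemmas; otherwise the short first-order argument is both sufficient and strictly weaker in its assumptions.
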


\begin{proof}
    By Def.~\ref{def:thmax}, there exists an open neighborhood $O$ of $p$ such that $\thmax^+(a;q)<0$ for all $q\in O$. We may now choose any piecewise continuous function $V:0\to [0,1]$, giving rise to a family of deformations $a^+_{V,\lambda}$ as described in Def.~\ref{def:thmax}. Since $\thmax^+(a;p)$ is strictly negative, by continuity, $\hmg(a^+_{V,\lambda}|a)$ must be negative for all sufficiently small $\lambda$. After restricting $O$ to such values for all possible choices of $V$, Eq.~\eqref{eq:thmaxdef} implies that Def.~\ref{def:fne} is met.
\end{proof}

\begin{rem}
    Note that $\thmax^+(a;p)= 0$ does not imply that $a$ is future-nonexpanding at $p$. For example, $\thmax^+(a;p)= 0$ does not preclude that $\hmg(a^+_{V,\lambda}|a)>0$ for all $\lambda>0$. 
\end{rem}

\section{Evidence for Discrete Max-Focusing}
\label{sec:evidence}

\subsection{No Geodesics Enter Lightsheets}
\label{sec:noentry}

Discrete Max-Focusing would be violated already at the level of differential geometry if it was possible for new null geodesics to enter $L^+(a)$, i.e., if we could choose $\eth c$ to intersect null geodesics that were not present on $\eth b$. Then by choosing $\eth c$ to include only a small neighborhood of these geodesics and otherwise agree with $\eth b$, the area contribution to $\hmg(c|b)$ would be expected to dominate and would be positive. 

However, this is excluded by Theorem 9.3.11 of Ref.~\cite{Wald:1984rg}; see also~\cite{Akers:2017nrr}. Note in particular that this excludes the development of ``corners'' or folds on which null geodesics would enter $H^+(a')$. For example, spatially concave corners can (and generically do) develop along $H^+(a')\cap \eth a^+$, but spatially convex corners cannot, since null geodesics would enter there.

\subsection{Classical Focusing}
\label{sec:classical}

Let us define classical General Relativity by taking $\hbar\to 0$ and requiring that the metric $g_{\mu\nu}$ of the manifold $M$ satisfies the Einstein equation with the Null Energy Condition holding for matter. That is, the stress tensor satisfies $T_{\mu\nu} k^\mu k^\nu \geq 0$.

In the $\hbar\to 0$ limit, only the area terms survive in the smoothed conditional generalized entropy: 
\begin{equation}
    4 G\hbar\, \hmg(b|a) \to \A(b)-\A(a)~.
\end{equation}
By Def.~\ref{def:fne}, $a$ is future-nonexpanding at any point $p$ where $\eth a$ is differentiable, if and only if the expansion $\theta^+$ of the null generators of $H^+(a')$ is nonpositive at $a$. Then both Theorem~\ref{thm:persistence} and Conjecture~\ref{conj:qfc} follow immediately from the Raychaudhuri equation~\cite{Wald:1984rg},
\begin{equation}\label{eq:raych}
    \theta = -\frac{1}{d-2} \theta^2 - \varsigma_{\mu\nu}\varsigma^{\mu\nu} - 8\pi G\, \braket{T_{\mu\nu}}k^\mu k^\nu~,
\end{equation}
where $\varsigma$ is the shear of the null congruence. 

Thus, in the $\hbar\to 0$ limit, Discrete Max-Focusing and the Persistence of Nonexpansion become special cases of the classical focusing theorem~\cite{Wald:1984rg}.

\subsection{Discrete Max-Focusing Implies the QNEC}
\label{sec:qnec}

In a different limit, where $G\to 0$ at finite $\hbar$, we will now argue that Discrete Max-Focusing implies the Quantum Null Energy Condition. The QNEC is a highly nontrivial property of relativistic unitary quantum field theories, originally derived from the von-Neumann QFC. The QNEC was later proven explicitly using advanced methods in quantum field theory~\cite{Bousso:2015wca,Balakrishnan:2017bjg,Ceyhan:2018zfg}. Since we no longer wish to rely on the von Neumann QFC, it is vital that we show that Discrete Max-Focusing, too, implies the QNEC. 

Let $a$ be a wedge in Minkowski space (with $G=0$) whose edge $\eth a$ is acausal and smooth. Let $p\in \eth a$, and let the vector $k^\mu$ be null, future and outward-directed, and orthogonal to $\eth a$ at $p$. We require conditions on the extrinsic curvature of $\eth a$ at $p$ that are necessary for the QNEC to be regularization-scheme independent~\cite{Akers:2017ttv}. In particular, the classical null expansion and shear must vanish at $p$: $\theta^+=\varsigma^+_{\mu\nu}=0$. 
Let $\psi$ be the state of the quantum fields on $M$ (pure or mixed). Now consider $M$ with $n$ copies of the same quantum field theory, in the state $\psi^{\otimes n}$, and turn on $G\neq 0$. Our argument will involve taking the limit as $G$ is turned back off. Therefore, we can take $G$ to be arbitrarily small at the present step.

For sufficiently small $G$ at fixed $n$, a wedge $a(G,n)$ can be identified simply by assigning to it the same coordinates as those of $a$ in the nongravitating spacetime. In the same way, one can identify the point $p(G,n)\in \eth a(G,n)$. Now deform $a(G,n)$ smoothly and acausally in a neighborhood of $p(G,n)$, but not at $p(G,n)$ itself, to a wedge $b(G,n)$ (whose edge is also acausal and smooth, so that $\thmax^+[b(G,n)]$ exists), so as to achieve $\Theta^+_{\rm max}[b(G,n);p(G,n)]=-O(G^2)<0$.

Neither the identifications of $a(G,n)$ and $p(G,n)$, nor the deformation into $b(G,n)$ are unique, even while satisfying the criteria of Ref.~\cite{Akers:2017ttv}. The only important requirement is that they be chosen such that $p(G,n)\to p$, $a(G,n)\to a$, and $b(G,n)\to a$ as $G\to 0$ at fixed $n$.\footnote{To define an appropriate notion of continuity, one can represent the different spacetimes $M(G,n)$ as Minkowski space with a perturbative gravitational field.} The latter property can be satisfied because $\thmax^+[b(G,n);p(G,n)]-\theta^+[a(G,n);p(G,n)]\sim O(G)$.

\vspace{3mm}
\hspace{-8.5mm}
\begin{minipage}{0.55\textwidth}
    \quad~ We may now apply Discrete Max-Focusing to $b(G,n)$. Since $\thmax^+[b(G,n);p(G,n)]<0$ and $b(G,n)$ can be chosen smooth, there exists an open neighborhood $O$ of $p(G,n)$ such that $\thmax^+[b(G,n);q]<0$ for all $q\in \eth b(G,n)\cap O$; hence $\eth b(G,n)\cap O\subset \eth b(G,n)^+$. Let $c(G,n)\supset b(G,n)$ be an outward deformation of $b(G,n)$ along $H^+[b'(G,n)]$ such that $\eth b(G,n)\setminus O= \eth c(G,n)\setminus O$. Let $q\in \eth c$ lie on the future null generator originating at $p$. We may choose $c(G,n)$ such that in the limit as $G\to 0$ at fixed $n$, it approaches a fixed wedge $c$ whose edge lies on $H^+(a')$. (Recall that $a(G,n)\to a$, and $b(G,n)\to a$ as $G\to 0$ at fixed $n$.) Theorem~\ref{thm:persistence} implies
\begin{equation}\label{eq:thmc}
    \thmax^+[c(G,n);q]\leq 0~.
\end{equation}
    
\end{minipage}
\hspace{7mm}
\begin{minipage}{0.38\textwidth}
    \hspace{-8mm}
    \includegraphics[width=1.5\textwidth]{./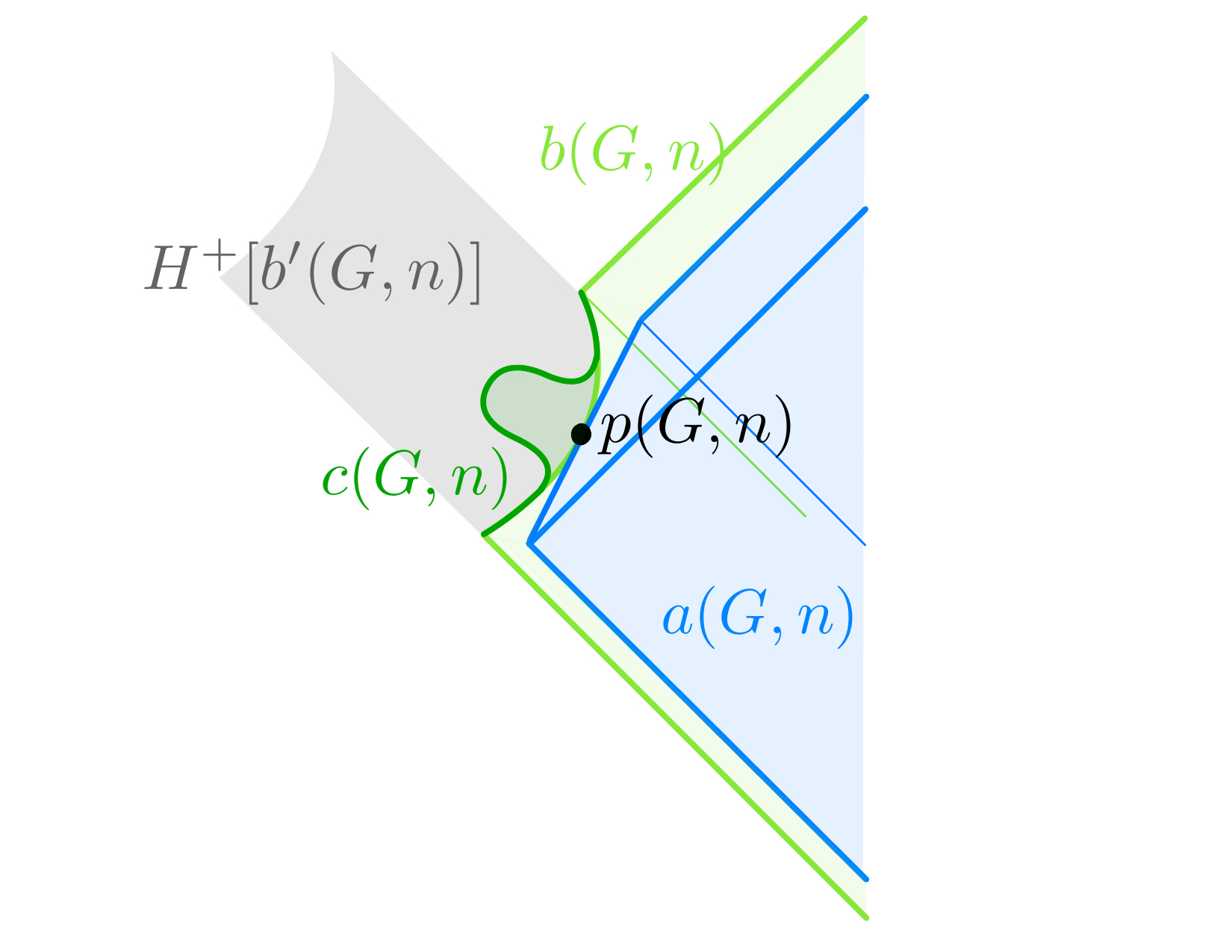}
    \captionof{figure}{Wedges appearing in the proof of the Quantum Null Energy Condition.}
    \label{fig:qnec}
\end{minipage}
\vspace{4mm}

We can also compute $\thmax^+[c(G,n),q]$ directly by splitting the expansion into its classical and leading quantum part:
\begin{multline}
    \thmax^+[c(G,n);q] -\thmax^+[b(G,n);p]  = \\
    \theta^+[c(G,n);q] -\theta^+[b(G,n);p] +4G\hbar \left( \thmax^{Q,+}[c(G,n);q] -\thmax^{Q,+}[b(G,n);p] \right)
\end{multline}
Here the quantum part is denoted with superscript $Q$ and defined as a functional derivative analogously to Eq.~\eqref{eq:thmaxdef} (see Def.~\ref{def:thmax} for details):
\begin{equation}\label{eq:thmaxqdef}
        \lim_{\lambda\to 0} \frac{1}{\lambda} H_{\rm max}(a^+_{V,\lambda}|a) = 
        4G\int_O d^{d-2} y\sqrt{h} \, V(y) \thmax^{Q,+}(y)~,
\end{equation}
The difference in classical expansions, $\theta^+[c(G,n);q] -\theta^+[b(G,n);p]$, is given by integrating the Raychaudhuri equation~\eqref{eq:raych}. Using Eq.~\eqref{eq:thmc}, and noting that $\Theta^+_{\rm max}[b(G,n);p(G,n)]$, $\theta^2$, and $\varsigma_{\mu\nu}^2$ are of order $G^2$, we can take the limit $G\to 0$ at fixed $n$ to obtain
\begin{equation}\label{eq:thmaxtmunu}
    \frac{1}{n} \left( \thmax^{Q,+}[c;q]-\thmax^{Q,+}[a;p] \right) \leq 
    \frac{2\pi}{n\hbar}\int_p^q d\lambda\, \braket{T_{\mu\nu}}_{\psi^{\otimes n}} k^\mu k^\nu
    ~.
\end{equation}
Here $\lambda$ is an affine parameter along the null generator connecting $p$ and $q$; $T_{\mu\nu}$ denotes the expectation value of the stress tensor; $k^\mu=d/d\lambda$. Recall that $c$ is the limit of $c(G;n)$ as $G\to 0$ at fixed $n$. We have divided both sides by $n$ for later convenience.

Now take the limit of this entire argument as $n\to \infty$. (This will involve choosing $G$ to scale like $1/n$ before $G\to 0$ is taken.) In this limit, the quantum equipartition principle\footnote{Here we are using only the established result for ordinary quantum systems, not the extension to generalized entropies (Conjecture~\ref{conj:aep}).} dictates that all conditional max-entropies approach conditional von Neumann entropies. In particular, $\thmax^{Q,+}[c;q]$ and $\thmax^{Q,+}[a;q]$ will be given by functional derivatives of the von Neumann entropy. Moreover, both the von Neumann expansions and the stress tensor in Eq.~\eqref{eq:thmaxtmunu} are proportional to $n$. Taking the limit as $n\to \infty$, we recover the QNEC:
\begin{equation}\label{eq:intqnec}
    \Theta_{\rm vN}^{Q,+}[c;q]-\Theta_{\rm vN}^{Q,+}[a;p] \leq 
    \frac{2\pi}{\hbar}\int_p^q d\lambda\, \braket{T_{\mu\nu}}_{\psi} k^\mu k^\nu~.
\end{equation}

This is an integrated form of the QNEC. The original form can be recovered by allowing $c$ to approach $a$. This constrains the expectation value of the stress tensor in terms of a second functional derivative of the von Neumann entropy; schematically, $\hbar S''\leq 2\pi \braket{T_{\mu\nu}} k^\mu k^\nu$~\cite{Bousso:2015mna}.

\section{Discrete-Max-Nonexpanding Wedges and Their Properties}
\label{sec:nonexpandingwedges}

In this section, we re-build the framework of normal and antinormal wedges, using only Discrete Max-Nonexpansion and Discrete Max-Focusing. We also define ``marginal wedge'' and ``throat'' to make contact with the traditional notions of stably marginal surfaces and throat-type quantum extremal surfaces. We have tried to unify the nomenclature that already exists for various special cases; see Table~\ref{tab:nomenclature}.

\begin{table}[H]
\begin{tabular}{ m{.34\textwidth} | m{.24\textwidth}| m{.34\textwidth} } 
\textbf{Definition} & \textbf{Name} & \textbf{Name if I, II, III hold} \\  
\hline\hline
$k$ is past- and future-nonexpanding at $p\in \eth k$ & $k$ is \emph{antinormal at $p$} & $k$ is \emph{accessible from $a$}\tablefootnote{This terminology was introduced in Ref.~\cite{Akers:2023fqr} for the special case where $a$ lies on the conformal boundary of Anti-de Sitter space.} \\ 
\hline
$k$ is future- (past-) nonexpanding but no past (future) outward deformation of $k$ is future- (past-) nonexpanding &$k$ is \emph{future-marginal (past-marginal)}\tablefootnote{In the classical limit and when $k$ includes an asymptotic region, $\eth k$ is said to be \emph{stably marginal}, where ``marginal'' refers to the vanishing of one of the null expansions, and ``stably'' refers to the absence of (in our language) any antinormal outward deformation of $k$.  Here we omit the word ``stably'' because there is no other kind of ``marginal.'' In our discrete setting, when numerical values of the expansions are not invoked, the absence of antinormal outward deformations is the only natural way in which we are able define the notion of marginal.} &  $k$ is \emph{future-marginally accessible from $a$ (past-marginally accessible from $a$)}\tablefootnote{In the special case where a numerical expansion exists and $k$ contains an asymptotic region, $\eth k$ was called future- or past-minimar in Refs.~\cite{Engelhardt:2018kcs,Bousso:2019dxk}.} \\ 
\hline
$k$ is future- and past-marginal & $k$ is a \emph{throat}\tablefootnote{$\eth k$ corresponds to a special case of a quantum extremal surface; it is minimal spatially and maximal in time. The other types of quantum extremal surfaces (bulge and bounce)~\cite{Engelhardt:2023bpv} do not have natural discrete definitions.} & $k$ is a \emph{throat accessible from $a$}\tablefootnote{$\eth k$ was called outer-minimal in Ref.~\cite{Engelhardt:2023bpv}.}
\end{tabular}
\vspace{3mm}
\caption{Summary of nomenclature for a wedge $k$ satisfying certain properties. The second column pertains to the definitions in Sections~\ref{sec:antinormal} and \ref{sec:stablymarginal}. The third column pertains to Sec.~\ref{sec:minimizing}, where $k$ is required to satisfies additional properties I, II, III relative to an ``input'' wedge $a$. The footnotes explain the relation to existing concepts and names.}
\label{tab:nomenclature}
\end{table}

\vspace{-3mm}
\subsection{Antinormal Wedges}
\label{sec:antinormal}

\begin{defn}[Antinormal]\label{def:antinormal}
    Let $a$ be a wedge and let $p\in\eth a$. $a$ is said to be antinormal at $p$ if $a$ is both past- and future-nonexpanding at $p$. If this holds for all $p\in \eth a$ then $a$ is called antinormal.
\end{defn}

\begin{defn}[Normal]\label{def:normal}
    Let $a$ be a wedge and let $p\in\eth a$. $a$ is said to be normal at $p$ if its complement wedge $a'$ is antinormal at $p$. If this holds for all $p\in \eth a$ then $a$ is called normal.
\end{defn}

\begin{thm}\label{thm:wua}
    The union of two antinormal wedges is antinormal.
\end{thm}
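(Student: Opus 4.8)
The plan is to prove the statement pointwise: let $a,b$ be antinormal wedges and let $k \equiv a \Cup b$. Fix $p \in \eth k$; we must show $k$ is both future- and past-nonexpanding at $p$. By the past/future symmetry of all the definitions it suffices to treat the future case. The strategy is to relate the edge of $k$ near $p$ to the edges of $a$ and $b$, to identify which of $a$, $b$ (or both) "controls" the behavior near $p$, and then to invoke the Discrete Subadditivity conjecture (Conjecture~\ref{conj:dsa}) to transport the nonexpansion of that wedge up to $k$.

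First I would analyze the local structure of $\eth k$. Since $k = a \Cup b = (a' \cap b')'$, a point $p \in \eth k$ must lie on $\eth a$ or on $\eth b$ (a point in neither edge, in the interior of one of the wedges or strictly exterior to both, cannot be on $\eth(a\Cup b)$; this should follow from the decomposition~\eqref{eq:decom} together with the fact that $a,b \subset k$ and $k' = a'\cap b'$). Without loss of generality assume $p \in \eth a$. There are two sub-cases. If $p \notin \cl(b)$, then in a small open neighborhood $O$ of $p$ we have $\eth k \cap O = \eth a \cap O$ and $H^+(k') \cap O = H^+(a') \cap O$, so any future-outward deformation $c \supset k$ of $k$ with $\eth c \subset \eth k \cup [H^+(k')\cap O]$ restricts to a corresponding deformation $\hat c$ of $a$ supported in $O$; moreover $c = \hat c \Cup b$ with $\hat c \supset a$ and $b$ spacelike to $\hat c$. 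Since $a$ is future-nonexpanding at $p$, we may shrink $O$ so that $\hmg(\hat c | a) \le 0$, and then Conjecture~\ref{conj:dsa}(i) gives $\hmg(c | k) = \hmg(\hat c \Cup b \,|\, a \Cup b) \le 0$, which is exactly Eq.~\eqref{eq:futnonexp} for $k$ at $p$. If instead $p$ lies on $\eth a \cap \cl(b)$ (or on $\eth a \cap \eth b$), the same argument applies provided one can still arrange that the relevant deformation $c$ of $k$ decomposes as $\hat c \Cup b$ (or $\hat c \Cup \hat d$ with $\hat d$ an outward deformation of $b$, using that both $a,b$ are future-nonexpanding at $p$); here one uses that a future-outward deformation of $k$ can only push the edge into $H^+(k') \subset H^+(a') \cup H^+(b')$, so it is realized by pushing $\eth a$ and/or $\eth b$ outward along their own future horizons, and that wedge union is monotonic so $c \supset k$ is preserved.

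The main obstacle I anticipate is precisely the bookkeeping at points $p$ where $\eth a$ and $\eth b$ meet or where $b$'s closure touches $\eth a$: one must check that \emph{every} admissible deformation $c$ of $k$ near such a $p$ genuinely arises as a wedge union of admissible deformations of $a$ and $b$ (so that the spacelike-separation hypothesis of Conjecture~\ref{conj:dsa} is met and the deformed pieces are themselves legitimate future-outward deformations obeying $\hmg(\cdot|\cdot)\le 0$). This requires a careful description of $H^+\bigl((a\Cup b)'\bigr)$ in terms of $H^+(a')$ and $H^+(b')$ near a corner, of the type already used in the proof that unions of wedges are wedges~\cite{Bousso:2022hlz,Bousso:2023sya}; I would lean on that causal-structure machinery rather than redo it. Once the decomposition $c = \hat c \Cup \hat d$ is in hand, the entropic step is immediate from Conjecture~\ref{conj:dsa}. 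Finally, since $p \in \eth k$ was arbitrary and the future/past cases are symmetric, $k$ is antinormal, completing the proof.
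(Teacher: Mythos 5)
Your overall strategy (verify Def.~\ref{def:fne} pointwise on $\eth(a\Cup b)$ and transport nonexpansion of $a$ or $b$ up to the union via Conjecture~\ref{conj:dsa}) has a genuine gap, and it is not just bookkeeping. Your opening claim --- that every $p\in\eth(a\Cup b)$ lies on $\eth a\cup\eth b$ --- is false unless $a$ and $b$ are mutually spacelike, which the theorem does not assume. Generically the edge of $c=a\Cup b$ also contains portions lying on $H^+(a')\cap H^-(b')$ and on $H^-(a')\cap H^+(b')$, i.e.\ on the crossing of the two Cauchy horizons, on neither $\eth a$ nor $\eth b$. At such a point $p$, a future-outward deformation of $c$ pushes the edge along generators of $H^+(a')$ at \emph{finite} affine distance from $\eth a$; the corresponding deformation $\hat c$ of $a$ therefore has its new edge far from $\eth a$, outside the reach of Def.~\ref{def:fne}, which only controls deformations whose new edge lies in a small neighborhood $O$ of points of $\eth a$. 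To control $\hmg(\hat c\,|\,\cdot)$ for such "up-the-lightsheet" deformations you must invoke Discrete Max-Focusing (Conjecture~\ref{conj:qfc}, equivalently Theorem~\ref{thm:persistence}), which your proposal never uses. This is precisely why the paper's proof is global rather than pointwise: it constructs $c_a=(a\cup(\eth c\cap H^+(a')))''$ and $\tilde c_a=(a\cup(\eth \tilde c\cap H^+(a')))''$ (and likewise for $b$), uses antinormality of $a$ \emph{together with the QFC} to conclude $\hmg(\tilde c_a|c_a)\leq 0$, and only then applies Discrete Subadditivity and the chain rule Eq.~\eqref{eq:chainMaxMaxMax} to reassemble $\hmg(\tilde c|c)\leq 0$ from the decompositions $c=c_a\Cup c_b$, $\tilde c=\tilde c_a\Cup \tilde c_b$.

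A secondary problem sits in your "easy" case as well: to apply Conjecture~\ref{conj:dsa} to $\hmg(\hat c\Cup b\,|\,a\Cup b)$ you need the adjoined wedge $b$ to be spacelike to $\hat c$, but nothing in the hypotheses makes $b$ spacelike to $a$ (the two antinormal wedges may overlap or be causally related), so the spacelike-separation hypothesis of the conjecture can fail for your chosen decomposition. The paper sidesteps this by never unioning with the undeformed $a$ or $b$ themselves, but with the auxiliary wedges $c_a,\tilde c_a,c_b,\tilde c_b$ adapted to the edges of $c$ and $\tilde c$. So while your instinct to reduce the union to the constituents plus subadditivity is in the right direction, the proof cannot close without (i) handling the horizon-crossing portions of $\eth(a\Cup b)$ and (ii) invoking Discrete Max-Focusing to propagate nonexpansion along $H^\pm(a')$ and $H^\pm(b')$ away from the original edges.
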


\begin{proof}
    Let $a,b$ be antinormal wedges, let $c=a\Cup b$, and let $\tilde c$ be a future-outward deformation of $c$ along $H^+(c')$. We will show that $\hmg(\tilde c|c)\leq 0$. Since an analogous argument applies to past-outward deformations of $c$, this will establish that $c$ is antinormal. 
    
    To gain some intuition, we note that by Def.~\ref{def:wedgeunion}, the edge of $c$ can be written as the following disjoint union:
    \begin{align}
        \eth c & \subset \eth a\cap b' \sqcup H^+(a')\cap H^-(b') \nonumber
        \\ & \,\sqcup \eth b\cap a' \sqcup H^-(a')\cap H^+(b')\nonumber
        \\ & \,\sqcup \eth a\cap \eth b~.
    \end{align}
    We also note that $\eth \tilde c\subset J^+(\eth c)$, and that the portions of $\eth \tilde c$ that lie in the causal future of the first, second, and third line sets lie, respectively, on $H^+(a')$, on $H^+(b')$, and on $H^+(a)\cap H^+(b)$.
    
    This motivates us to apply the QFC to two wedges $\tilde c_a$ and $c_a$ that both lie in $I^+(a)$ and are obtained by outward deformations of $a$ along $H^+(a')$ to the edges of $\tilde c$ and $c$: 
    \begin{align}\label{eq:tildeca}
        \tilde c_a & = (a\cup (\eth \tilde c\cap H^+(a')))'' ~; \\
        c_a & = (a\cup (\eth c\cap H^+(a')))'' ~.
    \end{align}
    Since $a$ is antinormal, 
    \begin{equation}\label{eq:uan1}
        \hmg(\tilde c_a|c_a)\leq 0~.
    \end{equation}
    Similarly, we define 
    \begin{align}
        \tilde c_b & = (b\cup (\eth \tilde c\cap H^+(b')))'' ~; \\
        c_b & = (b\cup (\eth c\cap H^+(b')))'' ~.
    \end{align}
    Since $b$ is antinormal, 
    \begin{equation}\label{eq:uan2}
        \hmg(\tilde c_b|c_b)\leq 0~.
    \end{equation}
    By Eqs.~\eqref{eq:uan1}, \eqref{eq:uan2}, and Discrete Subadditivity (Conj.~\ref{conj:dsa}), 
    \begin{align}\label{eq:uan}
        \hmg(\tilde c_a\Cup c_b|c_a\Cup c_b) &\leq 0~; \\
        \hmg(\tilde c_b\Cup \tilde c_a|c_b\Cup \tilde c_a) &\leq 0~.
    \end{align}
    We now apply the chain rule to obtain
    \begin{align}
        \hmg(\tilde c|c) 
        & = \hmg(\tilde c_a\Cup \tilde c_b|c_a\Cup c_b)  \\
        & \leq \hmg(\tilde c_a\Cup \tilde c_b|\tilde c_a\Cup c_b) +\hmg(\tilde c_a\Cup c_b|c_a\Cup c_b)\leq 0~.
    \end{align}
\end{proof}

\begin{cor}\label{lem:win}
    The intersection of two normal wedges is normal.
\end{cor}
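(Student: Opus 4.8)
The plan is to deduce Corollary~\ref{lem:win} from Theorem~\ref{thm:wua} by passing to spacelike complements, using the definition of ``normal'' (Def.~\ref{def:normal}) as ``complement is antinormal,'' and the De~Morgan-type relation between wedge union and intersection supplied by Def.~\ref{def:wedgeunion}. First I would let $a$ and $b$ be normal wedges, so that by Def.~\ref{def:normal} the complements $a'$ and $b'$ are antinormal. By Theorem~\ref{thm:wua}, the wedge union $a'\Cup b'$ is antinormal. Now I would observe that, by Def.~\ref{def:wedgeunion}, $a'\Cup b' = (a''\cap b'')' = (a\cap b)'$, using that $a''=a$ and $b''=b$ for wedges (Def.~\ref{def:covwedge}). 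Hence $(a\cap b)'$ is antinormal, which by Def.~\ref{def:normal} is precisely the statement that $a\cap b$ is normal. Since the intersection of two wedges is again a wedge (noted in Def.~\ref{def:covwedge}), this is well-posed, and the corollary follows.

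The one subtlety worth checking — and I expect it to be the only real obstacle, albeit a minor one — is that Theorem~\ref{thm:wua} as stated is symmetric in its treatment of future and past (it proves antinormality, which includes both past- and future-nonexpansion at every edge point), so no orientation mismatch arises when we feed in $a'$ and $b'$ rather than $a$ and $b$. Antinormality is complement-agnostic in the sense that it is a property of the edge together with both null normal directions, so applying the theorem to the complements is legitimate. I would also want to confirm that the identity $a'\Cup b' = (a\cap b)'$ holds at the level of sets and not merely up to edge pieces: this is immediate from $\Cup$'s definition together with $a''=a$, so there is nothing to prove beyond unwinding notation.

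I would write the corollary's proof as essentially two lines: first the reduction $a,b$ normal $\Rightarrow$ $a',b'$ antinormal $\Rightarrow$ $a'\Cup b'$ antinormal by Theorem~\ref{thm:wua}; then the identification $a'\Cup b' = (a\cap b)'$ via Def.~\ref{def:wedgeunion} and $a''=a$, concluding that $a\cap b$ is normal by definition. No entropy inequalities, chain rules, or Discrete Subadditivity are needed directly here — they are all already packaged inside Theorem~\ref{thm:wua} — so the corollary is genuinely a formal consequence and the proof should not exceed a few sentences.
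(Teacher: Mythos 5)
Your proof is correct and is essentially the paper's own argument: the paper likewise deduces the corollary immediately from Def.~\ref{def:wedgeunion}, Def.~\ref{def:normal}, and Theorem~\ref{thm:wua}, via exactly the identity $a'\Cup b'=(a\cap b)'$ that you spell out. Nothing further is needed.
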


\begin{proof}
    The result follows immediately from Definitions~\ref{def:wedgeunion} and~\ref{def:normal}, and the preceding Theorem.
\end{proof}

\begin{cor}\label{cor:wua}
    Let $a$, $b$, and $h$ be wedges such that $h\subset a\cap b$. Suppose that $a$ is antinormal at all points on $\eth a\setminus \eth h$, and that $b$ is antinormal at all points on $\eth b\setminus \eth h$. Then $c=a\Cup b$ is antinormal at all points on $\eth c\setminus \eth h$.
\end{cor}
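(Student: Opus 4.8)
The plan is to run the proof of Theorem~\ref{thm:wua} locally, in a neighborhood of the point where antinormality is needed, keeping track of the fact that only the antinormality of $a$ and $b$ \emph{away from} $\eth h$ is ever invoked. Fix $p\in\eth c\setminus\eth h$; by the obvious past/future symmetry it suffices to show that $c$ is future-nonexpanding at $p$. Since $\eth h$ is closed (Remark~\ref{rem-ethprops}), we may choose the open set $O$ in Definition~\ref{def:fne} so that $\cl O\cap\eth h=\varnothing$; we then need $\hmg(\tilde c|c)\leq 0$ for every wedge $\tilde c\supset c$ with $\eth\tilde c\subset\eth c\cup[H^+(c')\cap O]$.

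First I would record which portions of $\eth a$ and $\eth b$ such a deformation can touch. Because $h\subset a\cap b$, we have $\eth h\subset\cl h\subset\cl a=a\cup\eth a\cup H(a)$, which is disjoint from $a'=\setint[M\setminus I(a)]$; hence $a'\cap\eth h=\varnothing$, and likewise $b'\cap\eth h=\varnothing$. Using the decomposition
\[
\eth c\subset(\eth a\cap b')\sqcup(H^+(a')\cap H^-(b'))\sqcup(\eth b\cap a')\sqcup(H^-(a')\cap H^+(b'))\sqcup(\eth a\cap\eth b)
\]
from the proof of Theorem~\ref{thm:wua}, it follows that $\eth a\cap b'\subset\eth a\setminus\eth h$, $\eth b\cap a'\subset\eth b\setminus\eth h$, and (after shrinking $O$) $\eth a\cap\eth b\cap O\subset(\eth a\setminus\eth h)\cap(\eth b\setminus\eth h)$. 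So on the pieces of $\eth c\cap O$ where $c$ locally coincides with $a$, with $b$, or with their wedge-union corner, the relevant wedge is antinormal by hypothesis; and on a null portion of $\eth c$, $c$ is antinormal automatically by the Lemma that null edge portions are future- and past-nonexpanding.

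Then I would reproduce the construction of Theorem~\ref{thm:wua} within $O$: set $c_a=(a\cup(\eth c\cap H^+(a')))''$, $\tilde c_a=(a\cup(\eth\tilde c\cap H^+(a')))''$, and $c_b,\tilde c_b$ analogously, observing that since $\tilde c=c$ outside $O$ the pairs $(c_a,\tilde c_a)$ and $(c_b,\tilde c_b)$ differ only within $O$. The inequalities $\hmg(\tilde c_a|c_a)\leq 0$ and $\hmg(\tilde c_b|c_b)\leq 0$ then follow as in Theorem~\ref{thm:wua}: $c_a$ is an outward deformation of $a$ along $H^+(a')$, its new edge lies on $L^+(a)$, so by Persistence of Nonexpansion (Theorem~\ref{thm:persistence}) $c_a$ is future-nonexpanding there, and the Bousso bound (Conjecture~\ref{conj:qfc} with $b=a$) gives the estimate --- provided the generators of $H^+(a')$ feeding the $O$-portion of $\eth\tilde c_a$ emanate from $\eth a\setminus\eth h\subseteq\eth a^+$ (similarly for $b$). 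Granting this, Discrete Subadditivity (Conjecture~\ref{conj:dsa}) yields $\hmg(\tilde c_a\Cup c_b|c_a\Cup c_b)\leq0$ and $\hmg(\tilde c_b\Cup\tilde c_a|c_b\Cup\tilde c_a)\leq0$, and the chain rule~\eqref{eq:chainMaxMaxMax} closes the argument verbatim as in Theorem~\ref{thm:wua}:
\[
\hmg(\tilde c|c)=\hmg(\tilde c_a\Cup\tilde c_b|c_a\Cup c_b)\leq\hmg(\tilde c_a\Cup\tilde c_b|\tilde c_a\Cup c_b)+\hmg(\tilde c_a\Cup c_b|c_a\Cup c_b)\leq0.
\]

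The main obstacle is the parenthetical clause just above: showing that for $p\in\eth c\setminus\eth h$ and $O$ chosen small enough, every null generator of $H^+(a')$ (resp.\ $H^+(b')$) that meets the $O$-portion of $\eth\tilde c$ originates at a point of $\eth a\setminus\eth h$ (resp.\ $\eth b\setminus\eth h$). I expect this from the closedness of $\eth h$, the relations $a'\cap\eth h=b'\cap\eth h=\varnothing$, and continuity of the generator flow as $O$ shrinks to $\{p\}$; at the ``joint'' points $p\in H^\pm(a')\cap H^\mp(b')$ one must argue more carefully, tracking where the generator through $p$ first meets $\eth a$ relative to $\eth a\cap b'$, $\eth a\cap\eth b$, and $\eth a\cap\eth h$. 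Once this bookkeeping is in hand, the corollary follows immediately from Theorem~\ref{thm:wua}, Theorem~\ref{thm:persistence}, Conjecture~\ref{conj:dsa}, and the chain rule.
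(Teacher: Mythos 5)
Your skeleton is the right one (rerun the proof of Theorem~\ref{thm:wua}, using only the antinormality of $a$ and $b$ away from $\eth h$), but the step you flag as ``the main obstacle'' is precisely the content of the corollary, and your localization strategy cannot close it. Shrinking $O$ around $p$ only controls where $\tilde c$ differs from $c$; it does nothing to the auxiliary wedges $c_a=(a\cup(\eth c\cap H^+(a')))''$ and $\tilde c_a=(a\cup(\eth\tilde c\cap H^+(a')))''$, which are built from the \emph{entire} portion of $\eth c$ (resp.\ $\eth\tilde c$) on $H^+(a')$. That portion may contain pieces lying on generators of $H^+(a')$ that emanate from $\eth a\cap\eth h$, where $a$ is not assumed nonexpanding; such pieces lie on $H^+(a')$ but not on $L^+(a)=H^+(a')\cap J^+(\eth a^+)$. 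Since both Theorem~\ref{thm:persistence} and Conjecture~\ref{conj:qfc} require the \emph{full} edges of the deformed wedges to lie in $\eth a\cup L^+(a)$, their hypotheses fail for the pair $(c_a,\tilde c_a)$ no matter how small you take $O$, and the inequality $\hmg(\tilde c_a|c_a)\leq 0$ is not obtained. Your ``provided the generators feeding the $O$-portion emanate from $\eth a\setminus\eth h$'' addresses only the deformed piece, not the undeformed remainder of $\eth c_a$, and even the claim about the $O$-portion is left unproven (it is not implied by $\cl O\cap\eth h=\varnothing$, because a generator through a point of $O$ can originate at $\eth a\cap\eth h$).

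The paper closes exactly this gap differently: it keeps the argument global but excises the problematic generators from the construction, defining $\tilde c_a=(a\cup(\eth\tilde c\cap H^+(a')\setminus H^+(h')))''$ and similarly for $c_a$, $\tilde c_b$, $c_b$, so that only antinormal pieces of $\eth a$ and $\eth b$ ever enter when Discrete Max-Focusing is invoked. The price is that one must check the identities $c=c_a\Cup c_b$ and $\tilde c=\tilde c_a\Cup\tilde c_b$ still hold, and this is where the geometric input appears: any portion of $\eth c$ or $\eth\tilde c$ lying on $H^+(h')$ must already lie on $\eth a$ or on $\eth b$ itself, so nothing is lost by the excision. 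If you want to salvage your local approach, you would need to perform the same excision (or prove the generator statement you conjectured, including at the joint points $H^\pm(a')\cap H^\mp(b')$), at which point you have essentially reproduced the paper's proof; as written, the proposal has a genuine gap.
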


\begin{proof}
    The proof is a straightforward generalization of the proof of the preceding Theorem. Instead of Eq.~\eqref{eq:tildeca}, we now define $\tilde c_a = (a\cup (\eth \tilde c\cap H^+(a')\setminus H^+(h')))''$, and similarly for $c_a$, $\tilde c_b$, and $c_b$. This ensures that only antinormal pieces of $\eth a$ and $\eth b$ are used when the QFC is invoked. It remains true that $c=c_a\Cup c_b$ and $\tilde c = \tilde c_a\Cup \tilde c_b$ since any portion of $\eth c$ or $\eth \tilde c$ that lies on $H^+(h')$ must also lie on $\eth a$ or on $\eth b$ itself. 
\end{proof}

\subsection{Marginal Wedges and Throats}
\label{sec:stablymarginal}

\begin{defn}[Marginal] \label{def:stablymarginal}
    A wedge $a$ is said to be future-marginal if 
    \begin{enumerate}[I.]
        \item $a$ is antinormal;
        \item $a$ is future-noncontracting.
    \end{enumerate}
    Past-marginal is defined analogously.
\end{defn}
\begin{defn}[Throat]\label{def:throat}
    A wedge $a$ is said to be a throat if $a$ is future- and past-marginal.
\end{defn}

\begin{thm}\label{thm:fmtopastmarginal}
    If the wedge $a$ is future-marginal, and if $\thmax^+(a)$ exists, then $\thmax^+(a)=0$.
\end{thm}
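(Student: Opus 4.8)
The plan is to argue by contradiction: suppose $a$ is future-marginal, that $\thmax^+(a)$ exists at some point $p\in\eth a$, but $\thmax^+(a;p)\neq 0$. Since $a$ is antinormal (condition I), it is future-nonexpanding at $p$, so by Theorem~\ref{thm:fnetononpos} we have $\thmax^+(a;p)\leq 0$. The remaining case to exclude is therefore $\thmax^+(a;p)<0$. The idea is that a strictly negative future expansion at $p$ lets us deform $a$ \emph{inward} in the past direction (equivalently, deform along $H^-(a')$ outward) at $p$ while keeping all the new edge points future-nonexpanding, contradicting future-noncontraction (condition II).

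Concretely, first I would use Def.~\ref{def:thmax} (and the openness built into it) to find an open neighborhood $O$ of $p$ such that $\thmax^+(a;q)<0$ for all $q\in\eth a\cap O$; then by Theorem~\ref{thm:negtofne}, $a$ is future-nonexpanding at every such $q$. Next I would construct a proper past-directed outward null deformation $b\supsetneq a$ with compact support inside $O$, i.e.\ $\eth b\subset\eth a\cup[H^-(a')\cap O]$, chosen so small (small affine deformation distance) that the future expansion of $b$ at each new edge point is still strictly negative. This uses that $\thmax^+$ is continuous and that under a small deformation of the edge the future expansion changes continuously — the new edge is $C^2$ and acausal for small enough deformation since $\eth a$ is, so $\thmax^+(b)$ exists and $\thmax^+(b;r)<0$ for all $r\in\eth b\setminus\eth a$ (and indeed on a neighborhood). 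By Theorem~\ref{thm:negtofne} again, $b$ is then future-nonexpanding at all points of $\eth b\setminus\eth a$. But this is exactly a proper past-directed outward null deformation of $a$, with compact support within any prescribed neighborhood of $\eth a$, that is future-nonexpanding on all new edge points — contradicting future-noncontraction. Hence $\thmax^+(a;p)<0$ is impossible, and combined with $\thmax^+(a;p)\leq 0$ we conclude $\thmax^+(a;p)=0$.

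The main obstacle is the second step: making precise the claim that a sufficiently small past-outward deformation of $a$ near $p$ preserves strict negativity of $\thmax^+$ at the \emph{new} edge points. One needs that the deformed wedge $b$ has an acausal $C^2$ edge (true for small deformations since $\eth a$ does and these properties are open), that $\thmax^+(b)$ exists there (Def.~\ref{def:thmax}), and that $\thmax^+(b;r)\to\thmax^+(a;p)<0$ as the deformation parameter goes to zero and $r\to p$. This last continuity statement is where some care is required — it rests on continuity of the expansion functional (both its classical Raychaudhuri piece and the quantum piece defined via Eq.~\eqref{eq:thmaxqdef}) under smooth variations of the edge, which holds because $\thmax^+$ is itself a continuous functional derivative of the smooth quantity $\hmg$. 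I would also note that one should check the deformation can be taken ``proper'' (i.e.\ $b\supsetneq a$) and of compact support, which is automatic since we deform along a genuine null direction $H^-(a')$ on an open subset of $\eth a\cap O$ where, by acausality and smoothness of $\eth a$, such generators exist.
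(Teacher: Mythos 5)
Your proposal is correct and follows essentially the same route as the paper: nonpositivity from Theorem~\ref{thm:fnetononpos}, then a contradiction obtained by making a small past-directed outward deformation along $H^-(a')$ on which, by continuity, $\thmax^+$ stays strictly negative, so Theorem~\ref{thm:negtofne} makes it future-nonexpanding at all new edge points, contradicting future-noncontraction. The continuity-under-deformation caveat you flag is exactly the assumption the paper also invokes (``Assuming continuity of $\thmax^+$''), so your added care there is a feature, not a deviation.
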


\begin{proof}
    Since $a$ is future-nonexpanding, Theorem~\ref{thm:fnetononpos} implies that $\thmax^+(a)\leq 0$. Suppose for contradiction that $\thmax^+(a)< 0$. By Def.~\ref{def:thmax}, $\thmax^+$ is continuous on its open domain, so there exists an open neighborhood $O$ of $\eth a$ in $M$ such that $\thmax^+(a)<0$ for all $q\in O\cap\eth a$. Assuming continuity of $\thmax^+(a)$, we can smoothly deform $a$ outward along $H^+(a')\cap O$, to obtain a wedge $b\supset a$ with $\thmax^+(b;r)<0$ for all  $r\in \eth b \setminus \eth a$. By Theorem~\ref{thm:negtofne}, $b$ is a future-nonexpanding past outward deformation of $a$. This contradicts the assumption that $a$ is future-marginal.
\end{proof}

\begin{cor}\label{cor:throattoextremal}
    If the wedge $a$ is a throat, and if $\thmax^\pm(a)$ both exist, then $\thmax^\pm(a)=0$.
\end{cor}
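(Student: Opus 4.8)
The plan is to obtain Corollary~\ref{cor:throattoextremal} directly from Theorem~\ref{thm:fmtopastmarginal} together with its time-reverse, exploiting the symmetry of Definitions~\ref{def:stablymarginal} and~\ref{def:throat} under interchange of past and future. First I would unpack Def.~\ref{def:throat}: a throat wedge $a$ is by definition both future-marginal and past-marginal. Hence if $\thmax^+(a)$ exists, Theorem~\ref{thm:fmtopastmarginal} applied to the future-marginal property gives $\thmax^+(a)=0$. Applying the past-directed version of Theorem~\ref{thm:fmtopastmarginal} (which holds because every ingredient in its proof --- Theorems~\ref{thm:fnetononpos} and~\ref{thm:negtofne}, Def.~\ref{def:thmax}, and the notion of noncontracting --- was stated with ``$+$ can be replaced by $-$ throughout'') to the past-marginal property gives $\thmax^-(a)=0$ whenever $\thmax^-(a)$ exists. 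Combining these two statements yields exactly the claim: if both $\thmax^\pm(a)$ exist, then both vanish.

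In slightly more detail, I would write: ``By Def.~\ref{def:throat}, $a$ is future-marginal. Hence by Theorem~\ref{thm:fmtopastmarginal}, $\thmax^+(a)=0$ wherever it exists. By the same definition, $a$ is also past-marginal, and the past-directed analogue of Theorem~\ref{thm:fmtopastmarginal} --- obtained by the time-reversal symmetry of all the ingredients in its proof --- gives $\thmax^-(a)=0$ wherever it exists. If both expansions exist at a point, both therefore vanish there.'' No chain rules, focusing conjectures, or subadditivity statements are needed; the content is entirely in the already-proven Theorem~\ref{thm:fmtopastmarginal}.

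The only subtlety --- and the one place I would be careful --- is to make explicit that Theorem~\ref{thm:fmtopastmarginal} really does have a valid past-directed version. Its proof invokes Theorem~\ref{thm:fnetononpos} (future-nonexpanding plus existence of $\thmax^+$ implies $\thmax^+\le 0$), Theorem~\ref{thm:negtofne} (strict negativity of $\thmax^+$ implies future-nonexpanding), the continuity of $\thmax^+$ from Def.~\ref{def:thmax}, and the definition of future-noncontracting. Each of these was phrased with an explicit ``the past expansion is defined analogously'' or ``$+$ can be replaced by $-$'' clause, so the time-reverse of the whole argument goes through verbatim; I would simply cite this symmetry rather than rewrite the proof. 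Thus there is essentially no obstacle here: the Corollary is a one-line consequence of the Theorem applied twice, once in each null direction.
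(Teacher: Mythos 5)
Your proposal is correct and matches the paper's own (one-line) proof: the corollary follows by applying Theorem~\ref{thm:fmtopastmarginal} to the future-marginal property of the throat and its time-reversed analogue to the past-marginal property. Your explicit check that the time-reversal of Theorem~\ref{thm:fmtopastmarginal} is justified by the ``past defined analogously'' clauses is a reasonable elaboration of what the paper leaves implicit.
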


\begin{proof}
    The result follows from the preceding theorem.
\end{proof}

\subsection{Accessible Wedges and Accessible Throats}
\label{sec:minimizing}

\begin{defn}[Accessible Wedge]\label{def:minimizing}
    Let $a$ be a wedge. The wedge $f$ is called accessible from $a$ if it satisfies the following three properties:
    \begin{enumerate}[I.]
    \item $f\supset a$, $\tilde \eth f=\tilde \eth a$;\footnote{This condition will be updated in future work, for generality. Here as in Ref.~\cite{Bousso:2023sya}, we are mainly interested in the cases where $a$ (even if it shares no boundary with conformal infinity) is a wedge in an asymptotically AdS spacetime. The condition $\tilde \eth f=\tilde \eth a$ is not suitable when the spacetime has a different asymptotic structure. A more general condition will be supplied in forthcoming work~\cite{BoussoKaya}, alongside an alternate definition of $\emin$ as a complement of a suitable $\emax$.}
    \item $f$ is antinormal at points $p\in \eth f\setminus\eth a$;
    \item $f$ admits a Cauchy slice $\Sigma$ such that $\Sigma\supset \eth a$ and such that for any wedge $h\subsetneq f$ with $a\subset h$, $\eth h\subset \Sigma$, and $\eth h\setminus \eth f$ compact in $M$,
    \begin{equation}
        \hmg(f|h)< 0~.
    \end{equation}
    \end{enumerate}
\end{defn}

\begin{rem}
    Our notion of ``accessible'' is a generalization to arbitrary input regions $a$, of the notion of ``max-accessible'' introduced in Ref.~\cite{Akers:2023fqr}. 
\end{rem}

\begin{defn}[Marginally accessible Wedge]\label{def:minimar}
    Let $a$ be a wedge. A wedge $k$ is called future-marginally accessible from $a$ if
    \begin{enumerate}[I.]
        \item $k$ is accessible from $a$;
        \item $k$ is future-noncontracting.
    \end{enumerate}
    Past-marginally accessible is defined analogously.
\end{defn}

\begin{rem}
    In the limit where $a$ is an asymptotic region near the boundary of Anti-de Sitter space, the above definition reduces to the ``minimar'' surfaces defined in Ref.~\cite{Engelhardt:2018kcs} (in the classical limit) and in Ref.~\cite{Bousso:2019dxk} (for the von Neumann generalized entropy).
\end{rem}

\begin{defn}[Accessible Throat]\label{def:minithroat}
    Let $a$ be a wedge. A wedge $k$ is called a throat accessible from $a$ if $k$ is past- and future-marginally accessible from $a$. (See the previous definition and Def.~\ref{def:throat}.)
\end{defn}

\begin{rem}\label{rem:outerminimizing}
    In the limit where $a$ is an asymptotic region near the boundary of Anti-de Sitter space and where we approximate the conditional max entropy by a difference of von Neumann entropies, the the above definition reduces to the notion of an ``outer-minimizing'' wedge for a given asymptotic boundary subregion. The word ``outer'' here refers to the side of the throat quantum extremal surface $\eth k$ on which property III above is met. Here we take the wedge $k$ as the more fundamental object, from which $\eth k$ arises as its edge; so the terminology ``outer'' is obsolete.
\end{rem}

\section{Generalized Entanglement Wedges and Their Properties}
\label{sec:entanglement}

We will now show that the ``bare-bones'' structure developed in this paper suffices to define generalized entanglement wedges and to establish their properties. We will make no reference to numerical values of the expansions, nor to the von Neumann and (outward) min-expansions. Only the Discrete Max-Expansion is used in definitions, and only Discrete Max-Focusing is used in proofs. Aside from this main objective, this section also serves to combine the results of Ref.~\cite{Bousso:2023sya} (which approximated matter states as compressible) with those of Refs.~\cite{Akers:2023fqr} (which did not apply to arbitrary spacetimes).

\begin{defn}[Max entanglement Wedge of a Gravitating Region]\label{emaxdef}
    Given a wedge $a$, let $F(a)$ 
    be the set of all wedges accessible from $a$ (see Def.~\ref{def:minimizing}). The {\em max entanglement wedge} of $a$, $\emax(a)$, is their wedge union:
   \begin{equation}\label{eq:emaxdef}
       \emax(a) \equiv \Cup_{f\in F(a)}\, f~.
   \end{equation}
\end{defn}
\begin{thm}\label{lem:emaxpropshare}
    $\emax(a)$ is accessible from $a$.
\end{thm}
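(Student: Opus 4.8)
The plan is to verify the three defining properties of Definition~\ref{def:minimizing} for $f = \emax(a) = \Cup_{f\in F(a)} f$ directly, using the union theorems already established. Property~I is essentially immediate: each $f\in F(a)$ satisfies $f\supset a$ and $\tilde\eth f = \tilde\eth a$, so their wedge union contains $a$; that the conformal edge is unchanged under wedge union of wedges all sharing the same $\tilde\eth a$ should follow from the fact that no $f$ extends the asymptotic boundary data, so neither does the union. (If $F(a)=\varnothing$ the statement is vacuous or one takes $\emax(a)=a$; I would remark on this edge case but not dwell on it.)

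The heart of the argument is Property~II, antinormality of $\emax(a)$ at points of $\eth\emax(a)\setminus\eth a$. Here I would invoke Corollary~\ref{cor:wua} rather than Theorem~\ref{thm:wua}: with the role of the shared sub-wedge $h$ played by $a$ itself, each $f\in F(a)$ is antinormal on $\eth f\setminus\eth a$ by its Property~II, so the wedge union is antinormal on $\eth(\Cup f)\setminus\eth a$. The one subtlety is that Corollary~\ref{cor:wua} is stated for two wedges, whereas $F(a)$ may be an infinite family; I would either argue that $\eth\emax(a)$ is locally determined by finitely many $f$'s near any given point (so a finite union suffices pointwise), or note that antinormality is a pointwise condition and Def.~\ref{def:fne} is an open/local statement, so an inductive/limiting application of Corollary~\ref{cor:wua} goes through. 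This finite-vs-infinite union issue is the main obstacle I anticipate, and I expect the paper handles it by a locality argument: any point of $\eth\emax(a)$ lies on $\eth f$ for some single $f$ (or on $H^\pm$ of finitely many), so the two-wedge corollary, possibly iterated, controls it.

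Property~III is the most delicate, since it asserts existence of a Cauchy slice $\Sigma\supset\eth a$ of $\emax(a)$ with a strict inequality $\hmg(\emax(a)|h)<0$ for all intermediate wedges $h$ with $a\subset h\subsetneq\emax(a)$, $\eth h\subset\Sigma$, and $\eth h\setminus\eth\emax(a)$ compact. The strategy is to build $\Sigma$ by patching together the slices $\Sigma_f$ guaranteed for each $f\in F(a)$, and to reduce the inequality for $\emax(a)$ to the inequalities for individual $f$'s via the chain rule (Conj.~\ref{conj:chain}) together with Discrete Subadditivity (Conj.~\ref{conj:dsa}), exactly in the pattern of the proof of Theorem~\ref{thm:wua}: given $h$, write $\emax(a)$ as a wedge union of outward pushes of the $f$'s to the relevant pieces of $\eth\emax(a)$, use strict negativity of $\hmg(f|h_f)$ from Property~III of each $f$, and propagate strictness through the chain rule (here the strict clause of Conj.~\ref{conj:dsa}.ii is what keeps the final inequality strict). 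The hard part is showing the patched $\Sigma$ is genuinely a Cauchy slice of $\emax(a)$ containing $\eth a$, and that an arbitrary competitor $h$ can be decomposed compatibly with this patching so that the chain-rule bookkeeping closes — this is where I would expect the bulk of the real work, and I suspect the paper leans on the nesting structure of accessible wedges (and perhaps on a maximality property of $\emax(a)$ among accessible wedges) to make the decomposition of $h$ canonical.
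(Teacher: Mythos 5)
Your proposal follows essentially the same route as the paper: Property~I from $a\in F(a)$ (so $F(a)\neq\varnothing$) together with the definition of the wedge union, Property~II by applying Corollary~\ref{cor:wua} with the shared wedge taken to be $a$ and proceeding inductively over pairwise unions, and Property~III by patching the Cauchy slices of two accessible wedges and combining the strict clause of Discrete Subadditivity (Conj.~\ref{conj:dsa}), Discrete Max-Focusing along the null portions of the patched slice, and the chain rule Eq.~\eqref{eq:chainMaxMaxMax}. The obstacles you anticipate (the infinite family, and decomposing the competitor $h$ compatibly with the patched slice) are handled in the paper no more elaborately than you suggest --- it simply argues ``by induction'' on pairwise unions and writes down the patched Cauchy slice explicitly.
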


\begin{proof} 
We must show that $\emax(a)\in F(a)$, i.e., that $\emax(a)$ satisfies properties I--III listed in Def.~\ref{def:minimizing}.

\emph{Property I:} $f=a$ satisfies properties I--III with any choice of Cauchy slice. Hence $F(a)$ is nonempty, and Eq.~\eqref{eq:emaxdef} implies that $f=\emax(a)$ satisfies property I.

\emph{Property II:} Proceeding inductively, let $f_1,f_2\in F(a)$ and $f_3=f_1\Cup f_2$. By Corollary~\ref{cor:wua}, $f_3$ is antinormal at points $p \in \eth f_3 \setminus \eth a$. By induction, $\emax(a)$ is antinormal at points $p \in \eth \emax(a) \setminus \eth a$.

\emph{Property III:} Again proceeding inductively, let $f_1,f_2\in F(a)$, with property III satisfied by Cauchy slices $\Sigma_1$ and $\Sigma_2$, respectively. Then $f_3=f_1\Cup f_2$ admits the Cauchy slice
\begin{equation}\label{eq:unionslice}
  \Sigma_3 = \Sigma_1 \cup  [H^+(f_1')\cap \mathbf{J}^-(\Sigma_2)] 
  \cup [H^-(f_1')\cap \mathbf{J}^+(\Sigma_2)] \cup [\Sigma_2 \cap f_1']~.
\end{equation}
Let $h\supset a$, $\eth h\subset \Sigma_3$. By property III of $\Sigma_1$ and $\Sigma_2$, 
\begin{align}
    \hmg[\Sigma_1|h\cap\Sigma_1] &< 0~, \\
    \hmg[\Sigma_2|h\cap\Sigma_2] & < 0~.
\end{align}
Then Discrete Subadditivity (Conj.~\ref{conj:dsa}) implies
\begin{align}
    \hmg[h\Cup \Sigma_1|h] &< 0~, \\
    \hmg[\Sigma_3|h \Cup (\Sigma_3\setminus \Sigma_2)] &< 0~.
\end{align}
By Discrete Max-Focusing (Conj.~\ref{conj:qfc}), 
\begin{equation}
    \hmg[h \Cup (\Sigma_3\setminus \Sigma_2)|h\Cup \Sigma_1] \leq 0~.
\end{equation}
Finally, by the chain rule Eq.~\eqref{eq:chainMaxMaxMax},
\begin{align}
    \hmg[\Sigma_3|h] &\leq \hmg[\Sigma_3|h \Cup (\Sigma_3\setminus \Sigma_2)] \\
    &\qquad + \hmg[h \Cup (\Sigma_3\setminus \Sigma_2)|h\Cup \Sigma_1] + \hmg[h\Cup \Sigma_1|h] \nonumber\\
    &< 0\nonumber~.
\end{align}
Hence $f_3$ satisfies property III.
\end{proof}

\begin{thm}\label{thm:emaxextremal}
    $\emax(a)$ is an accessible throat.
\end{thm}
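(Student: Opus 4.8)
The plan is to show that $\emax(a)$, which is already known to be accessible by Theorem~\ref{lem:emaxpropshare}, is \emph{both} future- and past-marginally accessible from $a$ in the sense of Def.~\ref{def:minimar}. By the symmetry of the argument in the time direction, it suffices to prove that $\emax(a)$ is future-marginally accessible, i.e., in addition to being accessible it is future-noncontracting. So the core of the proof is: assume for contradiction that $\emax(a)$ is \emph{not} future-noncontracting, i.e., there exists a proper past-directed outward null deformation $b \supsetneq \emax(a)$, with $\eth b \subset \eth \emax(a) \cup [H^-(\emax(a)') \cap O]$ for every open $O \supset \eth \emax(a)$, such that $b$ is future-nonexpanding at all points of $\eth b \setminus \eth \emax(a)$. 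The strategy is to show that such a $b$ would itself be accessible from $a$, and strictly larger than $\emax(a)$, contradicting the definition of $\emax(a)$ as the wedge union of \emph{all} accessible wedges (Def.~\ref{emaxdef}).

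First I would set up the candidate wedge: since $b$ is a past-outward deformation of $\emax(a)$ along its past Cauchy horizon, with new edge points where $b$ is future-nonexpanding, one expects $b$ to still be antinormal at points of $\eth b \setminus \eth a$. The past-nonexpansion at the new edge points is essentially automatic: these points lie in the interior of the null congruence $H^-(\emax(a)')$ swept out, where no past-outward deformation is possible, analogously to the Lemma asserting that null portions of an edge are past- and future-nonexpanding; so only future-nonexpansion is a real condition there, and that is exactly what we assumed of $b$. The old edge points $\eth b \cap \eth \emax(a)$ inherit antinormality from $\emax(a)$ (which is antinormal on $\eth \emax(a) \setminus \eth a$ by Theorem~\ref{lem:emaxpropshare}), so property II of Def.~\ref{def:minimizing} holds for $b$ relative to $a$. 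Property I ($b \supset a$, $\tilde\eth b = \tilde\eth a$) is inherited since the deformation has compact support away from conformal infinity and $\emax(a) \supset a$.

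The real work is property III: we must produce a Cauchy slice $\Sigma_b$ of $b$ with $\Sigma_b \supset \eth a$ such that $\hmg(b|h) < 0$ for every wedge $h$ with $a \subset h \subsetneq b$, $\eth h \subset \Sigma_b$, $\eth h \setminus \eth b$ compact. I would build $\Sigma_b$ from the slice $\Sigma$ guaranteed for $\emax(a)$ by continuing it along $H^-(\emax(a)')$ out to $\eth b$ (the analogue of Eq.~\eqref{eq:unionslice}). Given such an $h \subsetneq b$, let $h_0 = h \cap \emax(a)$ (or more precisely the wedge obtained by intersecting with $\emax(a)$); then property III for $\emax(a)$ gives $\hmg(\emax(a)|h_0) < 0$, and Discrete Subadditivity (Conj.~\ref{conj:dsa}) upgrades this to $\hmg(\emax(a) \Cup (\text{piece of } \Sigma_b) \,|\, h) < 0$ after wedge-unioning with the appropriate horizon piece. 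The remaining step, advancing from $\emax(a)$'s slice out to $\eth b$ along $H^-(\emax(a)')$, is controlled by the future-nonexpansion of $b$ on $\eth b \setminus \eth \emax(a)$: by the Bousso Bound / Persistence of Nonexpansion (Theorem~\ref{thm:persistence}) applied in the past direction along $L^-$, this contributes $\leq 0$, and one chain-rules (Eq.~\eqref{eq:chainMaxMaxMax}) the two pieces together to get $\hmg(b|h) < 0$. The subtlety — and I expect this to be the main obstacle — is bookkeeping the wedge unions and the slice gluing so that the strict inequality is genuinely preserved (Conj.~\ref{conj:dsa}(ii) only propagates strictness, and one must make sure the piece where the $<$ originates is not eaten by the gluing), and handling the case where $h$'s edge wanders onto the new horizon region rather than staying within $\emax(a)$. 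Once $b \in F(a)$ is established, $b \supsetneq \emax(a) = \Cup_{f \in F(a)} f \supseteq b$ is the desired contradiction, so $\emax(a)$ is future-noncontracting; with the past-directed mirror argument, $\emax(a)$ is a throat accessible from $a$.
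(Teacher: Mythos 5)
Your overall strategy is the same as the paper's: assume $\emax(a)$ fails to be future-noncontracting, show the offending past-outward deformation $b\supsetneq\emax(a)$ is itself accessible from $a$, and contradict Def.~\ref{emaxdef}. However, there is a genuine gap in your treatment of property II. You claim past-nonexpansion of $b$ at the new edge points is ``essentially automatic'' because no past-outward deformation is possible there, by analogy with the lemma on null portions of an edge. That analogy fails: the new edge points are (generically spacelike) cross-sections of the null hypersurface $H^-(\emax(a)')$, not null segments contained in $\eth b$. The generators of $H^-(\emax(a)')$ through these points generically continue to the past and generate $H^-(b')$ there, so past-outward deformations of $b$ do exist and the past version of Eq.~\eqref{eq:futnonexp} is a nontrivial condition. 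The paper instead derives past-nonexpansion of the deformation from Discrete Max-Focusing (persistence in the past direction), using that $\emax(a)$ is antinormal on the open set $\eth\emax(a)\setminus\eth a$ and, crucially, exploiting the freedom in the negation of ``noncontracting'' to choose the neighborhood $O$ so that the deformation is anchored on the antinormal portion of $\eth\emax(a)$. (Your phrasing ``$\eth b\subset\eth\emax(a)\cup[H^-(\emax(a)')\cap O]$ for every open $O$'' also misquantifies that negation: for every $O$ there exists some such $b$ supported in $O$, and one then picks $O$ judiciously.)

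Relatedly, in property III the step advancing from $\smax$ out to $\eth b$ along $H^-(\emax(a)')$ is not ``controlled by the future-nonexpansion of $b$''; the hypothesis needed for the past-directed Conjecture~\ref{conj:qfc} (equivalently Theorem~\ref{thm:persistence}) is the past-nonexpansion of the base wedge $\emax(a)$ on the edge portion where the deformation originates, i.e.\ again its antinormality from Theorem~\ref{lem:emaxpropshare} together with the choice of $O$. With that ingredient supplied, the paper's chain is essentially what you sketch: $\hmg[\emax(a)|h\cap\emax(a)]<0$ from property III, Discrete Subadditivity gives $\hmg[h\Cup\emax(a)|h]<0$, the past QFC gives $\hmg[b\Cup\emax(a)|h\Cup\emax(a)]\leq 0$, and the chain rule Eq.~\eqref{eq:chainMaxMaxMax} preserves the strict inequality, yielding $\hmg[b|h]<0$ (so your worry about strictness is handled, since the strict term comes from Discrete Subadditivity and only a $\leq 0$ term is added). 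In short, the architecture is right, but the ``automatic'' past-nonexpansion claim is false as stated, and both it and the property-III bound must be routed through the antinormality of $\emax(a)$ plus Discrete Max-Focusing, as in the paper.
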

\begin{proof}
We will show that $\emax(a)$ is future-marginally accessible; the proof that it is past-marginally accessible is analogous. Suppose for contradiction that $\emax(a)$ is antinormal on $\eth \emax(a)\setminus\eth a$ but not future-marginal. Then for any open set $O\supset\eth \emax(a)$, there exists a future-nonexpanding past-directed outward deformation $f_1$ of $\emax(a)$ such that $\eth f_1\subset\eth\emax(a)\cup [O\cap H^-(\emax(a)')]$. Since $\emax(a)$ is antinormal on the open set $\eth\emax(a)\setminus\eth a$, $O$ can be chosen such that $\emax(a)$ is antinormal on $O\cap\eth\emax(a)$. Then by Discrete Max-Focusing, $f_1$ is also past-nonexpanding. Hence $f_1$ satisfies properties I and II of Def.~\ref{def:minimizing}. We will now show that $f_1$ satisfies property III with Cauchy surface $\Sigma_1 = \smax\cup H^-(\emax(a)')\cap f_1$, where $\smax$ is the Cauchy surface of $\emax(a)$ on which property III is satisfied. 

Let $h\subset f_1$ with $\eth h\subset \Sigma_1$, $h\supset a$. By Property III of $\emax(a)$, 
\begin{equation}
    \hmg[\emax(a)|h\cap \emax(a)]<0~,
\end{equation} 
By Discrete Subadditivity, Conj.~\ref{conj:dsa}:
\begin{equation}
    \hmg[h\Cup \emax(a)|h]<0~.
\end{equation}
By Discrete Max-Focusing, Conj.~\ref{conj:qfc},
\begin{equation}
    \hmg[f_1\Cup \emax(a)|h\Cup \emax(a)] \leq 0~,
\end{equation}
By the chain rule Eq.~\eqref{eq:chainMaxMaxMax}, the previous two equations imply
\begin{equation}
    \hmg[f_1\Cup \emax(a)|h] < 0~.
\end{equation}
Since $f_1 \not\subset \emax(a)$, the fact that it satisfies all three properties contradicts the definition of $\emax(a)$.

The second part of the theorem follows since $\emax(a)$ is antinormal on $\eth\emax(a)\setminus\eth a$, by Theorem~\ref{lem:emaxpropshare}.
\end{proof}

\begin{defn}[min entanglement Wedge of a Gravitating Region]\label{emindef}
    Given a wedge $a$, let $G(a)\equiv \set{g: \mathrm{i}\, \wedge\, \mathrm{ii} \,\wedge\, \mathrm{iii}}$ be the set of all wedges that satisfy the following properties:
    \begin{enumerate}[i.]
    \item $g\supset a$;
    \item $g$ is normal; 
    \item $g'$ admits a Cauchy slice $\Sigma'$ such that for any wedge $h\supsetneq g$ with $\eth h\subset \Sigma'$ and $\eth h\setminus \eth g$ compact in $M$,
    \begin{equation}
        \hmg(g'|h')< 0~.
    \end{equation}
    \end{enumerate}
    The {\em smooth conditional min entanglement wedge} of $a$, $\emin(a)$, is their intersection:
    \begin{equation}\label{eq:emindef}
      \emin(a)\equiv \cap_{g\in G(a)}\, g~.
    \end{equation}
\end{defn}

\begin{thm}\label{lem:eminpropshare}
$\emin(a)\in G(a)$. 
\end{thm}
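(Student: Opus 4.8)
\textbf{Proof proposal for Theorem~\ref{lem:eminpropshare}.}

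The plan is to prove this by complementation, reducing it to Theorem~\ref{lem:emaxpropshare} (that $\emax$ is accessible). The key observation is that properties i--iii defining $G(a)$ are precisely the complement-wedge versions of properties I--III defining ``accessible from $a$'': if $g\in G(a)$ then $g'$ is antinormal on $\eth g'\setminus\eth a'$ (this is property ii, normality of $g$, combined with the fact that $\eth g' = \eth g$ and the input region matches up), and property iii of $g$ is exactly property III of Def.~\ref{def:minimizing} applied to the wedge $g'$ with the roles of ``larger'' and ``smaller'' wedges interchanged via $h\mapsto h'$. So I would first set up a clean dictionary lemma: $g\in G(a)$ if and only if $g'$ is ``co-accessible'' in the appropriate sense, and the family $\{g' : g\in G(a)\}$ maps under complementation to a family of wedges to which the union-type results of Sec.~\ref{sec:minimizing} apply.

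Next I would mirror the three-part structure of the proof of Theorem~\ref{lem:emaxpropshare}. For property i, note $g=a\Cup(\text{something})$; more simply, $g\supset a$ is preserved under intersections, so since $G(a)$ is nonempty (one must exhibit a member — the natural candidate is a maximal normal wedge, or argue that some $g$ with $\eth g$ close to $\eth a$ works) the intersection $\emin(a)$ contains $a$. For property ii, I would use Corollary~\ref{lem:win} (the intersection of two normal wedges is normal), applied inductively — but with the refinement that normality need only hold away from $\eth a$, so I would invoke the ``relative'' version, i.e., the complement of Corollary~\ref{cor:wua}, to conclude $\emin(a)$ is normal at points of $\eth\emin(a)\setminus\eth a$ and handle the $\eth a$ portion separately (there $g\supset a$ forces agreement). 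For property iii, I would build an explicit Cauchy slice for $(\emin(a))'$ from the slices $\Sigma_1', \Sigma_2'$ of $g_1', g_2'$ by the same gluing formula as Eq.~\eqref{eq:unionslice} (applied in the complement), then run the same chain: apply property iii of each $g_i$, then Discrete Subadditivity (Conj.~\ref{conj:dsa}), then Discrete Max-Focusing (Conj.~\ref{conj:qfc}) to the complement wedges, then the chain rule Eq.~\eqref{eq:chainMaxMaxMax} — all of which, after translating through $\hmg(g'|h') = -\hmingen(g|h)$ and the definitions, are the mirror images of the steps already carried out for $\emax$.

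The main obstacle I expect is bookkeeping the direction-reversal correctly: $\emin$ is an \emph{intersection} of wedges that are \emph{normal} (not antinormal), so the ``outward'' deformations relevant to Discrete Max-Focusing become inward deformations of the $g$'s, i.e., outward deformations of the $g'$'s along their \emph{past} Cauchy horizons in one place and future in another, and one must be careful that the gluing of Cauchy slices in the complement picks the geometrically correct null portions. The cleanest route is probably not to redo everything by hand but to state and prove once a ``complementation principle'' — that $k$ is accessible from $a$ iff $k'$ satisfies i--iii relative to $a$ with all entropies and horizons dualized — and then deduce $\emin(a)' = \Cup_{g\in G(a)} g'$ is accessible (from a suitable complement region) directly from Theorems~\ref{lem:emaxpropshare}, so that $\emin(a)\in G(a)$ follows by taking complements. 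Verifying that $\emin(a)' = \Cup_{g} g'$ requires only De Morgan for wedge union/intersection, $(\cap g)' = \Cup g'$, which is immediate from Def.~\ref{def:wedgeunion}. I would then note, as in the $\emax$ case, that nonemptiness of $G(a)$ is what makes the inductive argument (and hence the conclusion) non-vacuous, and flag that the base case — exhibiting one explicit normal wedge satisfying iii — may need a short separate argument, e.g. taking $g$ with edge on a Cauchy slice containing $\eth a$ and close enough that strict inequality in iii holds by the area term dominating.
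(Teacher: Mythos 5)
Your fallback sketch --- mirroring the three-part proof of Theorem~\ref{lem:emaxpropshare} with the glued Cauchy slice for $g_3'=(g_1\cap g_2)'$ and the chain ``property iii of each $g_i$, then Discrete Subadditivity (Conj.~\ref{conj:dsa}), then Discrete Max-Focusing (Conj.~\ref{conj:qfc}), then the chain rule Eq.~\eqref{eq:chainMaxMaxMax}'' --- is exactly the paper's proof of property iii, and property ii is handled as you say via Corollary~\ref{lem:win} (note that no ``relative'' version is needed: Def.~\ref{emindef}~ii demands normality at \emph{all} edge points, so the intersection lemma applies directly and there is no separate $\eth a$ portion to treat). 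However, your headline strategy --- a ``complementation principle'' reducing the statement to Theorem~\ref{lem:emaxpropshare} --- has a genuine gap: the properties i--iii of $G(a)$ are \emph{not} exact duals of properties I--III of Def.~\ref{def:minimizing}. There is no complement input wedge specified (what plays the role of $a$ for $g'$?); condition I contains $\tilde\eth f=\tilde\eth a$, which has no counterpart in Def.~\ref{emindef}; condition II requires antinormality only on $\eth f\setminus\eth a$ while ii requires normality everywhere; and condition III requires $\Sigma\supset\eth a$ and $a\subset h$, restrictions absent from iii. The paper itself flags that defining $\emin$ as the complement of a suitable $\emax$ requires a modified condition I and is deferred to forthcoming work~\cite{BoussoKaya}, so the dictionary lemma you treat as routine is precisely what is not available here; within this paper the direct argument is the only one that closes.

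The second concrete gap is the base case. Nonemptiness of $G(a)$ is what makes the inductive argument non-vacuous, and the paper's choice is simply $g=M$: the whole spacetime has empty edge, so it is vacuously normal, and $M'=\varnothing$ makes property iii vacuous. Your proposed candidates --- ``a maximal normal wedge'' or ``some $g$ with $\eth g$ close to $\eth a$'' --- are not obviously members of $G(a)$: a wedge with edge slightly outside $\eth a$ need not be normal (its complement need not be past- and future-nonexpanding), and establishing the strict inequality in iii by ``the area term dominating'' is not automatic, since the deformations $h\supsetneq g$ allowed in iii include ones whose area change is arbitrarily small. So the base case you flag as ``may need a short separate argument'' does need one, and the correct resolution is the trivial member $g=M$ rather than the candidates you suggest.
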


\begin{proof} 
We must show that $\emin(a)$ satisfies properties i--iii listed in Def.~\ref{emindef}.

\emph{Property i:} $g=M$ trivially satisfies properties i--iii, so $G(a)$ is nonempty. Property i then implies $\emin(a)\supset a$.

\emph{Property ii:} The intersection of two normal wedges is normal by Lemma~\ref{lem:win}. Hence $\emin(a)$ is normal by Def.~\ref{emindef}.

\emph{Property iii:} Let $g_1,g_2\in G(a)$ with property iii satisfied by Cauchy slices $\Sigma'_1$ and $\Sigma'_2$, respectively; and let $g_3=g_1\cap g_2$. Then $g_3'$ admits the Cauchy slice
\begin{equation}\label{eq:sminunion}
  \Sigma'_3 = \Sigma'_1 \cup  [H^+(g_1)\cap \mathbf{J}^-(\Sigma'_2)] 
  \cup [H^-(g_1)\cap \mathbf{J}^+(\Sigma'_2)] \cup [\Sigma'_2 \cap g_1]~.
\end{equation}
Let $h\supset g_3$, $\eth h\subset \Sigma'_3$. By property iii of $\Sigma'_1$ and $\Sigma'_2$, 
\begin{align}
    \hmg[\Sigma_1'|h'\cap\Sigma_1'] & < 0~,\\
    \hmg[\Sigma'_2|h'\cap\Sigma'_2] & < 0~.
\end{align}
Then Discrete Subadditivity (Conj.~\ref{conj:dsa}) implies
\begin{align}
    \hmg[h'\Cup \Sigma_1'|h'] & < 0~, \\
    \hmg[\Sigma'_3|h' \Cup (\Sigma_3'\setminus \Sigma_2')] & < 0~.
\end{align}
By Discrete Max-Focusing (Conj.~\ref{conj:qfc}),
\begin{equation}
    \hmg[h' \Cup (\Sigma_3'\setminus \Sigma_2')|h'\Cup \Sigma_1'] \leq 0~.
\end{equation}
Summing the above three inequalities and using the chain rule Eq.~\eqref{eq:chainMaxMaxMax}, we obtain $\hmg[\Sigma'_3|h']< 0$.
Hence $g_3$ satisfies property iii.
\end{proof}

\begin{thm}
    $\emax (a) \subset\emin(a)$
\end{thm}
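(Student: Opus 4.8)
The plan is to show that $\emax(a)$, which we already know from Theorem~\ref{lem:emaxpropshare} is accessible from $a$ (i.e.\ lies in $F(a)$), is also ``normal enough'' to be compared with every $g\in G(a)$, and then to argue that if $\emax(a)\not\subset\emin(a)$ we could build a strictly larger accessible wedge, contradicting the definition of $\emax(a)$ as the wedge union of all of $F(a)$. Concretely, fix an arbitrary $g\in G(a)$; it suffices to prove $\emax(a)\subset g$, since $\emin(a)$ is the intersection over all such $g$. Equivalently, I want to show that the wedge $\emax(a)\cap g$ already equals $\emax(a)$.

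The key step is a no-cloning / nesting argument run between the accessible wedge $f=\emax(a)$ and the element $g\in G(a)$. First I would consider the wedge $f\cap g$ and the ``slab'' wedges interpolating between $a$ and $f$ inside $g$, using the Cauchy slice $\smax$ supplied by property III of $\emax(a)$ together with the Cauchy slice $\Sigma'$ of $g'$ from property iii. The heart of the matter is to assume for contradiction that $f\not\subset g$, so that $f$ genuinely pokes outside $g$; then $f\cap g$ is a proper sub-wedge of $f$ containing $a$, with edge on an appropriate Cauchy slice and with $\eth(f\cap g)\setminus\eth f$ compact. Property III of $\emax(a)$ then forces $\hmg(f\,|\,f\cap g)<0$. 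On the other side, property iii of $g$ (rewritten via $\hmingen(g|f\cap g)=-\hmg((f\cap g)'|g')$ and the relation $\A(x)=\A(x')$) forces the min-entropy analogue $\hmingen(g\,|\,f\cap g)>0$, i.e.\ a \emph{strictly positive} conditional entropy for enlarging $f\cap g$ up to $g$ in the complementary direction. Using Conjecture~\ref{conj:minLeqMax} ($\hmingen\le\hmg$) and a chain rule from Conjecture~\ref{conj:chain} to combine the deformation from $f\cap g$ to $f$ with the deformation from $f\cap g$ to $g$, together with Discrete Max-Focusing along the lightsheets joining these nested wedges, one obtains an inequality that cannot hold unless $f\subset g$.

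More explicitly, the chain I would try is: take $h=f\cap g$; note $f$ and $g$ are both wedges containing $h$ and, after passing to a common Cauchy slice, $f$ is a spacelike (outward) deformation of $h$ and so is $g$ in the complementary sense. Apply the mixed chain rule Eq.~\eqref{eq:chainMaxMaxMin} in the form $\hmg(f\Cup g\,|\,h)\ge\hmg(f\Cup g\,|\,g)+\hmingen(g\,|\,h)$, bound $\hmg(f\Cup g\,|\,g)$ using antinormality of $f$ on $\eth f\setminus\eth a$ (so the deformation $g\to f\Cup g$ is along nonexpanding lightsheets, giving $\le 0$ by Discrete Max-Focusing), and bound $\hmingen(g\,|\,h)>0$ from property iii of $g$; this yields $\hmg(f\Cup g\,|\,h)>0$. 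But the same quantity, viewed as a deformation $f\to f\Cup g$ on top of $h\to f$, is controlled by property III of $\emax(a)$ together with normality of $g$ (so $g\to f\Cup g$ is antinormal-directed from $g$'s side), giving $\hmg(f\Cup g\,|\,h)=\hmg(f\Cup g\,|\,f)+\hmg(f\,|\,h)<0$ via the ordinary chain rule Eq.~\eqref{eq:chainMaxMaxMax} and Discrete Max-Focusing. The contradiction $0<\hmg(f\Cup g\,|\,h)<0$ then shows $f\subset g$, hence $\emax(a)\subset g$ for all $g\in G(a)$, hence $\emax(a)\subset\emin(a)$.

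The main obstacle I anticipate is bookkeeping the Cauchy slices and verifying the hypotheses of Discrete Max-Focusing and of properties III/iii precisely at each invocation: one must check that $\eth(f\cap g)\setminus\eth f$ (resp.\ the relevant edge difference for $g$) is genuinely compact in $M$, that all the interpolating wedges $h$ are honest wedges with edges on the stipulated slices, and that the lightsheet containments $\eth b\cup\eth c\subset\eth a\cup L^\pm(a)$ required by Conjecture~\ref{conj:qfc} actually hold along the deformations $g\to f\Cup g$ and $f\to f\Cup g$ — this is where antinormality of $f$ on $\eth f\setminus\eth a$ and normality of $g$ do the real work, and it is exactly the kind of ``outward-directed'' compatibility the paper emphasizes. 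A secondary subtlety is ensuring the mixed max/min chain rule is applied with the correct inequality direction and the correct roles of the complemented wedges, since $\hmingen(g|h)=-\hmg(h'|g')$ flips several containments; I would state this conversion once at the start and then keep all subsequent steps in terms of $\hmg$ and $\hmingen$ uniformly.
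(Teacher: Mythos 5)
Your overall strategy (fix an arbitrary $g\in G(a)$, set $f=\emax(a)$, $h=f\cap g$, and derive a contradiction from a chain of conditional-entropy inequalities) is in the right spirit, but the explicit chain you give breaks at two points. First, a direction-of-inequality error: from $\hmg(f\Cup g|h)\ge \hmg(f\Cup g|g)+\hmingen(g|h)$ with $\hmg(f\Cup g|g)\le 0$ and $\hmingen(g|h)>0$ you cannot conclude $\hmg(f\Cup g|h)>0$; the lower bound is the sum of a nonpositive and a positive term and has no definite sign. Second, the bounds $\hmg(f\Cup g|g)\le 0$ and $\hmg(f\Cup g|f)\le 0$ are not instances of Discrete Max-Focusing: the edge of $f\Cup g$ contains pieces of $\eth g\cap f'$ (respectively $\eth f\cap g'$) that are \emph{spacelike} to $f$ (respectively to $g$), so the enlargement is not a null deformation along anyone's lightsheet, and a spacelike enlargement generically \emph{increases} the area term. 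The union-of-antinormal-wedges argument (Theorem~\ref{thm:wua}) cannot rescue this either, since $g$ is normal, not antinormal. This is precisely why the paper never compares $f\Cup g$ or $f\cap g$ directly: it instead builds interpolating wedges such as $\emin\cap\smax$ and $(\emax'\cap\smin)'$, reached from $\emax$ and from $\emin'$ by genuinely null outward deformations, so that Conjecture~\ref{conj:qfc} applies.

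Relatedly, what you defer as ``bookkeeping'' is actually the substantive content. Property III of $\emax(a)$ only yields $\hmg(f|h)<0$ when $\eth h\subset\smax$ (with $a\subset h$ and the edge difference compact), and property iii of $g$ only applies to comparison wedges whose edges lie on the slice $\Sigma'$ of $g'$; in general $\eth(f\cap g)$ lies on neither slice. The paper's proof is organized exactly around this obstruction: special case 1 (common slice), special case 2 (causally separated slices), and the general case, in which Discrete Max-Focusing, Discrete Subadditivity, the duality $\hmg(x|y)=-\hmingen(y'|x')$, and $\hmingen\le\hmg$ are used to transport the comparison onto wedges whose edges do sit on $\smax$ and $\smin$, producing the contradiction $0>0$ unless $\emin\cap\smax=\emax$ and $(\emax'\cap\smin)'=\emin$. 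So the proposal as written has genuine gaps: the contradiction does not follow from the stated inequalities, and the key QFC and property-III/iii invocations are applied to wedges that do not satisfy their hypotheses.
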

\begin{proof}
    Since only one input wedge $a$ is involved, we suppress the argument of $\emax$ and $\emin$. We will first consider two special cases for pedagogical purposes, then we treat the general case.
    
    Special case 1: Suppose that $\smax$ and $\smin$ lie on a common Cauchy slice. By property III, 
    \begin{equation}
        0 > \hmg[\emax | (\emin \cap \smax)]~.
    \end{equation}
    Then by Discrete Subadditivity (Conj.~\ref{conj:dsa}), Duality (Prop 3.12 of \cite{Akers:2023fqr}), Theorem 2.32 of \cite{Akers:2023fqr}, and property iii, respectively,
    \begin{align}
        0 &> \hmg[(\emax' \cap \smin)'|\emin] \\
        &= -\hmingen[\emin'|\emax' \cap \smin] \\
        &\geq -\hmg[\emin'|\emax' \cap \smin] \\
        &> 0~.
    \end{align}
    
    Special case 2: Suppose that $\smax\subset J(\smin)$, i.e., $\smax$ and $\smin$ are everywhere causally separated. By Discrete Max-Focusing (Conj.~\ref{conj:qfc}),
    \begin{align}
        \hmg[(\emax' \cap \smin)'|\emax] &\leq 0~, \\
        \hmg[(\emin \cap \smax)'|\emin'] &\leq 0~.
    \end{align}
    Then,
    \begin{align}
        0 &>\hmg[(\emax{}' \cap \smin)'|\emax] + \hmg[\emax|\emin \cap \smax] \\
        &\geq \hmg[(\emax' \cap \smin)'|\emin \cap \smax] \\
        &= -\hmingen[(\emin \cap \smax)'|\emax' \cap \smin] \\
        &\geq -\hmg[(\emin \cap \smax)'|\emax' \cap \smin] \\
        &\geq -\hmg[(\emin \cap \smax)'|\emin'] - \hmg[\emin'|\emax' \cap \smin] \\
        &> 0~,
    \end{align}
    where the second and fifth lines follow from the chain rule, and the first and sixth line follow from a combination of QFC and properties III and iii.
    
    General proof: we note that
    \begin{equation}
        ((\emax' \cap \smin)\Cup(\emax'\cap\emin))' = (\emax' \cap \smin)'\cap(\emax'\cap\emin)'~.
    \end{equation}
    By Discrete Max-Focusing (Conj.~\ref{conj:qfc}),
    \begin{align}
        \hmg[(\emax' \cap \smin)'\cap(\emax'\cap\emin)'|\emax] &\leq 0 \\
        \hmg[(\emin \cap \smax)'\cap(\emin\cap\emax')'|\emin'] &\leq 0~.
    \end{align}
    Then
    \begin{align*}
        0 &>\hmg[(\emax' \cap \smin)'\cap(\emax'\cap\emin)'|\emax] + \hmg[\emax|\emin \cap \smax] \\
        &\geq \hmg[(\emax' \cap \smin)'\cap(\emax'\cap\emin)'|\emin \cap \smax] \\
        &= -\hmingen[(\emin \cap \smax)'|((\emax' \cap \smin)'\cap(\emax'\cap\emin)')'] \\ 
        &\geq -\hmg[(\emin \cap \smax)'|(\emax' \cap \smin)\Cup(\emax'\cap\emin)]~.
    \end{align*}
    By Discrete Subadditivity (Conj.~\ref{conj:dsa}),
    \begin{align*}
        0&> -\hmg[(\emin \cap \smax)'\cap(\emin\cap\emax')'|\emax' \cap \smin] \\
        &\geq -\hmg[(\emin \cap \smax)'\cap(\emin\cap\emax')'|\emin'] -\hmg[\emin'|\emax' \cap \smin] \\
        &\geq 0~.
    \end{align*}
    This is a contradiction, unless $\emin \cap \smax = \emax$ and $(\emax' \cap \smin)'=\emin$. Hence $\emax\subset\emin$.
\end{proof}

\begin{thm}[Nesting of $\emin$]\label{thm:nesting}
For wedges $a$ and $b$,
\begin{equation}
    a\subset b \implies \emin(a)\subset \emin(b)~.
\end{equation}
Moreover, $\smin(a)$ can be chosen so that
\begin{equation}
    \smin(a)\supset\smin(b)~.
\end{equation}
\end{thm}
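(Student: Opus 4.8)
The plan is to prove both statements at once by showing that the intersection $g_3 := \emin(a)\cap\emin(b)$ is a member of $G(a)$ (Def.~\ref{emindef}), keeping track of the Cauchy slice produced along the way. By Theorem~\ref{lem:eminpropshare} and Eq.~\eqref{eq:emindef}, $\emin(a)$ is the minimal element of $G(a)$: it lies in $G(a)$ and is contained in every member of $G(a)$. So once $g_3\in G(a)$ is established, minimality forces $\emin(a)\subset g_3=\emin(a)\cap\emin(b)$, hence $\emin(a)=\emin(a)\cap\emin(b)$ and therefore $\emin(a)\subset\emin(b)$, which is the first claim.

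First I would check properties i and ii for $g_3$, which are immediate: $g_3$ is a wedge (intersections of wedges are wedges, Def.~\ref{def:covwedge}); $g_3\supset a$ since $\emin(a)\supset a$ and $\emin(b)\supset b\supset a$; and $g_3$ is normal because $\emin(a)$ and $\emin(b)$ are normal (being in $G(a)$ and $G(b)$ respectively) and the intersection of normal wedges is normal by Corollary~\ref{lem:win}. Property iii is the substantive point, but it is handled verbatim by the ``Property iii'' argument in the proof of Theorem~\ref{lem:eminpropshare}: pick a Cauchy slice $\smin(b)$ of $\emin(b)'$ witnessing property iii for $\emin(b)\in G(b)$ and a Cauchy slice $\Sigma'$ of $\emin(a)'$ witnessing it for $\emin(a)\in G(a)$; set $g_1:=\emin(b)$ with $\Sigma'_1:=\smin(b)$ and $g_2:=\emin(a)$ with $\Sigma'_2:=\Sigma'$; form the glued Cauchy slice $\Sigma'_3$ of $g_3'=g_1'\Cup g_2'$ exactly as in Eq.~\eqref{eq:sminunion}; and run the same sequence of Discrete Subadditivity (Conj.~\ref{conj:dsa}), Discrete Max-Focusing (Conj.~\ref{conj:qfc}), and chain-rule Eq.~\eqref{eq:chainMaxMaxMax} inequalities. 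This gives $\hmg[\Sigma'_3|h']<0$ for all admissible $h$, so $g_3$ satisfies property iii with witness $\Sigma'_3$, and hence $g_3\in G(a)$.

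For the ``moreover'' statement, the point is that once $g_3=\emin(a)$ (from the previous paragraph), the slice $\Sigma'_3$ just constructed is a Cauchy slice of $\emin(a)'$ witnessing property iii for $\emin(a)$, so I can take $\smin(a):=\Sigma'_3$; and since Eq.~\eqref{eq:sminunion} builds $\Sigma'_3$ from $\Sigma'_1$ together with further pieces, with $\Sigma'_1=\smin(b)$, this choice satisfies $\smin(a)\supset\smin(b)$.

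The hard part is really just bookkeeping: the glued slice of Eq.~\eqref{eq:sminunion} automatically contains its first argument $\Sigma'_1$ but not a priori its second, so one must assign $g_1=\emin(b)$ (not $g_2=\emin(b)$) in order that $\smin(b)$ be the slice that persists. One should also observe --- this is what licenses applying Theorem~\ref{lem:eminpropshare}'s argument with $g_1\in G(b)$ rather than $g_1\in G(a)$ --- that the ``Property iii'' part of that proof never uses membership of $g_1,g_2$ in a common collection $G(\cdot)$; it uses only that each is a wedge satisfying property iii with its own Cauchy slice, which holds here.
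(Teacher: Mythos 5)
Your argument is correct, and it is genuinely a different route from the paper's, which simply defers to Theorem 27 of Ref.~\cite{Bousso:2023sya} rather than giving a proof in the present framework. Your key observation --- that properties ii and iii of Def.~\ref{emindef} make no reference to the input wedge, so the ``Property iii'' gluing in the proof of Theorem~\ref{lem:eminpropshare} applies to any two wedges each equipped with its own witness slice --- is exactly what makes the argument go through, and your bookkeeping (taking $g_1=\emin(b)$ so that $\Sigma'_1=\smin(b)$ is the piece of Eq.~\eqref{eq:sminunion} that survives intact) is what delivers the ``moreover'' clause $\smin(a)\supset\smin(b)$. Two remarks. First, for the inclusion $\emin(a)\subset\emin(b)$ alone there is an even shorter route implied by the same observation: since only property i mentions the input region, any $g\in G(b)$ satisfies $g\supset b\supset a$ and hence lies in $G(a)$, so $G(b)\subset G(a)$ and Eq.~\eqref{eq:emindef} immediately gives $\emin(a)\subset\emin(b)$; however, that shortcut does not by itself produce a property-iii slice of $\emin(a)'$ containing $\smin(b)$, so your intersection-plus-gluing construction is still the natural way to get the nested Cauchy slices. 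Second, you inherit the schematic aspects of the paper's own property-iii argument (the loose notation $\hmg[\Sigma'_1|h'\cap\Sigma'_1]$, and the strict-versus-nonstrict inequality when the deformation $h$ touches only one of the two original slices); these are features of Theorem~\ref{lem:eminpropshare}'s proof rather than new gaps introduced by you.
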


\begin{proof}
    The proof is identical to the proof of Theorem 27 in  Ref.~\cite{Bousso:2023sya}.
\end{proof}

\begin{cor}
    If $a\subset b$ and $\eth \emin(b)\setminus \emin(a)$ is compact, then 
    \begin{equation}
        \hmg[\emin(a)'|\emin(b)']\leq 0~.
    \end{equation}
\end{cor}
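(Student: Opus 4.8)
The plan is to extract this as a corollary of the nesting theorem (Theorem~\ref{thm:nesting}) together with property iii of the min entanglement wedge and Discrete Subadditivity. First I would invoke Theorem~\ref{thm:nesting} to obtain $\emin(a)\subset\emin(b)$, and moreover to fix the Cauchy slices so that $\smin(a)\supset\smin(b)$. The key observation is then that $\emin(b)$ plays the role of the wedge ``$h$'' in property iii of Definition~\ref{emindef} as applied to $\emin(a)$: we have $\emin(b)\supsetneq\emin(a)$ (or they are equal, in which case the statement is trivial), and the edge $\eth\emin(b)$ should lie on the chosen Cauchy slice $\smin(a)$ of $\emin(a)'$, since $\smin(a)$ was arranged to contain $\smin(b)$ and hence $\eth\emin(b)$. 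The compactness hypothesis $\eth\emin(b)\setminus\emin(a)$ compact is exactly the remaining condition needed to apply property iii.

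The main steps, in order: (1) handle the trivial case $\emin(a)=\emin(b)$ separately (then $\hmg[\emin(a)'|\emin(b)']=\hmg[\emin(a)'|\emin(a)']=0\leq 0$); (2) in the nontrivial case, apply Theorem~\ref{thm:nesting} to get the inclusion and the nested Cauchy slices; (3) verify that $h=\emin(b)$ satisfies the hypotheses of property iii of $\emin(a)$ — namely $h\supsetneq\emin(a)$, $\eth h\subset\smin(a)$, and $\eth h\setminus\eth\emin(a)$ compact (using that $\eth\emin(b)\setminus\emin(a)$ compact implies $\eth\emin(b)\setminus\eth\emin(a)$ compact, since the two differ only by the portion of $\eth\emin(b)$ meeting $\eth\emin(a)$ itself, which is closed); (4) conclude $\hmg[\emin(a)'|\emin(b)']<0$, which in particular gives the claimed $\leq 0$.

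The step I expect to require the most care is step (3): confirming that with the Cauchy-slice choice supplied by Theorem~\ref{thm:nesting}, the edge $\eth\emin(b)$ genuinely lies on the slice $\smin(a)$ witnessing property iii for $\emin(a)$. Theorem~\ref{thm:nesting} only asserts $\smin(a)\supset\smin(b)$; one must check that $\eth\emin(b)\subset\smin(b)\subset\smin(a)$, i.e., that $\smin(b)$ can be (and is) taken to be a Cauchy slice of $\emin(b)'$ that contains its edge — which is the standard setup, since in Definition~\ref{emindef} property iii the slice $\Sigma'$ of $g'$ naturally has $\eth g$ as its boundary/edge. A secondary subtlety is the translation between ``$\eth\emin(b)\setminus\emin(a)$ compact'' (the corollary's hypothesis) and ``$\eth h\setminus\eth\emin(a)$ compact'' (property iii's hypothesis); since $\emin(a)$ is open and $\eth\emin(a)$ is closed, $\eth\emin(b)\cap\eth\emin(a)$ is closed, so $\eth\emin(b)\setminus\emin(a)=(\eth\emin(b)\setminus\eth\emin(a))\cup(\eth\emin(b)\cap\eth\emin(a))$, and compactness of the left side yields compactness of the closed subset $\eth\emin(b)\setminus\emin(a)$, from which the relatively open subset argument gives what is needed — but this should be spelled out. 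Everything else is a direct invocation of the quoted results.
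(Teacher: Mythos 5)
Your overall route is the paper's route: the paper simply defers to Corollary 28 of Ref.~\cite{Bousso:2023sya}, whose argument is exactly what you outline — handle $\emin(a)=\emin(b)$ trivially, use Theorem~\ref{thm:nesting} to get $\emin(a)\subset\emin(b)$ with nested slices $\smin(b)\subset\smin(a)$, and then feed $h=\emin(b)$ into property iii of $\emin(a)$, the strict inequality there degrading to the stated $\leq 0$ only through the equality case. Your identification of the two delicate points (that $\eth\emin(b)$ must lie on the witnessing slice, and the compactness translation) is also apt.

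However, your resolution of the compactness step is not correct as written, and this is a genuine soft spot. First, since $\emin(a)\subset\emin(b)$ and $\eth\emin(b)$ is disjoint from the open set $\emin(b)$, the set $\eth\emin(b)\setminus\emin(a)$ is literally all of $\eth\emin(b)$, so the hypothesis amounts to compactness of $\eth\emin(b)$. Second, what property iii requires is compactness of $\eth\emin(b)\setminus\eth\emin(a)$, which is a \emph{relatively open} subset of $\eth\emin(b)$ (the complement of the closed set $\eth\emin(a)$), and an open subset of a compact set need not be compact — so your ``relatively open subset argument'' does not deliver the needed conclusion, and points of $\eth\emin(b)\setminus\eth\emin(a)$ accumulating on $\eth\emin(a)\cap\eth\emin(b)$ are precisely the obstruction. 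To close this, you should either read the hypothesis as supplying exactly the compactness demanded in property iii, or (more robustly) apply property iii to an exhausting family of intermediate wedges $\emin(a)\subset h_n\subset\emin(b)$ with $\eth h_n\subset\smin(a)$ and $\eth h_n\setminus\eth\emin(a)$ compact, and pass to the limit using a chain-rule/continuity argument; the latter route also explains naturally why the corollary asserts only $\hmg[\emin(a)'|\emin(b)']\leq 0$ rather than the strict inequality your step (4) would give.
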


\begin{proof}
    The proof is identical to the proof of Corollary 28 in  Ref.~\cite{Bousso:2023sya}.
\end{proof}

\begin{thm}[No Cloning]\label{thm:nocloning}
   \begin{equation}
     a \subset \emin'(b)~\mbox{and}~b \subset \emax'(a)~\implies~
     \emax(a)\subset\emin'(b)~.
  \end{equation}
\end{thm}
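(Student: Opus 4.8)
\textbf{Proof proposal for Theorem~\ref{thm:nocloning} (No Cloning).}

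The plan is to mimic the structure of the proof of $\emax(a)\subset\emin(a)$, adapting it to the case where the input wedges $a$ and $b$ are distinct but satisfy the mutual-exteriority hypotheses $a\subset\emin'(b)$ and $b\subset\emax'(a)$. The first hypothesis says that $a$ is spacelike to $\emin(b)$, hence to $\emax(b)$ as well (since $\emax(b)\subset\emin(b)$); the second says $b$ is spacelike to $\emax(a)$. These spacelike-separation facts are what let me run the chain-rule/QFC argument. First I would record that, by Theorem~\ref{thm:nesting} and the analogous nesting for $\emax$ (established implicitly in the preceding theorems), one can choose the Cauchy slices $\smax\equiv\smax(a)$ and $\smin\equiv\smin(b)$ and a common Cauchy slice $\Sigma$ of a neighborhood so that the relevant pieces $\emax(a)\cap\Sigma$ and $\emin(b)'\cap\Sigma$ (etc.) are comparable; the hypotheses guarantee $\emax(a)$ and $\emin(b)'$ overlap appropriately on $\Sigma$.

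Next I would set up the contradiction exactly as in the $\emax\subset\emin$ proof. Suppose $\emax(a)\not\subset\emin'(b)$. Then $w\equiv \emax(a)\cap\emin(b)$ is a nonempty wedge that is strictly larger than what property III/iii would allow. Using Discrete Max-Focusing (Conj.~\ref{conj:qfc}) on suitable outward deformations of $\emax(a)$ and of $\emin(b)'$ along the appropriate Cauchy horizons — here the spacelike separation $b\subset\emax'(a)$ ensures the deformation of $\emax(a)$ toward $\emin(b)$ lies along $H^\pm(\emax(a)')$, and $a\subset\emin'(b)$ ensures the complementary statement — I would produce the two inequalities
\begin{align}
    \hmg[(\emax(a)'\cap\smin)'\cap(\emax(a)'\cap\emin(b))'\,|\,\emax(a)] &\leq 0~,\\
    \hmg[(\emin(b)\cap\smax)'\cap(\emin(b)\cap\emax(a)')'\,|\,\emin(b)'] &\leq 0~.
\end{align}
Then property III of $\emax(a)$ (with input $a$, applied to the candidate $h=\emin(b)\cap\smax$, which contains $a$ precisely because $a\subset\emin'(b)$ forces $a\subset$ the complement and hence $a\subset\emax(a)\subset h$) and property iii of $\emin(b)$ (with input $b$, applied to $h=(\emax(a)'\cap\smin)'$, which contains $b$ because $b\subset\emax'(a)$) each give a \emph{strict} inequality $\hmg[\cdot|\cdot]<0$. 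Feeding these into the chain rules — the plain max chain rule Eq.~\eqref{eq:chainMaxMaxMax}, the duality $\hmingen(x|y)=-\hmg(y'|x')$ of Eq.~\eqref{eq:minMaxDef}, the mixed chain rule Eq.~\eqref{eq:chainMinMaxMin}, and Conj.~\ref{conj:minLeqMax} to pass from $\hmingen$ to $\hmg$ — yields $0>\cdots\geq 0$, a contradiction, exactly paralleling the general-case computation in the proof of $\emax(a)\subset\emin(a)$. The contradiction is avoided only if $\emin(b)\cap\smax=\emax(a)$ and $(\emax(a)'\cap\smin)'=\emin(b)$, which forces $\emax(a)\subset\emin(b)$, i.e.\ $\emax(a)\subset\emin(b)$; combined with $a\subset\emin'(b)\Leftrightarrow\emin(b)\subset a'$ and monotonicity this is the desired $\emax(a)\subset\emin'(b)$.

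The main obstacle I anticipate is bookkeeping of containments on the chosen Cauchy slices: one must verify that the auxiliary wedges $h$ used to invoke properties III and iii really do contain the correct input wedge ($a$ for $\emax$, $b$ for $\emin$), and that they are proper subsets / supersets with compact edge-difference so that the strict inequalities apply. This is where the two hypotheses $a\subset\emin'(b)$ and $b\subset\emax'(a)$ must be used in precisely the right places — $b\subset\emax'(a)$ to keep $b$ inside the $h$ relevant to $\emin(b)$'s property iii, and $a\subset\emin'(b)$ to keep $a$ inside the $h$ relevant to $\emax(a)$'s property III. A secondary subtlety is the choice of the common Cauchy slice and of $\smax$, $\smin$: one needs the freedom, granted by the "moreover" clauses in the nesting theorems, to slide these slices so that $\emax(a)\cap\Sigma$, $\emin(b)'\cap\Sigma$ and their various wedge-unions and intersections are all computed on one slice; once that is arranged, the algebra is a transcription of the already-proven general case.
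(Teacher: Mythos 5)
There is a genuine gap, and it occurs at the step that drives your whole contradiction. To invoke property III of $\emax(a)$ (Def.~\ref{def:minimizing}) you need a wedge $h\subsetneq\emax(a)$ with $a\subset h$ and $\eth h$ on the distinguished Cauchy slice. Your candidate is $h=\emin(b)\cap\smax$, and your justification that $a\subset h$ is self-defeating: the hypothesis $a\subset\emin'(b)$ says precisely that $a$ is \emph{spacelike} to $\emin(b)$, so $a\not\subset\emin(b)$ and hence $a\not\subset\emin(b)\cap\smax$; and the intermediate claim ``$a\subset\emax(a)\subset h$'' would require $\emax(a)\subset\emin(b)$, which is incompatible with the very statement being proved. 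In the theorem $\emax(a)\subset\emin(a)$ this step works only because both wedges are built from the \emph{same} input $a$, which sits inside both; with two distinct, mutually exterior inputs the strict inequality from property III is simply not available along this route. The endgame is also off: even if the inequality chain closed, ``contradiction avoided only if $\emin(b)\cap\smax=\emax(a)$'' would yield the nesting statement $\emax(a)\subset\emin(b)$, whereas the theorem asserts the spacelike-separation statement $\emax(a)\subset\emin'(b)$; the former does not imply the latter, and no appeal to monotonicity converts one into the other.

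The paper's proof is structurally different and avoids these issues. It defines $g=\emin(b)\cap\emax'(a)$ and shows $g$ satisfies properties i--iii of Def.~\ref{emindef} \emph{with input $b$}: property i from $b\subset\emax'(a)$ together with $b\subset\emin(b)$; property ii from normality of $\emin(b)$, antinormality of $\emax(a)$ (hence normality of $\emax(a)'$), and Corollary~\ref{lem:win}; property iii by gluing $\smin(b)$ and $\smax(a)$ into a Cauchy slice of $g'$ and combining property iii of $\emin(b)$, property III of $\emax(a)$, Discrete Subadditivity, Discrete Max-Focusing, and the chain rule Eq.~\eqref{eq:chainMaxMaxMax}. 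Minimality of $\emin(b)$ over $G(b)$ then forces $g=\emin(b)$, i.e.\ $\emin(b)\subset\emax'(a)$, which is equivalent to the conclusion. If you want to salvage your draft, the fix is to abandon the direct transplant of the $\emax\subset\emin$ inequality chain and instead exhibit a competitor wedge for the minimization defining $\emin(b)$, as the paper does.
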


\begin{proof}
Let 
\begin{equation}
    g=\emin(b)\cap \emax'(a)~.
\end{equation}
We will show that $g$ satisfies properties i-iii listed in Def.~\ref{emindef}. This contradicts the definition of $\emin(b)$ unless $g=\emin(b)$, which is equivalent to the conclusion.

Property i: By assumption, $b\subset \emax'(a)$. By Theorem~\ref{lem:eminpropshare}, $b\subset \emin(b)$. Hence $b\subset g$.

Property ii: By Theorem~\ref{lem:eminpropshare}, $\emin(b)$ is normal, and by Theorem~\ref{lem:emaxpropshare}, $\emax(a)$ is antinormal. Then $\emax(a)'$ is normal, and by Lemma~\ref{lem:win}, $g$ is normal. 

Property iii: Let 
\begin{multline}
    \Sigma' = \smin(b) 
    \cup \left( H^+[\emin(b)] \cap \mathbf{J}^-[\smax(a)]\right) \\
    \cup \left( H^-[\emin(b)] \cap \mathbf{J}^+[\smax(a)]\right)
    \cup [\smax(a) \cap \emin(b)]~.
\end{multline}
This is a Cauchy slice of $g'$. Let $h\supset g$ with $\eth h\subset \Sigma'$. By property iii of $\smin(b)$ and property III of $\smax(a)$
\begin{align}
    \hmg[\smin(b)|h'\cap\smin(b)] & < 0~,\\
    \hmg[\smax(a)|h'\cap\smax] & < 0~.
\end{align}
Then by Discrete Subadditivity (Conj.~\ref{conj:dsa}),
\begin{align}
    \hmg[h'\Cup \smin(b)|h'] & < 0~,\\
    \hmg[\Sigma'|h' \Cup (\Sigma'\setminus \smax(a))] & < 0~.
\end{align}
By Discrete Max-Focusing (Conj.~\ref{conj:qfc}),
\begin{equation}
    \hmg[h' \Cup (\Sigma'\setminus \smax(a))|h'\Cup \smin] \leq 0~.
\end{equation}
Finally, by the chain rule Eq.~\eqref{eq:chainMaxMaxMax}, we see that
\begin{align*}
    \hmg[\Sigma'|h'] &\leq \hmg[h'\Cup \smin|h'] + \hmg[h' \Cup (\Sigma'\setminus \smax(a))|h'\Cup \smin] \\
    &\qquad + \hmg[\Sigma'|h' \Cup (\Sigma'\setminus \smax(a))] \\
    &< 0 ~.
\end{align*}
Hence $g$ satisfies property iii.
\end{proof}

\begin{rem}
The following theorem provides powerful evidence that when $\emin(a)=\emax(a)$, the semiclassical state of the entanglement wedge is the same, at the level of the density operator, as the fundamental state of the input region $a$ (or in the special case of AdS/CFT, the state of a CFT subregion). Otherwise there would be no reason for entanglement wedges to obey strong subadditivity. In particular, strong subadditivity of the entanglement wedges is not a trivial consequence of Conjecture~\ref{conj:ssa}, since the wedges appearing on the right hand side of the following theorem need not be related to the left hand side wedges by union with a spacelike separated wedge.
\end{rem}

\begin{thm}[Strong Subadditivity of the Generalized Max and Min Entropies of Entanglement Wedges]
\label{thm:ssa}
Suppressing $\Cup$ symbols where they are obvious, let $a$, $b$, and $c$ be mutually spacelike wedges, such that
\begin{align}\nonumber
    \emin(ab) &=\emax(ab)~,~\emin(bc)=\emax(bc)~,\\ \nonumber
    \emin(b) &=\emax(b)~,~\mbox{and}~\emin(abc)=\emax(abc)~.
\end{align}
Then (writing $e$ for $\emin=\emax$)
\begin{align}
    \hmg[e(bc)|e(b)] &\geq \hmg[e(abc)|e(ab)]~;\label{eq:maxssa}\\
    \hmingen[e(bc)|e(b)] &\geq \hmingen[e(abc)|e(ab)]\label{eq:minssa}~.
\end{align}
\end{thm}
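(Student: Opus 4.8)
The plan is to mimic the structure of the entanglement-wedge strong subadditivity argument from Ref.~\cite{Bousso:2023sya}, but now carefully tracking which entropies are max and which are min, and invoking the mixed chain rules of Conjecture~\ref{conj:chain}. The starting point is the hypothesis that all four relevant max- and min-wedges coincide, so that $e(b)$, $e(ab)$, $e(bc)$, $e(abc)$ are unambiguous wedges; in particular each one is simultaneously accessible from the appropriate input and normal (being equal to its own complement-type wedge). I would first use nesting (Theorem~\ref{thm:nesting}, applied to $\emin$) together with the analogous statement for $\emax$ to pin down the containments $e(b)\subset e(ab)\subset e(abc)$ and $e(b)\subset e(bc)\subset e(abc)$, and to choose the Cauchy slices $\smin$ and $\smax$ of the various wedges compatibly (nested Cauchy slices). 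This is the geometric skeleton.

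Next, the key move for Eq.~\eqref{eq:maxssa}: I want to relate $\hmg[e(abc)|e(ab)]$ to $\hmg[e(bc)|e(b)]$. The idea is to run a ``rebuilding'' argument, exactly as in Theorem~\ref{thm:ssa}-type proofs, where one deforms $e(ab)$ outward along a lightsheet to reach (a piece of) $e(abc)$, while $e(b)$ deforms to $e(bc)$. Concretely, I would construct an interpolating wedge using $\Cup$ of $e(ab)$ with $e(bc)$, produce a Cauchy slice of the intermediate wedge by gluing $\smax(ab)$, $\smax(bc)$ along null pieces of the relevant Cauchy horizons (the analogue of Eq.~\eqref{eq:unionslice}), and then apply, in sequence: Discrete Subadditivity (Conj.~\ref{conj:dsa}) to push the ``$<0$'' statements coming from property III of $e(bc)$ and $e(abc)$ through $\Cup$; Discrete Max-Focusing (Conj.~\ref{conj:qfc}) for the deformation along the lightsheet; and the chain rule Eq.~\eqref{eq:chainMaxMaxMax} to assemble the pieces into a single bound on $\hmg$. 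The upshot should be that $e(abc)$ is the \emph{minimal} accessible-from-$abc$ wedge containing $e(ab)\Cup e(bc)$, forcing $e(ab)\Cup e(bc)$ to in fact equal $e(abc)$ (or be contained in it with the right entropic bookkeeping), and then $\hmg[e(abc)|e(ab)]=\hmg[e(ab)\Cup e(bc)|e(ab)]\le\hmg[e(bc)|e(b)]$ by Conjecture~\ref{conj:ssa} (strong subadditivity of $\hmg$), where the spacelike-separated wedge playing the role of ``$c$'' in that conjecture is $e(bc)$ relative to the pair $e(b)\subset e(ab)$. Here the coincidence $\emin=\emax$ everywhere is what legitimizes treating all these quantities as $\hmg$ of honest wedges rather than a mismatched mix.

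For Eq.~\eqref{eq:minssa}, I would dualize: using $\hmingen(y|x)=-\hmg(x'|y')$ from Eq.~\eqref{eq:minMaxDef}, the min-SSA statement becomes a max-SSA statement for the complementary wedges $e(\cdot)'$, which are themselves $\emax$-type wedges of complementary regions by the complementarity relating $\emin$ and $\emax$ (Def.~\ref{emindef} builds $\emin$ from normal wedges whose complements are antinormal/accessible). So Eq.~\eqref{eq:minssa} should follow from Eq.~\eqref{eq:maxssa} applied to the complements, \emph{provided} the hypothesis $\emin=\emax$ on all four regions is symmetric enough to also give the needed coincidences for the complementary setup — which it is, since $\emin(s)=\emax(s)$ is exactly the statement that the entanglement wedge and its complement-defined partner agree.

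The main obstacle I expect is the gluing step producing a valid Cauchy slice of the interpolating wedge and verifying that the deformation relating $e(ab)\Cup e(bc)$ to $e(abc)$ genuinely proceeds along a single future (or past) lightsheet so that Discrete Max-Focusing applies — in the fully dynamical case these wedges need not share a common Cauchy slice, and one must use nesting to choose the slices $\smax(ab)\supset\smax(abc)$, $\smax(bc)\supset\smax(abc)$ consistently and then argue the relevant horizon pieces assemble into an achronal hypersurface without gaps or self-intersections. A secondary subtlety is that to get the strict ``$\ge$'' (rather than something weaker) one needs the full strong subadditivity Conjecture~\ref{conj:ssa}, not merely Discrete Subadditivity Conjecture~\ref{conj:dsa} — this is presumably why the remark preceding the theorem flags that Conjecture~\ref{conj:ssa} is required here, and one must be careful that the wedge inserted into Conjecture~\ref{conj:ssa} (namely $e(bc)$, spacelike-separated from the pair $e(b)\subset e(ab)$ in the appropriate sense) really satisfies that conjecture's hypotheses after all the identifications.
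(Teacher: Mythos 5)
Your proposal correctly identifies the ingredients (nesting, the mixed chain rules, QFC, Conjecture~\ref{conj:ssa}, and the need to glue Cauchy slices along horizon pieces), but the two load-bearing steps do not work as stated. First, your key move for Eq.~\eqref{eq:maxssa} applies Conjecture~\ref{conj:ssa} with $e(bc)$ playing the role of the spacelike-separated wedge relative to the pair $e(b)\subset e(ab)$; but $e(bc)$ is \emph{not} spacelike to $e(ab)$ --- both contain $e(b)\supset b$ --- so the hypothesis of the conjecture fails. This is precisely the trivialization the remark preceding the theorem warns against. Relatedly, your claim that $e(ab)\Cup e(bc)=e(abc)$ (or can be treated as such ``with the right bookkeeping'') is unjustified and false in general; nesting gives only $e(ab)\Cup e(bc)\subset e(abc)$, and likewise $e(ab)\cap e(bc)$ only contains $e(b)$. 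The paper's proof closes these gaps by introducing an auxiliary wedge $x$, defined by deforming $e(ab)$ along its Cauchy horizons until $\eth x$ is nowhere timelike to $\eth e(bc)$ (Eq.~\eqref{eq:xdef}), so that $x$, $e(bc)$, $x\cap e(bc)$, $x\Cup e(bc)$ have edges on a common Cauchy slice and SSA can be applied to the pair $x\cap e(bc)\subset e(bc)$ with $x$. The price of that null deformation and of the strict inclusions is then paid with three auxiliary inequalities you do not have: $\hmingen[x\cap e(bc)|e(b)]\geq 0$ (normality of the intersection, QFC, property iii of $e(b)$, min chain rule), $\hmg[e(abc)|x\Cup e(bc)]\leq 0$ (antinormality of the union as in Corollary~\ref{cor:wua}, QFC, property III of $e(abc)$, max chain rule), and $\hmingen[e(ab)|x]\geq 0$ (normality of $e(ab)$ plus QFC), which are then stitched together with the mixed chain rules \eqref{eq:chainMaxMaxMin} and \eqref{eq:chainMaxMaxMax}.

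Second, your derivation of Eq.~\eqref{eq:minssa} by dualizing to complements is not available here: the paper has no complementarity theorem of the form $\emin(s)'=\emax(\bar s)$ for a complementary input region (this is explicitly deferred to future work), and the hypothesis $\emin=\emax$ for $b$, $ab$, $bc$, $abc$ does not supply the analogous coincidences, inclusion structure, or mutually spacelike inputs for the complementary configuration that re-applying the max result would require. The paper instead proves the min inequality directly, by a chain of inequalities parallel to the max case using the same wedge $x$, the min chain rules \eqref{eq:chainMinMinMin} and \eqref{eq:chainMinMaxMin}, and strong subadditivity of the generalized min entropy (Remark~\ref{rem:ssamin}). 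Without the construction of $x$ and these intermediate inequalities, your argument does not go through.
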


\begin{proof}
Following~\cite{Bousso:2023sya}, we define a wedge $x$ by the Cauchy slice of its complement:
\begin{equation}\label{eq:xdef}
    \Sigma'(x) = \Sigma'[e(ab)] \cup
    \left( H^+[e(ab)] \cap \mathbf{J}^-[\eth e(bc)]\right) \cup
    \left( H^-[e(ab)] \cap \mathbf{J}^+[\eth e(bc)]\right)~.
\end{equation}
Note that $\eth x$ is nowhere to the past or future of $\eth e(bc)$. Therefore, there exists a single Cauchy slice that contains the edges of $x$, $e(bc)$, $x\cap e(bc)$, and $x\Cup e(bc)$. We note that $x\cap e(bc)=e(ab)\cap e(bc)$ and that $x\Cup e(bc) \subset e(ab)\Cup e(bc)$. By Theorem~\ref{thm:nesting},
\begin{equation}
    x\Cup e(bc) \subset e(abc)~~\mbox{and}~~x\cap e(bc) \supset e(b)~.
\end{equation}

By Corollary~\ref{lem:win}, $x\cap e(bc)$ is normal. By Discrete Max-Focusing, property iii of $e(b)$, and the chain rule Eq.~\eqref{eq:chainMinMinMin}, respectively,
\begin{align}
    &\hmingen[x\cap e(bc)|e(b) \Cup (\Sigma'(b)\cap x\cap e(bc))]\geq0 \\
    &\hmingen[e(b) \Cup (\Sigma'(b)\cap x\cap e(bc))|e(b)]\geq 0 \\
    &\hmingen[x\cap e(bc)|e(b)]\geq 0~,\label{eq:SSAproof min chain}
\end{align}
where we remark that we didn't use a strict inequality as in property iii to account for the possibility that $\Sigma'(b)\cap x\cap e(bc)=e(b)$.

Although $x$ need not be antinormal, Eq.~\eqref{eq:xdef} ensures that $[\eth x\setminus \eth e(ab)]\cap\eth(x\Cup e(bc))=\varnothing$. Arguments analogous to the proof of Corollary~\ref{cor:wua} then imply that $x\Cup e(bc)$ is antinormal, except where its edge coincides with $\eth[e(abc)]$ and hence with $\Sigma[e(abc)]$. By Discrete Max-Focusing, property III of $e(abc)$, and the chain rule Eq.~\eqref{eq:chainMaxMaxMax},
\begin{align}
    &\hmg[\Sigma(abc)\setminus (x' \cap e(bc)')|x\Cup e(bc)]\leq0 \\
    &\hmg[e(abc)|\Sigma(abc)\setminus (x' \cap e(bc)')] \leq 0 \\
    &\hmg[e(abc)|x\Cup e(bc)]\leq 0~. \label{eq:SSAproof max chain}
\end{align}
Normality of $e(ab)$ and the QFC imply $\hmg[x'|e(ab)']\leq 0$, i.e.,
\begin{equation}\label{eq:SSAproof qfc}
    \hmingen[e(ab)|x]\geq 0~.
\end{equation}

We now prove Eq.~\eqref{eq:maxssa} as follows:
\begin{align}
    \hmg[e(bc)|e(b)] &\geq \hmg[e(bc)|x\cap e(bc)] + \hmingen[x\cap e(bc)|e(b)] \\
    & \geq \hmg[e(bc)|x\cap e(bc)] \\
    &\geq \hmg[x\Cup e(bc)|x] \\
    &\geq \hmg[e(abc)|x\Cup e(bc)] + \hmg[x\Cup e(bc)|x] \\
    &\geq \hmg[e(abc)|x] \\
    &\geq \hmg[e(abc)|e(ab)] + \hmingen[e(ab)|x] \\
    &\geq \hmg[e(abc)|e(ab)]~,
\end{align}
where we applied the chain rule Eq.~\eqref{eq:chainMaxMaxMin} in the first inequality; Eq.~\eqref{eq:SSAproof min chain} in the second; strong subadditivity of the generalized max entropy, Conjecture~\ref{conj:ssa}, in the third; Eq.~\eqref{eq:SSAproof max chain} in the fourth; the chain rule Eq.~\eqref{eq:chainMaxMaxMax} in the fifth; the chain rule Eq.~\eqref{eq:chainMaxMaxMin} in the sixth, and Eq.~\eqref{eq:SSAproof qfc} in the final inequality.

The proof of Eq.~\eqref{eq:minssa} proceeds similarly:
\begin{align}
    \hmingen[e(abc)|e(ab)] &\leq \hmingen[e(abc)|e(ab)] + \hmingen[e(ab)|x]\\
    &\leq \hmingen[e(abc)|x]\\
    &\leq \hmg[e(abc)|x\Cup e(bc)] + \hmingen[x\Cup e(bc)|x] \\
    &\leq \hmingen[x\Cup e(bc)|x] \\
    &\leq \hmingen[e(bc)|x\cap e(bc)] \\
    &\leq \hmingen[e(bc)|x\cap e(bc)] + \hmingen[x\cap e(bc)|e(b)] \\
    &\leq \hmingen[e(bc)|e(b)]~,
\end{align}
where we applied Eq.~\eqref{eq:SSAproof qfc} in the first inequality; the chain rule Eq.~\eqref{eq:chainMinMinMin} in the second; the chain rule Eq.~\eqref{eq:chainMinMaxMin} in the third; Eq.~\eqref{eq:SSAproof max chain} in the fourth; strong subadditivity of the generalized min entropy, Remark~\ref{rem:ssamin}, in the fifth; Eq.~\eqref{eq:SSAproof min chain} in the sixth; and the chain rule Eq.~\eqref{eq:chainMinMinMin} in the final inequality.
\end{proof}

\begin{cor}
    One can eliminate the assumption that $\emin(b)=\emax(b)$, so long as $e(b)$ is replaced by $\emin(b)$ in the conclusions of the theorem. This is a more general result; however, its interpretation is less clear to us.
\end{cor}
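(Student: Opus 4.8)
The plan is to observe that the hypothesis $\emin(b)=\emax(b)$ is used in the proof of Theorem~\ref{thm:ssa} only to license writing the single symbol $e(b)$; every actual appeal to $e(b)$ in that proof invokes only properties of $\emin(b)$. I would therefore re-run the proof of Theorem~\ref{thm:ssa} verbatim, with $e(b)$ replaced throughout by $\emin(b)$ and with $\Sigma'(b)$ reinterpreted as the Cauchy slice of $\emin(b)'$ on which property iii of Definition~\ref{emindef} holds. The wedge $x$ defined by Eq.~\eqref{eq:xdef}, the wedges $e(ab)=\emin(ab)=\emax(ab)$, $e(bc)=\emin(bc)=\emax(bc)$, $e(abc)=\emin(abc)=\emax(abc)$, and all their unions and intersections are untouched, so the identifications $x\cap e(bc)=e(ab)\cap e(bc)$ and $x\Cup e(bc)\subset e(ab)\Cup e(bc)$, the nesting consequences $x\Cup e(bc)\subset e(abc)$ and (the replaced) $x\cap e(bc)\supset\emin(b)$, the antinormality of $x\Cup e(bc)$ off $\eth e(abc)$, and Eqs.~\eqref{eq:SSAproof max chain}--\eqref{eq:SSAproof qfc} all carry over with no change.

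The only two places $e(b)$ enters are (i) the inclusion $x\cap e(bc)\supset e(b)$ and (ii) the derivation of $\hmingen[x\cap e(bc)|e(b)]\geq 0$, Eq.~\eqref{eq:SSAproof min chain}. For (i), nesting of $\emin$ (Theorem~\ref{thm:nesting}) gives $\emin(b)\subset\emin(ab)=e(ab)$ and $\emin(b)\subset\emin(bc)=e(bc)$, hence $\emin(b)\subset e(ab)\cap e(bc)=x\cap e(bc)$, with no use of $\emax(b)$. For (ii), I would repeat the three-line argument with $\emin(b)$: first, $x\cap e(bc)=\emin(ab)\cap\emin(bc)$ is normal by Corollary~\ref{lem:win}, so its complement is antinormal and admits lightsheets, and Discrete Max-Focusing (Conjecture~\ref{conj:qfc}), applied in the dual form via Eq.~\eqref{eq:minMaxDef}, yields $\hmingen[x\cap e(bc)\,|\,\emin(b)\Cup(\Sigma'(b)\cap x\cap e(bc))]\geq 0$; second, property iii of $\emin(b)$ on $\Sigma'(b)$ gives $\hmingen[\emin(b)\Cup(\Sigma'(b)\cap x\cap e(bc))\,|\,\emin(b)]\geq 0$; third, the chain rule Eq.~\eqref{eq:chainMinMinMin} combines these into $\hmingen[x\cap e(bc)|\emin(b)]\geq 0$. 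With Eq.~\eqref{eq:SSAproof min chain} re-established in this form, the two concluding chain-rule cascades of the proof of Theorem~\ref{thm:ssa} produce exactly Eqs.~\eqref{eq:maxssa} and \eqref{eq:minssa} with $e(b)$ replaced by $\emin(b)$.

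The main thing requiring care is the bookkeeping in step (ii): one must check that $\Sigma'(b)\cap x\cap e(bc)$ generates a wedge nested as $\emin(b)\subset\emin(b)\Cup(\Sigma'(b)\cap x\cap e(bc))\subset x\cap e(bc)$, and that $\Sigma'(b)$ can be chosen compatibly with the Cauchy slices underlying $x$; both follow from the freedom to choose compatible Cauchy slices in Theorem~\ref{thm:nesting}. One should also confirm that the original proof nowhere uses that $\emin(b)$ is \emph{antinormal} (it does not --- only normality, property ii, and property iii are invoked, and $\emin(b)$ is normal by Theorem~\ref{lem:eminpropshare}). Regarding the interpretive remark in the corollary: when $\emin(b)\neq\emax(b)$, the right-hand sides of Eqs.~\eqref{eq:maxssa}--\eqref{eq:minssa} compare a \emph{min} entanglement wedge at $b$ against \emph{max}-(=min-)entanglement wedges at $ab$, $bc$, $abc$, so they no longer read cleanly as conditional entropies of a single, consistently defined entanglement-wedge state, which is why this formally stronger statement is less transparent than Theorem~\ref{thm:ssa}.
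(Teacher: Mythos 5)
Your proposal is correct and matches the argument the paper intends (the corollary is stated without a separate proof precisely because, as you verify, the proof of Theorem~\ref{thm:ssa} touches $e(b)$ only through nesting of $\emin$, normality, and property iii of Def.~\ref{emindef}, all of which hold for $\emin(b)$ alone). Your identification of the two entry points of $e(b)$ --- the inclusion $x\cap e(bc)\supset e(b)$ and Eq.~\eqref{eq:SSAproof min chain} --- and your rederivation of both with $\emin(b)$ is exactly the substitution the authors have in mind.
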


\begin{cor}\label{cor:dsa}
    Theorem~\ref{thm:ssa} immediately implies both versions of discrete subadditivity of the generalized max and min entropies of entanglement wedges:
    \begin{enumerate}[i.]
        \item \begin{align}
        \hmg[e(bc)|e(b)]\leq0 &\quad\Rightarrow\quad \hmg[e(abc)|e(ab)] \leq0~;\\
        \hmingen[e(abc)|e(ab)]\geq0 & \quad\Rightarrow\quad \hmingen[e(bc)|e(b)] \geq0~.
        \end{align}
        \item \begin{align}
        \hmg[e(bc)|e(b)]<0 &\quad\Rightarrow\quad \hmg[e(abc)|e(ab)] <0~;\\
        \hmingen[e(abc)|e(ab)]>0 & \quad\Rightarrow\quad \hmingen[e(bc)|e(b)] >0~.
    \end{align}
    \end{enumerate}
    We note that these results can also be obtained directly by replacing Conjecture~\ref{conj:ssa} with the weaker Conjecture~\ref{conj:dsa} in the proof of the theorem.
\end{cor}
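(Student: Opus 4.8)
The plan is to give the two arguments advertised in the statement. The first is immediate: under the hypotheses of Theorem~\ref{thm:ssa}, that theorem already supplies the full inequalities $\hmg[e(bc)|e(b)]\geq\hmg[e(abc)|e(ab)]$ and $\hmingen[e(bc)|e(b)]\geq\hmingen[e(abc)|e(ab)]$. So if $\hmg[e(bc)|e(b)]\leq 0$ then $\hmg[e(abc)|e(ab)]\leq\hmg[e(bc)|e(b)]\leq 0$, and the strict case is identical; dually, if $\hmingen[e(abc)|e(ab)]\geq 0$ then $\hmingen[e(bc)|e(b)]\geq\hmingen[e(abc)|e(ab)]\geq 0$, again also in the strict form. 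This disposes of parts~i and~ii at one stroke.

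The second, more substantive, argument re-runs the proof of Theorem~\ref{thm:ssa} using only Conjecture~\ref{conj:dsa} in place of Conjecture~\ref{conj:ssa}. I would start from the observation that Conjecture~\ref{conj:ssa} is invoked in exactly one link of each of the two chains in that proof --- the inequality $\hmg[e(bc)|x\cap e(bc)]\geq\hmg[x\Cup e(bc)|x]$ in the max chain, and the step $\hmingen[x\Cup e(bc)|x]\leq\hmingen[e(bc)|x\cap e(bc)]$ (Remark~\ref{rem:ssamin}) in the min chain, the latter being, via Eq.~\eqref{eq:minMaxDef}, just the former applied to the complement wedges $x'$ and $e(bc)'$. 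To replace these links I would reorder the argument so that the hypothesis is used first. For the max implication, assume $\hmg[e(bc)|e(b)]\leq 0$; then the chain rule Eq.~\eqref{eq:chainMaxMaxMin} together with $\hmingen[x\cap e(bc)|e(b)]\geq 0$ from Eq.~\eqref{eq:SSAproof min chain} already gives $\hmg[e(bc)|x\cap e(bc)]\leq\hmg[e(bc)|e(b)]\leq 0$. Now Conjecture~\ref{conj:dsa}(i), applied with $a=x\cap e(bc)\subset b=e(bc)$ and $c=x$ --- which is spacelike to $e(bc)$ because, by the construction Eq.~\eqref{eq:xdef}, $\eth x$ is nowhere causally related to $\eth e(bc)$ --- yields $\hmg[x\Cup e(bc)|x]\leq 0$. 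The remaining links of the original chain (Discrete Max-Focusing giving $\hmg[e(abc)|x\Cup e(bc)]\leq 0$, then Eq.~\eqref{eq:chainMaxMaxMax}, then Eq.~\eqref{eq:chainMaxMaxMin}, then Eq.~\eqref{eq:SSAproof qfc}) merely transport the nonpositive sign and deliver $\hmg[e(abc)|e(ab)]\leq 0$. Using part~(ii) of Conjecture~\ref{conj:dsa} and carrying strict inequalities throughout gives the strict version.

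I would handle the min implication symmetrically: assuming $\hmingen[e(abc)|e(ab)]\geq 0$, the first four links of the min chain of Theorem~\ref{thm:ssa} (Eq.~\eqref{eq:SSAproof qfc}, the chain rule Eq.~\eqref{eq:chainMinMinMin}, the chain rule Eq.~\eqref{eq:chainMinMaxMin}, and $\hmg[e(abc)|x\Cup e(bc)]\leq 0$ from Eq.~\eqref{eq:SSAproof max chain}) already force $\hmingen[x\Cup e(bc)|x]\geq 0$, equivalently $\hmg[x'|x'\cap e(bc)']\leq 0$; then Conjecture~\ref{conj:dsa}(i), applied to complement wedges with $a=x'\cap e(bc)'\subset b=x'$ and $c=e(bc)'$, gives $\hmg[x'\Cup e(bc)'|e(bc)']\leq 0$, i.e.\ $\hmingen[e(bc)|x\cap e(bc)]\geq 0$; and the last two links (Eq.~\eqref{eq:SSAproof min chain} and the chain rule Eq.~\eqref{eq:chainMinMinMin}) give $\hmingen[e(bc)|e(b)]\geq 0$, with part~(ii) again supplying the strict case. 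The only step requiring genuine care --- and essentially the entire content of the second argument --- is this reordering: in the original proof the Conjecture~\ref{conj:ssa} link sits as a bare real-number inequality embedded in a long chain, whereas here I must first extract from the hypothesis, through the earlier links, that the relevant conditional entropy is already nonpositive (or strictly negative), then invoke the sign-only Conjecture~\ref{conj:dsa}, and only afterwards close the chain. Verifying that the ``spacelike to $b$'' hypothesis of Conjecture~\ref{conj:dsa} holds for $c=x$ (respectively $c=e(bc)'$) --- which is exactly where the construction Eq.~\eqref{eq:xdef} is needed --- completes the check; everything else is the already-established machinery of Theorem~\ref{thm:ssa}.
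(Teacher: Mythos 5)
Your first argument is exactly the paper's: the corollary is presented as an immediate consequence of Theorem~\ref{thm:ssa}, and your one-line derivation of parts i and ii (including the strict cases) from the theorem's two inequalities is correct and complete; that alone proves the statement.

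Your second argument --- substantiating the remark that Conjecture~\ref{conj:dsa} suffices --- has the right architecture (use the hypothesis and the earlier links to fix the sign of $\hmg[e(bc)|x\cap e(bc)]$, respectively $\hmingen[x\Cup e(bc)|x]$, \emph{before} the subadditivity step, then invoke the sign-only conjecture, then close the chain), but the specific invocations of Conjecture~\ref{conj:dsa} are not legitimate as written. That conjecture requires the adjoined wedge $c$ to be \emph{spacelike to} $b$, i.e.\ $c\subset b'$. Your choice $c=x$ is not spacelike to $e(bc)$: the two wedges overlap, since $x\cap e(bc)=e(ab)\cap e(bc)\supset e(b)\supset b$. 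The fact you cite from the construction Eq.~\eqref{eq:xdef} --- that $\eth x$ is nowhere to the past or future of $\eth e(bc)$ --- is a statement about the edges lying on a common Cauchy slice, not about spacelike separation of the wedges themselves. The same issue occurs in your min step, where $c=e(bc)'$ is not spacelike to $x'$ (they share $(x\Cup e(bc))'$). The repair is to take the auxiliary wedge to be $c=x\cap e(bc)'$, which \emph{is} spacelike to $e(bc)$ (and, being contained in $x$, is spacelike to $x'$), and to use the common-Cauchy-slice property to check the identities $(x\cap e(bc))\Cup(x\cap e(bc)')=x$ and $e(bc)\Cup(x\cap e(bc)')=x\Cup e(bc)$, together with their complements in the min case; this is the same choice implicitly needed when the paper's proof of Theorem~\ref{thm:ssa} invokes Conjecture~\ref{conj:ssa}. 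With that substitution your reordered chains go through, and the strict implications follow from part ii of Conjecture~\ref{conj:dsa}.
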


\begin{thm} [Strong Subadditivity of the von Neumann Entropies of Entanglement Wedges]
    Under the same assumptions as Theorem~\ref{thm:ssa},
    \begin{equation}\label{eq:ssavN}
        \S[e(bc)|e(b)] \geq \S[e(abc)|e(ab)]~.
    \end{equation}
\end{thm}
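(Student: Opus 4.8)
The plan is to lift the statement from the single-copy max and min strong subadditivity already established in Theorem~\ref{thm:ssa} up to the von Neumann level by replicating the spacetime and invoking the generalized asymptotic equipartition principle, Conjecture~\ref{conj:aep}. It is worth noting at the outset that one cannot deduce Eq.~\eqref{eq:ssavN} directly from Theorem~\ref{thm:ssa} together with Conjecture~\ref{conj:minLeqMax}: sandwiching $\S$ between $\hmingen$ and $\hmg$ on a single copy produces inequalities pointing the wrong way to be chained with either the max or the min strong subadditivity. The replica argument is precisely what circumvents this.

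Concretely, I would first pass to the $k$-fold replicated spacetime $M^{\otimes k}$ carrying the $k$-fold replicated matter state, as in Conjecture~\ref{conj:aep}. The key preliminary is a replication lemma for entanglement wedges: if $\emin(X)=\emax(X)=e(X)$ for a wedge $X$ in $M$, then $\emin(X^{\otimes k})=\emax(X^{\otimes k})=e(X)^{\otimes k}$ in $M^{\otimes k}$. To prove this I would note that complements and wedge unions commute with replication in the product spacetime, and then verify that $e(X)^{\otimes k}$, equipped with the product Cauchy slice $\Sigma(X)^{\otimes k}$, satisfies properties I--III of Def.~\ref{def:minimizing} and properties i--iii of Def.~\ref{emindef} in $M^{\otimes k}$; each such verification reduces, via additivity of $\hmg$ over tensor products, to the corresponding property of $X$ in $M$. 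It follows that $e(X)^{\otimes k}\subset\emax(X^{\otimes k})\subset\emin(X^{\otimes k})\subset e(X)^{\otimes k}$, so all three coincide. In particular the four hypotheses of Theorem~\ref{thm:ssa} persist for the replicas $a^{\otimes k}$, $b^{\otimes k}$, $c^{\otimes k}$, which remain mutually spacelike.

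Next I would apply Theorem~\ref{thm:ssa} in $M^{\otimes k}$ to $a^{\otimes k}$, $b^{\otimes k}$, $c^{\otimes k}$. Using the replication lemma to identify $e((bc)^{\otimes k})=e(bc)^{\otimes k}$, and likewise for the other composites, this gives
\begin{equation}
    \hmg[e(bc)^{\otimes k}|e(b)^{\otimes k}]_{M^{\otimes k}} \geq \hmg[e(abc)^{\otimes k}|e(ab)^{\otimes k}]_{M^{\otimes k}}~,
\end{equation}
together with the analogous inequality for $\hmingen$. Dividing by $k$, letting $k\to\infty$, and applying Conjecture~\ref{conj:aep} term by term --- so that both $\frac1k\hmg[Y^{\otimes k}|Z^{\otimes k}]$ and $\frac1k\hmingen[Y^{\otimes k}|Z^{\otimes k}]$ converge to $\S[Y|Z]$ --- yields $\S[e(bc)|e(b)]\geq\S[e(abc)|e(ab)]$, which is Eq.~\eqref{eq:ssavN}. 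The max and min versions of the replicated inequality converge to the same limit, so either one suffices.

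The hard part will be the replication lemma, and in particular the precise sense in which the generalized smooth conditional max entropy is additive under replication, along with the claim that the auxiliary Cauchy-slice and compactness conditions in Defs.~\ref{def:minimizing} and~\ref{emindef} are inherited by the product construction. At leading order in $G$ this additivity is immediate (areas add, and the leading smooth-entropy term adds up to the $\epsilon$-corrections we have suppressed throughout), but in the full theory this additivity belongs to the same conjectural package as Conjecture~\ref{conj:aep}; a fully careful treatment would want to state the exact additivity property used and confirm its compatibility with the smoothing parameter. Granting that, the remainder of the argument is routine.
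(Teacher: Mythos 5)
Your proposal matches the paper's proof essentially step for step: replicate the spacetime and state $k$ times, apply Theorem~\ref{thm:ssa} in $M^{\otimes k}$, identify $e(X^{\otimes k})=e(X)^{\otimes k}$ using $\emin=\emax$, and then divide by $k$ and invoke Conjecture~\ref{conj:aep} to land on the von Neumann inequality. The only difference is that you spell out (and flag the conjectural additivity underlying) the replication lemma $e(X^{\otimes k})=e(X)^{\otimes k}$, which the paper simply asserts in one line --- a useful bit of extra care, but not a different route.
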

\begin{proof}
    Let $\psi$ be the quantum state on $M$ and let $M^{\otimes k}$ denote $k$ copies of the same quantum field theory, in the state $\psi^{\otimes k}$. By Theorem~\ref{thm:ssa},
    \begin{equation}
        \hmg\left[e[(bc)^{\otimes k}]|e[b^{\otimes k}] \right]_{M^{\otimes k}} \geq \hmg\left[e[(abc)^{\otimes k}]|e[(ab)^{\otimes k}] \right]_{M^{\otimes k}}~.
    \end{equation}
    Since $e(a)=\emax(a)=\emin(a)$, we have $e(a^{\otimes k})=e(a)^{\otimes k}$. Then
    \begin{equation}
        \hmg\left[e(bc)^{\otimes k}|e(b)^{\otimes k} \right]_{M^{\otimes k}} \geq \hmg\left[e(abc)^{\otimes k}|e(ab)^{\otimes k} \right]_{M^{\otimes k}}~,
    \end{equation}
    and Eq.~\eqref{eq:ssavN} follows from Conjecture~\ref{conj:aep}.
\end{proof}

\subsection*{Acknowledgements}
We would like to thank C.~Akers, L.~Bariuan, N.~Engelhardt, S.~Kaya, A.~Levine, G.~Penington, A.~Shahbazi-Moghaddam, and J.~Yeh for discussions. This work was supported by the Department of Energy, Office of Science, Office of High Energy Physics under QuantISED Award DE-SC0019380.

\bibliographystyle{JHEP}
\bibliography{covariant}

\end{document}